\documentclass[pdftex,letterpaper,12pt]{article}{\twocolumn}
\usepackage[top=1in,bottom=1in,left=1in,right=1in]{geometry}	
\usepackage[dvips]{graphicx}
\usepackage{amsmath, amssymb, amsthm, bbm, mathtools, commath,amsfonts}
\usepackage{color}
\usepackage{datetime}
\usepackage{fancyhdr}
\usepackage{fancyvrb}
\usepackage{hyperref}
\usepackage{pslatex}
\usepackage{verbatim}
\usepackage{lmodern}
\usepackage{tikz-cd}
\usepackage{cleveref}
\usepackage{dsfont}
\usepackage{enumitem}
\usepackage{helvet}

\pagestyle{fancy}

\author{Alex Kokot, Charles Johnson}

\setlength{\parindent}{0mm}
\setlength{\parskip}{2mm}
\setlist[enumerate,1]{label={(\arabic*)}}

\rhead{\textbf{ Alex Kokot}}

\cfoot{\textbf{Page \thepage}}
\rfoot{\textbf{2020}}

\newcommand{\CC}{\mathbb C}

\newcommand{\RR}{\mathbb R}

\DeclarePairedDelimiterX{\inp}[2]{\langle}{\rangle}{#1, #2} 

\newtheorem{theorem}{Theorem}[section]
\newtheorem{lemma}[theorem]{Lemma}
\newtheorem{corollary}[theorem]{Corollary}
\theoremstyle{definition}

\theoremstyle{definition}

\theoremstyle{definition}

\begin{document}

\title{Characteristics of Eigenvalues Realized by Path-Connected Sets of Matrices}

\author{Alex Kokot, Charles Johnson}

\date{ }
\maketitle

\section{Introduction}

It is a well known fact \cite{Kato} that a continuous path in the space of $n\times n$ complex matrices $A(x):\mathbb{R}\to M_n$ induces continuous eigenvalue paths in the parameter $x$. In particular, we can choose $n$ of these paths so that each of them corresponds to an eigenvalue of $A(x)$. 

\begin{figure}[hbt!]
    \centering
    \includegraphics[scale=.2]{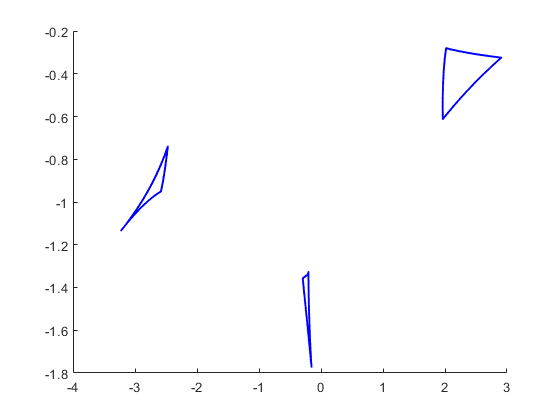}
    \includegraphics[scale=.2]{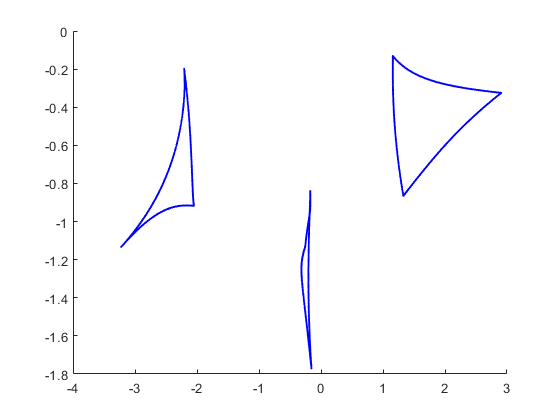}
    \includegraphics[scale=.2]{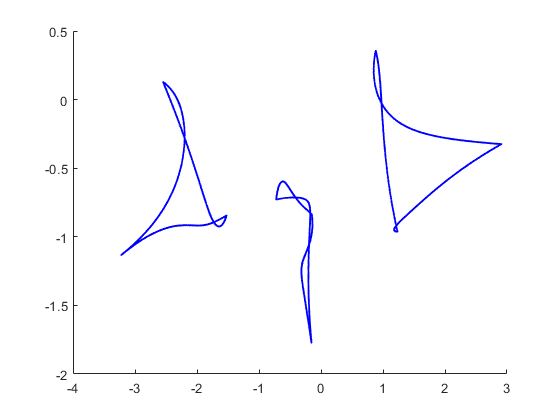}
    \includegraphics[scale=.2]{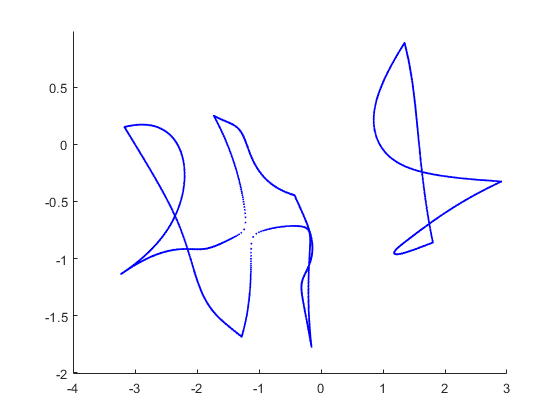}
    \caption{Th eigenvalues realized by three different matrix paths.}
\end{figure}

These paths will be our primary tool to study the eigen-surface associated with a set of matrices $S$, denoted $ES(S)\subseteq \mathbb{C}^{k+1}$. If 
$$
S \subset \operatorname{span}\{A_1,A_2,\dots,A_k\}
$$
where the matrices $A_i$ are linearly independent, and $S$ is not contained in the span of any sub-collection of these matrices, then we can define\footnote{Here we use $\sigma(A)$ to refer to the set of eigenvalues of $A\in M_n$, that is, the scalars $\lambda$ such that $\det(A-\lambda I) = 0$.} the Eigen-Surface of $S$ to be the set
\begin{gather*}
   ES(S):=\{(\alpha_1,\alpha_2,\dots,\alpha_k, \lambda)\ |\ A\in S,\\ \lambda\in \sigma(A),\ A= \alpha_1 A + \alpha_2 A_2 + \dots + \alpha_k A_k \}.
\end{gather*}

We define $\sigma^*(A):= (\alpha_1,\dots,\alpha_k)\times \sigma(A)$ to distinguish elements in this higher dimensional space, and we will refer to its elements as $\lambda(A)\in \sigma^*(A)$. For eigenpaths we write $\lambda(A(x))$ indicating that for all $x$, $\lambda(A(x)) \in \sigma^*(A(x))$.

\begin{figure}[hbt!]
    \centering
    \includegraphics[scale=.2]{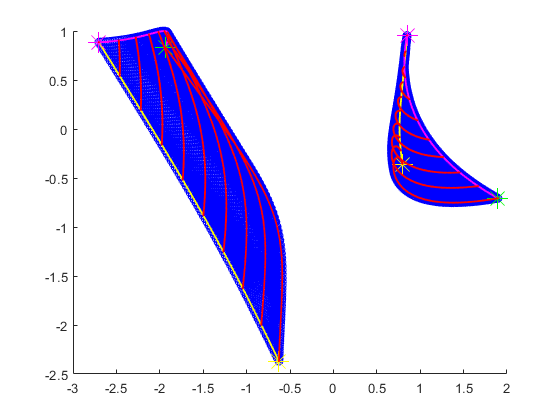}
    \includegraphics[scale=.2]{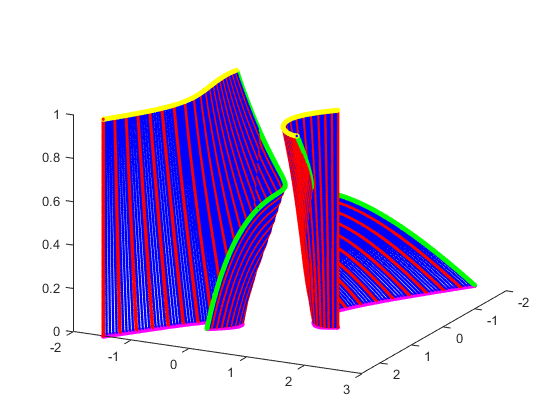}
    \caption{The eigen-surface of the convex-hull of three matrices projected into $\mathbb{C}$ (left) and represented in 3 dimensions (right).}
\end{figure}
We suppress our choice of spanning matrices in this notation as any choice will generate an eigensurface that is equivalent up to diffeomorphism\footnote{We prove this in the following section.}. It will be useful to think of this set as the collection of all eigenpaths realized by $S$. This object has particular significance in physics as it can be thought of as the potential energy surface realized by a particular atomic configuration space \cite{Physics1}. However, this application is typically restricted to the convex hull of Hermitian matrices, whereas here we will loosen this restriction and consider arbitrary matrices. Working in this more general setting will allow us to gain additional insight on this problem, as well as apply this research to other important questions such as applications in the study of eigenvalues realized by $DS_n$, the doubly-stochastic matrices. These matrices play an important role in probability and modeling, and they relate to this work as they can be represented as the convex hull of the permutation matrices \cite{MatAnal}.

\section{The Eigen-Surface}

Let $S\subseteq M_n$ be path-connected. We call the path-components of $ES(S)$ $k$-components, $\mathcal{K}$. We begin by making some basic observations on the structure of these sets.
\begin{lemma}
Let $\mathcal{K}$ be a path-component of\\ $ES(S)$. If $A\in S$ is such that there exists eigenvalues $\lambda_1(A),\dots, \lambda_k(A)$ where $\lambda_i(A)\in \mathcal{K}$ for all $i\in [k]$, then there are $k$ such eigenvalues (up to algebraic multiplicity) in $\mathcal{K}$ for any matrix $B\in S.$
\end{lemma}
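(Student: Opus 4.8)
The plan is to exploit the path-connectedness of $S$ together with the continuity of the spectrum already recalled in the introduction. Fix matrices $A, B \in S$; since $S$ is path-connected there is a continuous path $\gamma \colon [0,1] \to S$ with $\gamma(0) = A$ and $\gamma(1) = B$. Write $\gamma(x) = \sum_{i=1}^{k} \alpha_i(x) A_i$. Because the $A_i$ are linearly independent, each $\alpha_i(x)$ is the image of $\gamma(x)$ under a fixed continuous linear functional, so the maps $x \mapsto \alpha_i(x)$ are continuous. By the cited fact, one may choose continuous scalar functions $\lambda_1(x), \dots, \lambda_n(x)$ on $[0,1]$ such that for each $x$ the unordered list $\lambda_1(x), \dots, \lambda_n(x)$ is exactly the spectrum of $\gamma(x)$, repeated according to algebraic multiplicity. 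Pairing these with the $\alpha$-coordinates yields $n$ continuous eigenpaths $x \mapsto (\alpha_1(x), \dots, \alpha_k(x), \lambda_j(x)) \in \sigma^*(\gamma(x))$, each of which lies in $ES(S)$ by definition.

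Next I would invoke the elementary fact that the continuous image of an interval is path-connected, hence contained in a single path-component of $ES(S)$. Consequently, for each $j$ the quantity $\ind\{(\alpha_1(x), \dots, \alpha_k(x), \lambda_j(x)) \in \mathcal{K}\}$ is constant in $x$: a given eigenpath lies in $\mathcal{K}$ either for all $x \in [0,1]$ or for no $x$. Summing over $j$, the integer
$$ N_{\mathcal{K}}(C) := \#\{\, j \in [n] : \lambda_j(C) \in \mathcal{K} \,\}, \qquad C \in S, $$
counted with algebraic multiplicity, satisfies $N_{\mathcal{K}}(A) = N_{\mathcal{K}}(B)$. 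Since $A$ and $B$ were arbitrary, $N_{\mathcal{K}}$ is constant on $S$; in particular, if $N_{\mathcal{K}}(A) = k$ then $N_{\mathcal{K}}(B) = k$ for every $B \in S$, which is the assertion.

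The heavy analytic input — the existence of $n$ globally continuous eigenvalue branches — is already granted, so the remaining care is bookkeeping: one must check that the branches can be arranged to respect algebraic multiplicity, and that the count $N_{\mathcal{K}}$ is insensitive to the branch-crossing behavior depicted in the figures above. The point is that near a parameter value where branches collide or split, the individual labels may be permuted and the branches need not be smooth, but the \emph{multiset} of the $n$ values tracks the spectrum with multiplicity at every $x$, which is all the invariance argument uses; a semisimple or defective eigenvalue lying in $\mathcal{K}$ simply contributes its full algebraic multiplicity, distributed among coincident branches in a way the sum does not see. I would also note that the hypothesis that $S$ is not contained in a proper sub-span of the $A_i$ is not needed here — it only fixes the ambient dimension of $ES(S)$ — and that the argument in fact shows the stronger statement that $N_{\mathcal{K}}$ is a well-defined invariant of each $k$-component.
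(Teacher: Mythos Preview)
Your argument is correct and is essentially the same as the paper's: both take a path in $S$ from $A$ to $B$, invoke the $n$ continuous eigenpaths, and use that each such eigenpath is confined to a single path-component of $ES(S)$. The paper phrases the conclusion as two inequalities (following the path and then its reverse), whereas you obtain equality in one stroke by summing the constant indicators, but this is only a cosmetic difference.
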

\begin{proof}
For arbitrary $A,B\in S$, take a path in $S$ with initial matrix $A$ and terminal matrix $B$. Because there is an eigenpath that starts at each $\lambda_i(A)\in \mathcal{K}$ and ends at an eigenvalue of $B$, we see that $B$ then has at least as many eigenvalues as $A$ in $\mathcal{K}$. However, we can also reverse our path so that $A$ has at least as many eigenvalues as $B$ in $\mathcal{K}$, verifying our claim.
\end{proof}
\begin{corollary}
$ES(S)$ can be partitioned into distinct path-components $\mathcal{K}_1,\dots, \mathcal{K}_\ell$ where\\
$
\sum_{k=1}^\ell k_i = n.
$
\end{corollary}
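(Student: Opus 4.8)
The plan is to get the partition itself for free from point-set topology and then to use the preceding Lemma to control the sizes of the pieces. First I would recall that the path-components of any space --- in particular of $ES(S)$, viewed as a subspace of $\mathbb{C}^{k+1}$ --- are pairwise disjoint and cover it, so the assertion that ``$ES(S)$ is partitioned into its path-components'' requires no argument; the real content of the statement is that the number of components is finite and that the natural integers attached to them sum to $n$.

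For finiteness, I would argue that every path-component is visited by every matrix in $S$. Each point of $ES(S)$ has the form $(\alpha_1,\dots,\alpha_k,\lambda)$ with $\lambda\in\sigma(A)$ for the matrix $A\in S$ having coordinates $(\alpha_1,\dots,\alpha_k)$, so every path-component $\mathcal{K}$ contains at least one eigenvalue of at least one matrix of $S$; by the Lemma it then contains at least one eigenvalue of \emph{every} $B\in S$. Fixing a single $A_0\in S$, each component therefore contains at least one of the at most $n$ distinct eigenvalues of $A_0$, so there can be at most $n$ components --- call them $\mathcal{K}_1,\dots,\mathcal{K}_\ell$ with $\ell\le n$.

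Next I would set $k_i$ to be the number of eigenvalues of $A_0$, counted with algebraic multiplicity, that lie in $\mathcal{K}_i$; the Lemma guarantees that this value does not depend on the choice of $A_0$, so $\mathcal{K}_i$ is a genuine ``$k_i$-component'' in the sense introduced just above. To finish, since the $\mathcal{K}_i$ partition $ES(S)$ and $\sigma^*(A_0)\subseteq ES(S)$, every eigenvalue of $A_0$ falls into exactly one $\mathcal{K}_i$, and summing algebraic multiplicities over all eigenvalues of $A_0$ spreads the total $n$ across the $k_i$, giving $\sum_{i=1}^{\ell}k_i=n$.

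The only step that needs a moment's care is the multiplicity bookkeeping in the last line: a repeated eigenvalue $\lambda$ of $A_0$ is a single point $(\alpha,\lambda)$ of $\mathbb{C}^{k+1}$, and one wants all the eigenpaths through it to be counted in the same component. That is automatic, since $(\alpha,\lambda)\in ES(S)$ lies in a unique path-component and every eigenpath passing through that point belongs to the same component; hence counting ``with algebraic multiplicity'' is compatible with the path-component decomposition. Beyond this, nothing is delicate --- the statement is essentially the Lemma together with the observation that the $n$ eigenvalues of any fixed matrix have to be distributed somewhere.
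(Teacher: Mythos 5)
Your proof is correct and matches the intended (unstated) argument: the paper gives no explicit proof of this corollary, and your reasoning---invoke Lemma 2.1 to make the per-component eigenvalue count $k_i$ well-defined independently of the reference matrix, then distribute the $n$ eigenvalues (with algebraic multiplicity) of a fixed $A_0$ across the at most $n$ path-components---is exactly what the authors intend to follow from that lemma. Your remark on multiplicity bookkeeping (a repeated eigenvalue of $A_0$ is a single point of $ES(S)$ lying in one path-component, so its full algebraic multiplicity is assigned to that component) is the one genuinely delicate detail and you handle it correctly.
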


Here, we briefly discuss our definition of the eigen-surface as being basis independent. We call a representation of $ES(S)$ the set constructed for a particular choice of $A_1,\dots,A_k$.

\begin{lemma}
        Any two representations of $ES(S)$ are equivalent up to a smooth diffeomorphism.
\end{lemma}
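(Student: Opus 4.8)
The plan is to realize the equivalence as the restriction of a single linear automorphism of the ambient coordinate space, so that "smooth diffeomorphism" holds in the strongest possible sense. The first move is to translate the data of a representation into a coordinate isomorphism: choosing linearly independent $A_1,\dots,A_k$ with $S\subseteq V:=\operatorname{span}\{A_1,\dots,A_k\}$ (and $S$ in no proper sub-span) is the same as choosing the linear isomorphism $\phi\colon V\to\mathbb{C}^k$ with $\phi\big(\sum_i\alpha_iA_i\big)=(\alpha_1,\dots,\alpha_k)$. In this language the corresponding representation is exactly $ES_\phi(S)=\{(\phi(A),\lambda)\ |\ A\in S,\ \lambda\in\sigma(A)\}$, i.e. the image of the canonical (choice-free) \emph{abstract eigen-surface} $\widehat{ES}(S):=\{(A,\lambda)\ |\ A\in S,\ \lambda\in\sigma(A)\}\subseteq M_n\times\mathbb{C}$ under the map $\phi\times\mathrm{id}_{\mathbb{C}}$. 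Since $\phi$ is injective on $V\supseteq S$, this map restricts to a bijection $\widehat{ES}(S)\to ES_\phi(S)$; hence any two representations $ES_\phi(S)$ and $ES_{\phi'}(S)$ are canonically in bijection via $F:=(\phi'\times\mathrm{id})\circ(\phi\times\mathrm{id})^{-1}$, which fixes the eigenvalue coordinate and sends $(\phi(A),\lambda)\mapsto(\phi'(A),\lambda)$.

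The second step is to identify $F$ with a linear map. In the principal case where both spanning collections span the same subspace — in particular when $k=\dim\operatorname{span}(S)$ is taken minimal — we have $V=V'$ and $\phi'\circ\phi^{-1}\colon\mathbb{C}^k\to\mathbb{C}^k$ is precisely the change-of-basis transformation, an invertible $k\times k$ matrix $M$ expressing the $A_i$ in terms of the $A'_i$. Thus $F=M\oplus\mathrm{Id}_{\mathbb{C}}$ is the restriction to $ES_\phi(S)$ of the linear automorphism $M\oplus\mathrm{Id}_{\mathbb{C}}$ of $\mathbb{C}^{k+1}$, which is a smooth diffeomorphism of $\mathbb{C}^{k+1}$ carrying $ES_\phi(S)$ bijectively onto $ES_{\phi'}(S)$, with smooth inverse $M^{-1}\oplus\mathrm{Id}_{\mathbb{C}}$; everything in sight is smooth and the equivalence intertwines the two canonical surjections from $\widehat{ES}(S)$. (If the two collections span different subspaces the ambient dimensions may differ; there $\phi'\circ(\phi|_V)^{-1}$ still extends to a linear map between the coordinate spaces whose restriction to $\phi(S)$ is a smooth embedding with smooth inverse onto its image, and one reads "equivalent up to diffeomorphism" through $\widehat{ES}(S)$.)

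The step I expect to be the real obstacle is not the linear algebra, which is routine, but pinning down what a "smooth diffeomorphism" of $ES(S)$ should mean: because eigenvalue crossings make $ES(S)$ singular in general, it need not be a smooth manifold, so the statement must be interpreted as "there is a smooth diffeomorphism of the ambient $\mathbb{C}^{k+1}$ restricting to a bijection between the two representations," and the content of the lemma is exactly that such a map exists — supplied here by $M\oplus\mathrm{Id}_{\mathbb{C}}$. A secondary point to settle is that the stated hypotheses on $A_1,\dots,A_k$ do not by themselves determine $\operatorname{span}\{A_i\}$, so one should either fix that span (e.g. take it minimal, equal to $\operatorname{span}(S)$) or route the argument through $\widehat{ES}(S)$ as indicated above.
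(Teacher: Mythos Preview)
Your proposal is correct and follows essentially the same approach as the paper: both arguments observe that two representations differ only by a change of basis on the coefficient coordinates, so the map between them is the linear transformation $M\oplus\mathrm{Id}_{\mathbb{C}}$, which is automatically a smooth diffeomorphism of the ambient space. You are considerably more careful than the paper---introducing the basis-free object $\widehat{ES}(S)$, addressing what ``diffeomorphism'' should mean when $ES(S)$ need not be a manifold, and flagging that the hypotheses do not by themselves pin down $\operatorname{span}\{A_i\}$---but the core idea is identical.
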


\begin{proof}
Any choice of $A_1,\dots,A_k$ as hypothesized will be a basis for $\operatorname{span}\{S\}$, that is, the set of all linear combinations of the elements in $S$. When we map one representation to the other, we only need to change the coordinates $\alpha_1,\dots,\alpha_k$, which corresponds to a change of basis, which is smooth.
\end{proof}

We can easily drop the condition that $A_1,\dots,A_k$ is a basis for $\operatorname{span}\{S\}$ (we will have to do this in section 4), because all of the following results will still hold, however the analysis of the eigen-surface in this case may be representation dependent. We can also replace the idea of using weights as coefficients and just append the matrix itself as the first element in the eigen-surface. This, however, is somewhat less practical as this will add many extra dimensions to our data. We can also make similar adjustments if we decide to parameterize $S$ by some other function (say if $S$ is a manifold) rather than just by linear coefficients.

\begin{lemma}
        If $S$ is compact then $ES(S)$ is compact. In particular, each $\mathcal{K}$ is compact.
\end{lemma}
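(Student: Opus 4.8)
I would prove the two assertions in turn.

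\emph{Compactness of $ES(S)$.} The plan is to exhibit $ES(S)$ as a continuous image of a compact set. Since $A_1,\dots,A_k$ are linearly independent, the coordinate map $c:\operatorname{span}\{A_1,\dots,A_k\}\to\CC^k$, $\sum_i\alpha_iA_i\mapsto(\alpha_1,\dots,\alpha_k)$, is a well-defined linear, hence continuous, map. Set
$$
\mc E:=\{(A,\lambda)\in S\times\CC\ :\ \det(A-\lambda I)=0\}.
$$
Every $\lambda$ occurring here is an eigenvalue of some $A\in S$, so $|\lambda|\le\|A\|\le M:=\max_{A\in S}\|A\|<\infty$, the maximum existing by compactness of $S$; thus $\mc E\subseteq S\times\overline{B_M(0)}$, which is compact, and $\mc E$ is closed there, being the zero set of the continuous map $(A,\lambda)\mapsto\det(A-\lambda I)$, so $\mc E$ is compact. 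Since $ES(S)$ is precisely the image of $\mc E$ under the continuous map $(A,\lambda)\mapsto(c(A),\lambda)$, it is compact. (Alternatively, $\tilde S:=c(S)$ is compact, $ES(S)=(\tilde S\times\CC)\cap\{\det(\sum_i\alpha_iA_i-\lambda I)=0\}$ is closed, and the displayed bound makes it bounded, so Heine--Borel applies.)

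\emph{Reduction for the path-components.} A closed subset of a compact space is compact, so it suffices to show every path-component $\mc K$ of $ES(S)$ is closed in $ES(S)$. By Corollary 2.2 there are only finitely many path-components $\mc K_1,\dots,\mc K_\ell$, partitioning $ES(S)$; hence it is equivalent to show each $\mc K_i$ is \emph{open} in $ES(S)$, for then $\bigcup_{j\ne i}\mc K_j$ is a finite union of open sets and so $\mc K_i$ is also closed.

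\emph{Openness, and the main obstacle.} Fix $p=(\alpha^*,\lambda^*)\in\mc K_i$ and let $a$ be the algebraic multiplicity of $\lambda^*$ in $\sigma(A_{\alpha^*})$, $A_{\alpha^*}=\sum_i\alpha^*_iA_i$. The coefficients of $\det(\sum_i\alpha_iA_i-\lambda I)$ are polynomials in $\alpha$, so continuity of the roots produces $\varepsilon,\delta>0$ such that $\lambda^*$ is the only eigenvalue of $A_{\alpha^*}$ within $2\delta$ of $\lambda^*$ and, for every $\alpha\in\tilde S$ with $|\alpha-\alpha^*|<\varepsilon$, the matrix $A_\alpha$ has exactly $a$ eigenvalues (with multiplicity) in $B_\delta(\lambda^*)$ and none on $\partial B_\delta(\lambda^*)$. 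Then $U:=ES(S)\cap(B_\varepsilon(\alpha^*)\times B_\delta(\lambda^*))$ is a relatively open neighbourhood of $p$ over which the projection $\pi(\alpha,\lambda)=\alpha$ restricts to an $a$-sheeted (with multiplicity) branched cover of $\tilde S\cap B_\varepsilon(\alpha^*)$, and any eigenpath staying over $B_\varepsilon(\alpha^*)$ cannot cross $\partial B_\delta(\lambda^*)$, hence stays in $U$. I would then show that the points of $U$ reachable from $p$ by an eigenpath inside $U$ form a neighbourhood of $p$ contained in $\mc K_i$, using that $\pi|_U$ is a branched covering (so paths lift locally) together with Lemma 2.1 and the multiplicity count to keep the sheet bookkeeping consistent. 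The delicate point---and the step I expect to be the real obstacle---is that local path-lifting requires \emph{short} paths in $\tilde S$: this is automatic once $S$, equivalently $\tilde S$ since $c$ restricts to a homeomorphism $S\cong\tilde S$, is locally path-connected, which holds in all cases of interest (convex sets, polytopes such as $DS_n$, manifolds, semialgebraic sets); the statement should either carry this hypothesis or be phrased accordingly. Everything else---continuity of eigenvalues controlling the $\lambda$-direction and the finiteness of $\ell$---is routine.
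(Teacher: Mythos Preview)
Your argument for the compactness of $ES(S)$ is correct and is essentially a repackaging of the paper's: the paper verifies boundedness via $\rho(A)\le\|A\|$ and closedness by chasing a convergent sequence, while you obtain both at once by exhibiting $ES(S)$ as the continuous image of the compact set $\mc E$ (equivalently, as a closed bounded subset of $\CC^{k+1}$). Either route is fine; yours is a bit slicker.

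For the individual $\mathcal{K}$, your approach genuinely differs from---and is more careful than---the paper's. The paper's sequential argument in fact only shows that the limit of a sequence in $\mathcal{K}$ lies in $ES(S)$; it never verifies that the limit stays in the \emph{same} path-component, so the step ``each $\mathcal{K}$ is closed'' is not actually completed there. You identify the real content: with finitely many path-components (Corollary~2.2), it suffices to show each is \emph{open} in $ES(S)$, and your branched-cover/eigenpath-lifting argument does this once one can produce short paths in $\tilde S$, i.e.\ once $S$ is locally path-connected. That hypothesis is satisfied in every case the paper cares about (convex hulls, $DS_n$, manifolds), so your caveat is exactly right. What your approach buys is a correct proof of the ``in particular'' clause under a mild and natural extra assumption; what the paper's approach buys is brevity, at the cost of leaving that step unjustified.
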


\begin{proof}
We first show that they are bounded. Let $||-||$ be an arbitrary matrix norm. Then, as this norm is continuous, we see that $\sup_{A\in S}||A|| <\infty$. We then have that $\rho(A):= \max_{\lambda\in\sigma(A)} |\lambda| \leq ||A|| < \infty$. Because the coefficients $\alpha_1,\dots,\alpha_k$ also depend continuously on $S$, it follows that each $\mathcal{K}$ is bounded.\\
To show that each $\mathcal{K}$ is closed, we argue that a convergent sequence in $\mathcal{K}$ converges within the set. We write this sequence as $\{\lambda(A_n)\}$, where
$$
\lambda(A_n) = \{\alpha_1^n,\dots,\alpha_k^n, \lambda_n\}.
$$
We see then that the weights $\alpha_i^n$ correspond to the matrix
$$
A_n = \alpha_1^n A_1 + \dots + \alpha_k^n A_k
$$
and that each $\alpha_i^n\to \alpha_i$ as the component sequences must also converge, so that we get
$$
A_n \to A:= \alpha_1A_1 + \dots + \alpha_kA_k \in S
$$
by the compactness of $S$. What remains to show is that there is $\lambda \in \sigma(A)$ such that $\lambda_n\to \lambda$. This is immediate by the continuity of the spectrum.
\end{proof}

We now prove a result that will be central to the main theorem. Let $|-|$ be the euclidean metric on the appropriate space.
\begin{theorem} \label{KcompSep}
   Let $S\subseteq M_n$ be compact. There exists a neighborhood
   $$
   S_\delta:= \{v\in \mathbb{R}^k \times \mathbb{C}:\ \exists p \in S,\ |p-v|<\delta\}
   $$
   of $S$ such that $ES(S_\delta)$ has the same maximal $k$-components as $S$, that is, for $A,B\in S$, if $\lambda(A)\not\rightsquigarrow \lambda(B)$ in $S$ then $\lambda(A)\not\rightsquigarrow \lambda(B)$ in $S_\delta$.
\end{theorem}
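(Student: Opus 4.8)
The plan is to exploit that $ES(S)$ is compact with only finitely many path-components, so these components are separated by a positive distance, and then to show that fattening $S$ to $S_\delta$ perturbs $ES(S)$ by an arbitrarily small amount --- too small to bridge components that were separated in $ES(S)$. Concretely, by the corollary above $ES(S)$ has finitely many path-components $\mathcal{K}_1,\dots,\mathcal{K}_\ell$ with $\ell\le n$, and by the compactness lemma each $\mathcal{K}_i$ is compact; being pairwise disjoint compact sets, they satisfy $d:=\min_{i\neq j}\operatorname{dist}(\mathcal{K}_i,\mathcal{K}_j)>0$. I would then fix $0<\varepsilon<d/2$, let $U_i$ be the open $\varepsilon$-neighborhood of $\mathcal{K}_i$, and put $U:=U_1\cup\cdots\cup U_\ell$, so that the $U_i$ are pairwise disjoint open sets and $ES(S)=\bigcup_i\mathcal{K}_i\subseteq U$.

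The crux is the claim that $ES(S_\delta)\subseteq U$ once $\delta$ is small enough, an upper-semicontinuity statement for $ES$ in the set argument. I would prove it by contradiction: if it failed, there would be $\delta_m\to 0$ and points $(\alpha^{(m)},\mu_m)\in ES(S_{\delta_m})\setminus U$, where $\alpha^{(m)}$ lies within $\delta_m$ of the coefficient vector of some $B_m\in S$ and $\mu_m\in\sigma(A^{(m)})$ with $A^{(m)}:=\alpha_1^{(m)}A_1+\cdots+\alpha_k^{(m)}A_k$. By compactness of $S$ we may pass to a subsequence with $B_m\to B\in S$, hence $\alpha^{(m)}\to\beta$, the coefficient vector of $B$, and $A^{(m)}\to B$; since $|\mu_m|\le\rho(A^{(m)})\le\|A^{(m)}\|$ is bounded we may further assume $\mu_m\to\mu$, and continuity of the spectrum (as used in the compactness lemma) gives $\mu\in\sigma(B)$. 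Then $(\alpha^{(m)},\mu_m)\to(\beta,\mu)\in ES(S)\subseteq U$, contradicting that the closed set $\mathbb{C}^{k+1}\setminus U$ contains every $(\alpha^{(m)},\mu_m)$.

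Once $\delta$ is fixed with $ES(S_\delta)\subseteq U=U_1\sqcup\cdots\sqcup U_\ell$, the rest is point-set topology: the sets $U_i\cap ES(S_\delta)$ are relatively open, pairwise disjoint, and cover $ES(S_\delta)$, so any connected --- in particular any path-connected --- subset of $ES(S_\delta)$ lies in a single $U_i$. Hence if $A,B\in S$ have $\lambda(A)\in\mathcal{K}_i$ and $\lambda(B)\in\mathcal{K}_j$ with $i\neq j$, no path-component of $ES(S_\delta)$ can contain both (such a component would sit in one $U_m$ yet meet both $U_i$ and $U_j$), which is exactly $\lambda(A)\not\rightsquigarrow\lambda(B)$ in $S_\delta$. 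One in fact gets the full correspondence of $k$-components: $ES(S)\subseteq ES(S_\delta)$ so paths inside $ES(S)$ survive, and every point of $ES(S_\delta)$ connects within $ES(S_\delta)$ to a point of $ES(S)$ by lifting the straight segment from its coefficient vector back into $S$ (that segment stays in $S_\delta$), so $ES(S_\delta)$ has exactly $\ell$ path-components, one enlarging each $\mathcal{K}_i$.

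I expect the only genuine work to be the upper-semicontinuity claim --- confining the perturbed matrices and their eigenvalues to a fixed compact set so that the subsequence extraction is legitimate, and applying continuity of the spectrum to the set $\sigma(A^{(m)})$ rather than to individually labelled eigenvalue functions. Everything after that is made routine by the finiteness and compactness of the $k$-components of $ES(S)$ supplied by the earlier results.
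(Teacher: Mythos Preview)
Your proof is correct and follows the same overall strategy as the paper: separate the finitely many compact $k$-components $\mathcal{K}_i$ by disjoint open neighborhoods $U_i$, show that $ES(S_\delta)\subseteq\bigcup_i U_i$ for small $\delta$, and conclude that no path in $ES(S_\delta)$ can cross between distinct $U_i$. The difference is in how the containment $ES(S_\delta)\subseteq U$ is obtained. The paper argues pointwise: for each $A\in S$ it chooses $\delta_A>0$ so that all matrices in $\textbf{B}_{\delta_A}(A)$ have their eigenvalues trapped in the correct $U_i$, then covers $S$ by these balls and uses compactness to extract a uniform $\delta$. You instead argue globally by contradiction, extracting a convergent subsequence of bad points $(\alpha^{(m)},\mu_m)$ and using continuity of the spectrum once at the limit. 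Your route is slightly cleaner in that it avoids the bookkeeping of the cover-then-Lebesgue-number step, and you are more explicit than the paper about the final point-set step (why disjoint open $U_i$ force path-components of $ES(S_\delta)$ to lie in a single $U_i$); the paper essentially stops at ``the eigenvalues remain separated'' and declares the result proven. Both arguments rely on exactly the same inputs (Lemma~2.1, compactness of $ES(S)$, and continuity of the spectrum), so neither is more general.
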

\begin{proof}
    First we will take $\varepsilon>0$ such that for neighborhoods
    \begin{gather*}
    \mathcal{K}_{\varepsilon_n}:=\{z\in \mathbb{C}:\ \exists k\in \mathcal{K}_n,\ |z-k|<\varepsilon\},
    \end{gather*}
    we have for $i\neq j$,
    $$
    \mathcal{K}_{\varepsilon_i}\cap \mathcal{K}_{\varepsilon_j} = \emptyset.
    $$
    This is possible as each $\mathcal{K}_n$ is compact and disjoint. Viewing this in terms of our set $S$, we see that for $A\in S$, $\lambda_i(A)\in \mathcal{K}_i$, and $\lambda_j(A)\in \mathcal{K}_j$,  $|\lambda_i(A)- \lambda_j(A)|> \varepsilon$. Taking the projections (or eigenvalues associated with these points) into $\mathbb{C}$, $\lambda_i,\lambda_j$ respectively, we see then that $|\lambda_i-\lambda_j|>\varepsilon.$ By the continuity of the spectrum, this tells us that there exists $\delta_A>0$ so that if we perturb $A$ by a matrix $|B|<\delta_A$ (that is, take $A+B$), then these eigenvalues will still remain separated. In other words, $\mathcal{K}_i,\mathcal{K}_j$ will remain disjoint under small perturbations.\\
    We can cover $S$ by these balls of radius $\delta_A$, and form a neighborhood $U$ of our set $S$. We then can take a uniform neighborhood $S_\delta \subset U$ by the compactness of $S$, so the result is proven.
\end{proof}

\section{Transitivity}
We first introduce notation that will be useful to us in the following. For $A,B\in S$, $\lambda(A)\rightsquigarrow \lambda(B)$  if $\lambda(A), \lambda(B)\in \mathcal{K}$ for some $k$-component $\mathcal{K}\subseteq ES(S)$. We say that these eigenvalues are paired. In particular, this means that there is a path $A(x)\in S$ for all $x$, $A(0) = A,\ A(1) = B$ and an eigenpath $\lambda(A(x))$ where $\lambda(A(0)) = \lambda(A),\ \lambda(A(1)) = \lambda(B).$ \\
We say $S$ is transitive if whenever $\lambda_i(A)\rightsquigarrow \lambda_j(A)$ then $\lambda_i(A) = \lambda_j(A).$ We choose the word transitive because we can alternatively think of this property through the use of intermediary matrices.
\begin{lemma}
        $S$ is transitive if and only if for all $A,B\in S$, if $\lambda(A)\rightsquigarrow\lambda(B)$, then $\lambda(B)\rightsquigarrow \lambda(A)$ and this is the only distinct eigenvalue of $A$ which it is paired with.
\end{lemma}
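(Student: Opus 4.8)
The plan is to notice first that $\rightsquigarrow$ is, by its very definition, an equivalence relation on the points of $ES(S)$: it is exactly the relation ``lying in a common path-component $\mathcal{K}$.'' Reflexivity and symmetry are immediate, and transitivity is obtained by concatenating paths in $S$ together with their corresponding eigenpaths, which agree at the shared endpoint. Consequently the clause ``$\lambda(B)\rightsquigarrow\lambda(A)$'' in the statement carries no real content (it is just symmetry), so the substance of the asserted equivalence is the \emph{uniqueness} clause: that $\lambda(B)$ is paired with only one distinct eigenvalue of $A$.

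For the forward direction I would assume $S$ is transitive and take $A,B\in S$ with $\lambda(A)\rightsquigarrow\lambda(B)$. Symmetry gives $\lambda(B)\rightsquigarrow\lambda(A)$, so $\lambda(B)$ is paired with at least the eigenvalue $\lambda(A)$ of $A$. Suppose some eigenvalue $\lambda'(A)\in\sigma^*(A)$ also satisfies $\lambda'(A)\rightsquigarrow\lambda(B)$. Chaining with $\lambda(B)\rightsquigarrow\lambda(A)$ via transitivity of $\rightsquigarrow$ yields $\lambda'(A)\rightsquigarrow\lambda(A)$; since $\lambda(A)$ and $\lambda'(A)$ are eigenvalues of the \emph{same} matrix $A$, transitivity of $S$ forces $\lambda'(A)=\lambda(A)$. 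Hence $\lambda(B)$ is paired with exactly one distinct eigenvalue of $A$, as required.

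For the converse I would assume the right-hand condition and verify transitivity directly. Suppose $\lambda_i(A)\rightsquigarrow\lambda_j(A)$ for some $A\in S$ and eigenvalues $\lambda_i(A),\lambda_j(A)\in\sigma^*(A)$. Apply the hypothesis with $B:=A$: viewing $\lambda_j(A)$ as an eigenvalue of ``$B$'', it is paired with only one distinct eigenvalue of ``$A$''. But $\lambda_j(A)\rightsquigarrow\lambda_j(A)$ by reflexivity and $\lambda_j(A)\rightsquigarrow\lambda_i(A)$ by symmetry from the hypothesis, so $\lambda_i(A)$ and $\lambda_j(A)$ are the same distinct eigenvalue of $A$, i.e.\ $\lambda_i(A)=\lambda_j(A)$. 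Thus $S$ is transitive.

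I do not expect a serious obstacle: the argument is essentially unwinding the definitions and the equivalence-relation structure of $\rightsquigarrow$. The only point that needs care in the write-up is the precise meaning of ``distinct eigenvalue'': a matrix with a repeated eigenvalue carries several eigenvalue-paths through a single point of $ES(S)$, and these are ``paired'' only in the trivial sense of coinciding, so one must consistently distinguish ``distinct'' from ``counted with algebraic multiplicity'' (as in Lemma~2.1) throughout.
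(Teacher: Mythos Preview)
Your proof is correct and is essentially the paper's argument phrased abstractly: the paper concatenates a path $A\to B$ with any path $B\to A$ to form a closed loop and invokes transitivity of $S$, which is exactly your use of the equivalence-relation structure of $\rightsquigarrow$. Your write-up is in fact more complete, since the paper only sketches the forward implication, while you also supply the converse via the specialization $B=A$.
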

\begin{proof}
By our condition, there is a path from $A$ to $B$ that pairs $\lambda(A),\lambda(B)$. We can join this with any path from $B$ to $A$ to create a closed loop from $A$ to $A$, so that by transitivity, we see that the same pairing must then be preserved.
\end{proof}
Transitivity then is the property that pairings are not path dependent. That is, when a pairing occurs via one path in $S$ between two matrices, then the same pairing must occur for every path in $S$ between these two matrices. 

\begin{lemma} \label{ForcedPair}
   For a matrix $A \in M_n$, there exist a neighborhood $N \subseteq M_n$ of $A$ and neighborhoods $D_i \subseteq \CC$ of the eigenvalues $\lambda_i\in \sigma(A)$ such that for all matrix paths $P(x)$ contained fully within $N$, any eigenpath of $P(x)$ is contained entirely within one $D_i$.
\end{lemma}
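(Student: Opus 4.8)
The plan is to surround each distinct eigenvalue of $A$ by a small disk, chosen so that the closed disks are pairwise disjoint, and then to show that for matrices sufficiently close to $A$ the entire spectrum stays inside the union of these disks, with the multiplicity in each disk unchanged. Once that is established, the conclusion is a one-line connectedness argument.

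First I would let $\lambda_1,\dots,\lambda_m$ be the distinct eigenvalues of $A$, with algebraic multiplicities $n_1,\dots,n_m$ satisfying $\sum_i n_i = n$, and fix $\varepsilon>0$ small enough that the closed disks $\overline{D_i}:=\{z:|z-\lambda_i|\le \varepsilon\}$ are pairwise disjoint; set $D_i := \{z : |z-\lambda_i|<\varepsilon\}$ and $\gamma_i := \partial D_i$. Since $z\mapsto \det(A-zI)$ has no zero on the compact set $\gamma_i$, the quantity $c_i := \min_{z\in\gamma_i}|\det(A-zI)|$ is positive, and by joint continuity of $(M,z)\mapsto \det(M-zI)$ there is a neighborhood $N_i$ of $A$ on which $|\det(M-zI)|\ge c_i/2$ for every $z\in\gamma_i$. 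On $N_i$ the number of eigenvalues of $M$ inside $D_i$, counted with algebraic multiplicity, equals the contour integral
$$
\frac{1}{2\pi i}\oint_{\gamma_i}\frac{\partial_z\det(M-zI)}{\det(M-zI)}\,dz ,
$$
which is integer valued and depends continuously on $M\in N_i$, hence is constant there and equal to its value $n_i$ at $M=A$. Taking $N:=\bigcap_{i=1}^m N_i$, for every $M\in N$ we have accounted for $\sum_i n_i = n$ eigenvalues (with multiplicity) lying in $\bigcup_i D_i$, so no eigenvalue of $M$ escapes this union.

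With this in hand the conclusion is immediate: given a matrix path $P:[0,1]\to N$ and any eigenpath $x\mapsto\lambda(P(x))$, this is a continuous map from the connected interval $[0,1]$ into $\bigcup_i D_i$, which is a disjoint union of open subsets of $\CC$; a connected set cannot meet two of them, so the image lies in a single $D_i$, as claimed.

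The only delicate step is the claim in the second paragraph that the spectrum of a nearby matrix cannot escape $\bigcup_i D_i$ — a priori an eigenvalue could wander off far from $\sigma(A)$. The multiplicity count via the argument principle is exactly what rules this out: once $n$ roots (with multiplicity) are trapped inside $\bigcup_i D_i$, there is no room left for a stray eigenvalue. This is just the quantitative form of the continuity of the spectrum already invoked in the introduction, so I expect no genuine obstacle here, only the need to state it carefully; everything else is routine.
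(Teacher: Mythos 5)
Your proof is correct and follows essentially the same plan as the paper's: surround the distinct eigenvalues of $A$ by disjoint disks, shrink to a matrix neighborhood on which the spectrum cannot leave the union of those disks, and then invoke connectedness of an eigenpath to trap it in a single disk. The only real difference is in how the middle step is justified: the paper takes continuity of the spectrum (in the sense of the matching distance $\max_{\lambda\in\sigma(A)}\min_{\mu\in\sigma(B)}|\lambda-\mu|$) as given and runs a short contradiction argument along the eigenpath, while you make that step self-contained via the argument principle, counting eigenvalues in each $D_i$ by the contour integral of the logarithmic derivative of $\det(M-zI)$. Your route is slightly more work but is fully rigorous on its own terms and additionally shows the algebraic multiplicity trapped in each $D_i$ is constant on $N$, a fact the paper's phrasing leaves implicit.
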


\begin{proof}
   If $A$ has only one distinct eigenvalue, then take any $\varepsilon > 0$ and choose $\delta > 0$ such that 
   $$
   \norm{A - B} < \delta \Longrightarrow 
   |\sigma(A) - \sigma(B)| < \varepsilon.\footnote{We use this notation to represent {\tiny $\displaystyle \max_{\lambda\in \sigma(A)}\min_{\mu\in\sigma(B)}|\lambda - \mu |$}}
   $$
   If $A$ has more than one distinct eigenvalue, define
   \[R:=\min_{\lambda_i(A)\neq \lambda_j(A)}|\lambda_i(A)-\lambda_j(A)|/2 \]
   and choose $\delta > 0$ such that 
   $$
   \norm{A-B} < \delta \implies |\sigma(A) - \sigma(B)| < R.
   $$
   For $i = 1, \ldots, n,$ let $D_i = \textbf{B}_R(\lambda_i)$\footnote{Here we use $\textbf{B}_\delta(A)$ to refer to the set $\{B\in M_n\ |\ |B-A|<\delta\}$, where $|-|$ denotes the Euclidean norm on $M_n$. We use a bold-faced $\textbf{B}$ here to distinguish the open ball from a matrix $B$.}. Choose any matrix path $P(x)$ contained within $\textbf{B}_\delta(A)$ and corresponding eigenpath $\lambda(x)$, and note that $\lambda(0)$ is in $D_k$ for some $k$. In fact, $\lambda(x)$ is contained in $D_k$ for all $x \in [0,1]$ since all $D_i$ corresponding to distinct eigenvalues are disjoint by choice of $R$, so that if $\lambda(x)$ leaves $D_k$, because it is continuous, there must be some point where it is at least distance $R$ from all eigenvalues of $A$, contradicting the choice of $\delta$.
\end{proof}
We now want to highlight a particular type of set that will be important in the study of transitivity. If $S$ has the property that for any $A,B\in S$, if $\lambda\in \sigma(A)$ has algebraic multiplicity $k$, then there exists $\mu\in \sigma(B)$ where $\mu$ has algebraic multiplicity $k$, we say that $S$ has uniform multiplicity. For example, if all matrices in $S$ have simple (algebraic multiplicity 1) eigenvalues, then $S$ will have uniform multiplicity.
\begin{corollary} \label{Notches}
Let $S\subseteq M_n$ have uniform multiplicity. For every $A\in S$, then there exists $\delta >0$ so that $\textbf{B}_\delta(A)\cap S$ is transitive.
\end{corollary}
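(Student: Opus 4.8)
The plan is to read the statement off \Cref{ForcedPair}, once $\delta$ has been chosen small enough to rule out two phenomena at the same time: an eigenpath of a nearby matrix drifting away from the eigenvalue of $A$ it starts near, and an eigenvalue cluster of $A$ fragmenting into several distinct eigenvalues among the matrices of $S$ close to $A$.

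First I would list the distinct eigenvalues $\lambda_1,\dots,\lambda_r$ of $A$, with algebraic multiplicities $p_1,\dots,p_r$, and apply \Cref{ForcedPair} to obtain a neighborhood $N$ of $A$ together with pairwise disjoint disks $D_1,\dots,D_r$ centered at the $\lambda_i$ such that every eigenpath of every matrix path lying inside $N$ stays within a single $D_i$. Next I would choose $\delta>0$ with $\textbf{B}_\delta(A)\subseteq N$ and, using continuity of the spectrum, small enough that every $B$ with $\norm{B-A}<\delta$ has all of its eigenvalues inside $\bigcup_i D_i$, the ones lying in $D_i$ having total algebraic multiplicity exactly $p_i$ (separated clusters of eigenvalues neither merge nor exchange multiplicity under a sufficiently small perturbation).

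The step that uses the hypothesis is the upgrade, for matrices of $S$, from ``$D_i$ holds eigenvalues of total multiplicity $p_i$'' to ``$D_i$ holds a single eigenvalue, of multiplicity $p_i$''. Indeed any $B\in \textbf{B}_\delta(A)\cap S$ has at least one distinct eigenvalue in each of the $r$ disjoint disks, hence at least $r$ distinct eigenvalues in all, while uniform multiplicity forces the eigenvalue-multiplicity data of $B$ to match that of $A$; so $B$ has exactly $r$ distinct eigenvalues, one per disk, the one in $D_i$ having multiplicity $p_i$. Granting this, transitivity is essentially formal: if $\lambda_a(B)\rightsquigarrow\lambda_b(B)$ for some $B\in \textbf{B}_\delta(A)\cap S$, then the path witnessing the pairing is automatically a loop based at $B$ (the two paired eigenvalues belong to the same matrix) and carries an eigenpath from $\lambda_a(B)$ to $\lambda_b(B)$; this loop lies in $N$, so \Cref{ForcedPair} confines the eigenpath to a single $D_m$, whence $\lambda_a(B)$ and $\lambda_b(B)$ are both the unique eigenvalue of $B$ in $D_m$, i.e.\ $\lambda_a(B)=\lambda_b(B)$.

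The fragmentation step should be the only real obstacle. Continuity of the spectrum controls only the total multiplicity carried by each disk, not the number of distinct eigenvalues sitting inside it, so a matrix near $A$ could a priori split a repeated eigenvalue of $A$ into several simpler ones — and that is exactly the configuration in which an arbitrarily small loop could interchange two genuinely different eigenvalues and break transitivity. It is precisely the uniform-multiplicity hypothesis, forcing the matrices of $S$ to present the same list of eigenvalue multiplicities (in particular the same number of distinct eigenvalues), that excludes this, and pinning that implication down carefully is the heart of the argument.
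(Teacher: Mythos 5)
Your proposal is correct and takes essentially the same route as the paper: invoke \Cref{ForcedPair} to get the confining disks $D_i$, shrink $\delta$ so that uniform multiplicity forces each $B\in \textbf{B}_\delta(A)\cap S$ to have exactly one distinct eigenvalue per disk, and conclude that any eigenpath in the ball is trapped in a single $D_i$ and hence pairs an eigenvalue only with itself. If anything, your counting argument for the ``fragmentation'' step (at least one distinct eigenvalue per disk, exactly $r$ in total, hence exactly one per disk with multiplicity $p_i$) is spelled out more carefully than the paper's terse phrasing of the same point.
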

\begin{proof}
We first note that by the continuity of the spectrum and our assumption of uniform multiplicity, we can choose $\delta>0$ so that each $D_i$ corresponding to $\textbf{B}_\delta(A) \cap S$ has eigenvalues of strictly 1 algebraic multiplicity, and thus each matrix has a unique eigenvalue in this neighborhood. We further construct these neighborhoods to be disjoint as in \cref{ForcedPair}, so we see that $\lambda(B)\rightsquigarrow \lambda(B)$ and this is the only such pairing, thus it is transitive.
\end{proof}

\begin{figure}[hbt!] \label{BlobNotch}
    \centering
    \includegraphics[scale=.4]{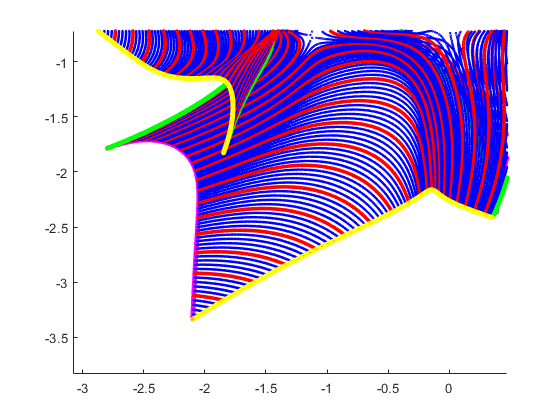}
    \caption{A 3-dimensional representation of an eigen-surface that is not transitive, but regions of transitivity can be observed close to its boundary.}
\end{figure}

We will use these tools to study what happens when we bend our matrix paths, with particular interest in what causes a change in the pairing of eigenvalues.\\
For closed paths $A_1(x):[0,1]\to M_n,A_2(x):[0,1]\to M_n$, we call a deformation of $A_1(x)$ and $A_2(x)$ in $S$ the continuous function 
$$
A(x,y):[0,1]\times [0,1]\to M_n
$$
where $A(x,0) = A_1(x)$, $A(x,1)=A_2(x)$, and $A(x,y)$ for fixed $y$ is a closed loop in $S$. Accompanying this, we develop new notation to describe the eigenpaths of this multivariate function. We will take $\lambda_i^x(x,y)$ to be the eigenpath induced by the path $A(x,y)$ for fixed $x$, and $\lambda_i^y(x,y)$ to be the eigenpath induced by the path $A(x,y)$ for fixed $y$. To describe the pairing of eigenvalues by a particular path, or eigenpath $p(x)$, we will use the notation $\lambda \rightsquigarrow_{p(x)} \mu.$
\begin{lemma}\label{split}
        Let $A(x):[0,1]\to M_n$, and $\lambda\rightsquigarrow_{A(x)} \mu_1,\mu_2\in \sigma(A(1)),$ where $\mu_1\neq \mu_2$. Then $A(x)$ cannot have uniform multiplicity.
\end{lemma}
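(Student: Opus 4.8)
The plan is to argue by contradiction. Write $S := \{A(x) : x\in[0,1]\}$ for the (compact, path-connected) image of the path, and suppose $S$ has uniform multiplicity. Fix the two eigenpaths witnessing the hypotheses: a continuous $p:[0,1]\to\CC$ with $p(x)\in\sigma(A(x))$ for all $x$, $p(0)=\lambda$, $p(1)=\mu_1$, and similarly a continuous $q$ with $q(x)\in\sigma(A(x))$, $q(0)=\lambda$, $q(1)=\mu_2$. It then suffices to show $p(1)=q(1)$, which contradicts $\mu_1\neq\mu_2$.

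Next I would cut $[0,1]$ into pieces small enough to lie inside transitive neighborhoods. Since $S$ has uniform multiplicity, \cref{Notches} gives, for each $t\in[0,1]$, a radius $\delta_t>0$ with $\mathbf{B}_{\delta_t}(A(t))\cap S$ transitive. The sets $A^{-1}\bigl(\mathbf{B}_{\delta_t}(A(t))\bigr)$ form an open cover of the compact interval $[0,1]$, so picking a Lebesgue number and a fine enough partition $0=x_0<x_1<\dots<x_N=1$, each sub-path image $V_i := \{A(x) : x\in[x_{i-1},x_i]\}$ is contained in some $\mathbf{B}_{\delta_{t_i}}(A(t_i))\cap S$. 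A subset of a transitive set is transitive (a pairing realized inside the subset is a pairing in the ambient set, which then forces equality), so each $V_i$ is transitive.

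Then I would run an induction on $i$ proving $p(x_i)=q(x_i)$. The base case is $p(x_0)=q(x_0)=\lambda$. For the inductive step, assume $p(x_{i-1})=q(x_{i-1})=:\xi$. The restrictions $p|_{[x_{i-1},x_i]}$ and $q|_{[x_{i-1},x_i]}$ are eigenpaths along the matrix path $A|_{[x_{i-1},x_i]}$, whose image is exactly $V_i$, so $\xi\rightsquigarrow p(x_i)$ and $\xi\rightsquigarrow q(x_i)$ in $V_i$. Concatenating the reversal of the first eigenpath (running from $A(x_i)$ back to $A(x_{i-1})$, from $p(x_i)$ back to $\xi$) with the second eigenpath produces a closed loop in $V_i$ based at $A(x_i)$ carrying an eigenpath from $p(x_i)$ through $\xi$ to $q(x_i)$; hence $p(x_i)\rightsquigarrow q(x_i)$ in $V_i$, and transitivity of $V_i$ forces $p(x_i)=q(x_i)$. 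Iterating up to $i=N$ yields $\mu_1=p(1)=q(1)=\mu_2$, the desired contradiction.

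The step I expect to be the crux — and the reason the uniform multiplicity hypothesis is essential — is the appeal to \cref{Notches}: it is exactly what prevents two eigenpaths that agree at some matrix from later separating while both hovering near a shared multiple eigenvalue, a phenomenon that \cref{ForcedPair} alone would permit. Once local transitivity is in hand, the remaining work is a routine compactness-plus-induction argument; the only points needing care are verifying that the matrix path on each piece genuinely stays inside the chosen ball and that transitivity passes to subsets.
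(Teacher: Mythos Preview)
Your argument is correct, but it takes a genuinely different route from the paper's own proof. The paper argues directly with the $n$ continuous eigenpaths $\lambda_1(x),\dots,\lambda_n(x)$: given two paths with $\lambda_i(0)=\lambda_j(0)$ but $\lambda_i(1)\neq\lambda_j(1)$, it sets $L=\max\{x:\lambda_i(x)=\lambda_j(x)\}$ (a maximum by continuity, with $L<1$), and then observes that for $x$ just to the right of $L$ the equivalence relation ``$\lambda_a(x)=\lambda_b(x)$'' is a \emph{strict} refinement of the one at $L$ --- the pair $(i,j)$ splits, while any pair distinct at $L$ stays distinct nearby by continuity --- so the multiset of multiplicities changes at $L$. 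This is entirely self-contained and pinpoints an explicit parameter value where uniform multiplicity fails. Your proof instead leverages \cref{Notches} to get local transitivity, then uses a Lebesgue-number partition of $[0,1]$ and an induction to propagate $p(x_i)=q(x_i)$ across the subintervals. That is more modular --- it makes the link ``uniform multiplicity $\Rightarrow$ local transitivity $\Rightarrow$ eigenpaths cannot diverge'' explicit --- at the cost of invoking more machinery and losing the concrete identification of where the multiplicity list breaks. Both are valid; the paper's version is shorter and more elementary, while yours foreshadows the compactness-and-local-transitivity pattern that reappears in \cref{HomotopyPairs}.
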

\begin{proof}
Let $\lambda_1(x),\dots,\lambda_n(x)$ be the eigenpaths induced by $A(x)$. Suppose that $\lambda_i(0) = \lambda_j(0)$, but $\lambda_i(1)\neq \lambda_j(1)$, as hypothesized. As these paths are continuous, there must be some $0\leq L\leq 1$ that is the maximum value of $x$ which $\lambda_i(x)=\lambda_j(x).$ What we want to now show is that when this higher multiplicity eigenvalue is lost, it cannot be replaced by combining lower multiplicity eigenvalues.
\\
To show this, consider $\lambda_h(x),\lambda_k(x)$ where $\lambda_h(L) \neq \lambda_k(L)$. Again, by the continuity of these paths there is some interval about $L$ where $\lambda_k(x)\neq \lambda_h(x)$, as was desired.
\end{proof}
\begin{theorem} \label{HomotopyPairs}
   Let $A_1(x),A_2(x)$ be closed paths in $S$ and let $A(x,y)$ be a deformation of $A_1$ and $A_2$ in $S$. If for any $i,j$ we have 
   \begin{gather*}
   \lambda_i(0,0)\rightsquigarrow_{\lambda_i^y(x,0)} \lambda_j(1,0),\\
   \lambda_i(0,0) \rightsquigarrow_{\lambda_i^x(0,y)} \lambda_i(0,1),\\
   \lambda_i(1,0) \rightsquigarrow_{\lambda_i^x(0,y)} \lambda_i(1,1),
   \end{gather*}
   but
   $$
   \lambda_i(0,1)\not\rightsquigarrow_{\lambda_i^x(x,1)}\lambda_j(1,1)
   $$
    then the image of $A(x,y)$ cannot have uniform multiplicity. 
\end{theorem}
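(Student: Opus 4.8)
The plan is to argue by contradiction. Suppose the image $S':=A([0,1]\times[0,1])$ has uniform multiplicity; I will show that the top edge is then forced to realize the pairing $\lambda_i(0,1)\rightsquigarrow_{\lambda_i^x(x,1)}\lambda_j(1,1)$, contradicting the hypothesis.

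The first step is a reduction to homotopy invariance of pairings. Inside the parameter square $[0,1]^2$, let $P$ be the path that runs down the left edge from $(0,1)$ to $(0,0)$, across the bottom from $(0,0)$ to $(1,0)$, and up the right edge from $(1,0)$ to $(1,1)$. Reversing the second hypothesis and concatenating the resulting eigenpath with $\lambda_i^y(x,0)$ and then with the right-edge eigenpath issuing from $\lambda_j(1,0)$ — the last pairing coming from the third hypothesis read for the index $j$, which is legitimate since the left and right edges of the deformation are the same matrix path $y\mapsto A(0,y)=A(1,y)$ — the first three hypotheses give
$$
\lambda_i(0,1)\ \rightsquigarrow_{A\circ P}\ \lambda_j(1,1).
$$
Since $[0,1]^2$ is convex, the straight-line homotopy within $[0,1]^2$ between $P$ and the segment $x\mapsto(x,1)$ fixes both endpoints; composing with $A$ shows that $A\circ P$ and the top edge $x\mapsto A(x,1)$ are homotopic with fixed endpoints inside $S'$. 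So it suffices to prove: in a set of uniform multiplicity, any two paths that are homotopic with fixed endpoints realize exactly the same eigenvalue pairings. Granting this, the top edge also pairs $\lambda_i(0,1)$ with $\lambda_j(1,1)$, which is the contradiction we want.

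The heart of the matter is that homotopy-invariance statement. First note that, by \cref{split}, a path lying in a set of uniform multiplicity induces a single-valued pairing on eigenvalues, and that a closed path lying in a transitive set returns every eigenvalue to itself — the induced eigenpath is a path in the eigen-surface of that set joining two eigenvalues of the common endpoint matrix, so transitivity equates them. By \cref{Notches}, every point of $S'$ has a neighborhood meeting $S'$ in a transitive set. Now, given a homotopy $H:[0,1]^2\to S'$ with fixed endpoints, pull back this open cover, pick a Lebesgue number, and subdivide $[0,1]^2$ into a grid of rectangles each carried by $H$ into a single transitive neighborhood; the boundary loop of each small rectangle then has trivial eigenpath monodromy. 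A telescoping argument — each interior grid edge is traversed twice in opposite directions, so its contributions to the monodromy cancel — shows that the outer boundary loop of the whole square also has trivial monodromy. Applying this with $H$ the homotopy built above (whose vertical sides are the constant paths at $A(0,1)$ and at $A(1,1)$), the outer boundary is $A\circ P$ followed by the top edge reversed, and triviality of its monodromy says precisely that carrying $\lambda_i(0,1)$ to $\lambda_j(1,1)$ along $A\circ P$ and then running the top edge backwards returns to $\lambda_i(0,1)$, i.e.\ the top edge pairs $\lambda_i(0,1)$ with $\lambda_j(1,1)$.

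I expect the telescoping/subdivision step to be the main obstacle: one must set up the eigenpath monodromy so that the monodromy of a concatenation is the composite of the monodromies (keeping careful track of base points), and check that in the resulting groupoid the boundary of the subdivided square is the ordered product of the small-rectangle boundaries. The remaining inputs — existence and concatenability of eigenpaths, single-valuedness of the induced pairings under uniform multiplicity, and the fact that a transitive neighborhood kills every loop inside it — are routine given \cref{ForcedPair}, \cref{Notches}, and \cref{split}. It is worth stressing that uniform multiplicity is invoked for the whole image $S'$, not merely for the loops $A_1$ and $A_2$: that is what makes \cref{Notches} applicable all along the deformation and what makes each induced pairing single-valued.
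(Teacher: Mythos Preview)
Your argument is correct and complete in outline, but it proceeds along a genuinely different route from the paper's. The paper argues analytically: it fixes the vertical eigenpaths $\lambda_i(y),\lambda_j(y)$ along $A(0,y)$, sets $\mathcal{Y}=\{y:\lambda_i(y)\rightsquigarrow_{A(\cdot,y)}\lambda_j(y)\}$, takes $L=\sup\mathcal{Y}$, and uses \cref{KcompSep} on the compact image $C=A([0,1]\times\{L\})$ together with sequential compactness to squeeze the pairing from below and from above at level $L$, producing two distinct targets $\lambda_j(L)\neq\lambda_h(L)$ for $\lambda_i(L)$ along the single path $A(\cdot,L)$ and invoking \cref{split}. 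Your approach instead isolates the abstract principle --- homotopy invariance of eigenpath monodromy in a uniform-multiplicity set --- and proves it by a Lebesgue-number subdivision into patches small enough that \cref{Notches} makes each transitive, then cancels internal edges. This is essentially the standard covering-space proof that path lifts depend only on homotopy class; it bypasses \cref{KcompSep} and the supremum argument entirely, and it makes the corollaries (pairing depends only on homotopy class; simply connected plus uniform multiplicity implies transitive) drop out immediately rather than as separate deductions. The trade-off is that the paper's proof localizes the failure at a specific level $L$, which is informative, while yours gives the cleaner structural statement. One small point worth tightening: your use of ``hypothesis~3 read for the index $j$'' is really a labeling convention (that $\lambda_j(1,1)$ is by definition the endpoint of the right-edge eigenpath issuing from $\lambda_j(1,0)$, the right and left edges being the same matrix path since each $A(\cdot,y)$ is closed); say this explicitly rather than calling it a hypothesis.
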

The last condition somewhat defies our intuition. In some sense, it feels like all of these eigenvalues which we can continuously traverse from one to the other should be the same, and that pairing should be preserved under these deformations. The statement of this theorem is that when this intuition is violated, something must go wrong in our set of matrices $S$, and we formalize this notion of ``going wrong" through the use of uniform multiplicity or the lack thereof.
\begin{proof}
Let $\lambda_i(y)$ be the eigenpath induced by $A(0,y)$, with initial eigenvalue
$$
\lambda_i(0)=\lambda_i\in \sigma(A(0)).
$$
Define the set
    $$
    \mathcal{Y}:=\{y \in [0,1]\ |\ \lambda_i(y) \rightsquigarrow_{\lambda_i^y(x,y)} \lambda_j(y)\},
    $$
and $L:=\sup \mathcal{Y}$ where $L\geq 0$ by hypothesis. Assume that $S$ has uniform multiplicity for the sake of contradiction. By corollary \ref{Notches}, we see that $L>0$, as $S$ is locally transitive. Similarly, we see that $L<1$, as by hypothesis $\lambda_i(1) \not \rightsquigarrow_{\lambda_i^y(x,y)} \lambda_j(1)$.
\\
Let $C$ be the image of $A(x,L)$. By theorem \ref{KcompSep}, we see that this set has locally maximal pairings, that is, there exists some $\delta>0$ so that the neighborhood $C_\delta$\footnote{As before, we take this to be the uniform $\delta$ neighborhood of the set.} has the same $k$-component structure. What we now want to show is that approaching $C$ from above and below will give us different pairings, violating the uniform multiplicity condition.
\\
We first look at pairings from below ($y\leq L$), by looking at two cases, if $L\in \mathcal{Y}$ and $L\not\in \mathcal{Y}$. In the first of these, by definition $\lambda_i(L)\rightsquigarrow_{\lambda_i^y(x,L)}\lambda_j(L)$. We want to show that the second case is impossible. By construction, we can take a sequence $\{\phi_n\}$, $\phi_n\to L$, where $\lambda_i(\phi_n)\rightsquigarrow_{\lambda_i^y(x,L)}\lambda_j(\phi_n)$. As $A(x,y)$ is continuous on a compact domain, it is uniformly continuous, so there exists $N>0$ where for $n\geq N$, $|A(x,\phi_n)-A(x,L)|<\delta$, that is, $A(x,\phi_n)\in C_\delta.$ Thus we see that 
\begin{gather*}
\lambda_i(L)\rightsquigarrow_{\lambda_i^x(0,y)} \lambda_i(\phi_N)\\
\rightsquigarrow_{\lambda_i^y(x,\phi_N)} \lambda_j(\phi_N)\rightsquigarrow_{\lambda_i^x(1,y)} \lambda_j(L)
\end{gather*}
in other words, there is a path in $C_\delta$ so that $\lambda_i(L)\rightsquigarrow\lambda_j(L)$, contradicting that $L\not\in \mathcal{Y}.$
\\
We want to argue similarly for pairings from above ($y\geq L$). If we can show that there is a sequence $\{\phi_n\}$, $\phi_n\to L$, $\lambda_i(\phi_n)\rightsquigarrow \lambda_h(\phi_n)$ where $\lambda_h(\phi_n)\neq \lambda_j(\phi_n)$, then as before, we will be able to show $\lambda_i(L)\rightsquigarrow \lambda_k(L)$. This, however, is clear, as for all $y\geq L$, $\lambda_i(y)\not\rightsquigarrow \lambda_j(y)$, and there are only finitely many other options for pairing, so at least one of them must occur infinitely often locally about $L$.
\\
In summary, we have shown that there exists $j\neq h$ so that $\lambda_i(L)\rightsquigarrow \lambda_j(L), \lambda_h(L)$ in $C$. $C$, however, is just the path $A(x,L)$, so it must be that $\lambda_i(L)\rightsquigarrow_{\lambda_i^y(x,L)} \lambda_j(L),\lambda_h(L)$ so by lemma \ref{split}, $S$ cannot have uniform multiplicity, giving us a contradiction.
\end{proof}
\begin{corollary}
   Any continuous deformation of a path over a space of uniform multiplicity will preserve pairings.
\end{corollary}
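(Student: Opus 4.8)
The plan is to read this corollary as the contrapositive of Theorem \ref{HomotopyPairs}, so that almost all of the content is already done and only the translation of hypotheses remains. First I would fix the meaning of ``preserves pairings.'' Given a deformation $A(x,y)$ with initial loop $A_1(x) = A(x,0)$ and terminal loop $A_2(x) = A(x,1)$, label eigenvalues so that $\lambda_i(0,1)$ is the endpoint of the eigenpath $\lambda_i^x(0,y)$ obtained by continuing $\lambda_i(0,0)$ along the left edge $A(0,y)$, and likewise $\lambda_i(1,1)$ continues $\lambda_i(1,0)$ along the right edge $A(1,y)$. With this labeling, the assertion to prove is: whenever $\lambda_i(0,0)\rightsquigarrow_{\lambda_i^y(x,0)}\lambda_j(1,0)$ holds along $A_1$, the corresponding pairing $\lambda_i(0,1)\rightsquigarrow_{\lambda_i^x(x,1)}\lambda_j(1,1)$ holds along $A_2$.

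Next I would point out that two of the three ``positive'' hypotheses of Theorem \ref{HomotopyPairs}, namely the pairings along the two vertical edges, are not restrictions at all in our situation: they merely assert the existence of a continuous eigenpath along $A(0,y)$ and along $A(1,y)$, which is exactly the continuity of the spectrum quoted from \cite{Kato} in the introduction, and moreover they are what we used to define the labels $\lambda_i(0,1)$ and $\lambda_i(1,1)$ in the first place. Hence the only genuine hypothesis of the theorem in our setting is the bottom pairing along $A_1$.

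The corollary then follows by contradiction. Suppose some pairing is not preserved: there are indices $i,j$ with $\lambda_i(0,0)\rightsquigarrow_{\lambda_i^y(x,0)}\lambda_j(1,0)$ but $\lambda_i(0,1)\not\rightsquigarrow_{\lambda_i^x(x,1)}\lambda_j(1,1)$. Together with the two automatic edge pairings, all four hypotheses of Theorem \ref{HomotopyPairs} hold, so the image of $A(x,y)$ cannot have uniform multiplicity. But the image of $A(x,y)$ lies in a space of uniform multiplicity, and that property is inherited by any subset (for any two matrices in the subset the required multiplicity matching still holds), so the image does have uniform multiplicity --- a contradiction. Therefore every pairing realized along $A_1$ is realized along $A_2$.

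I do not anticipate a real obstacle: the single thing to be careful about is the index bookkeeping, checking that the $i$ and $j$ that appear on the terminal loop really are the continuations through the side eigenpaths of the $i,j$ on the initial loop, and noting that the ``closed loop for each fixed $y$'' clause in the definition of deformation is not used anywhere in the argument. It is worth recording the special case that this applies, in particular, to any deformation through matrices with simple eigenvalues.
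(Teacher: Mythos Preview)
Your proposal is correct and matches the paper's approach: the paper states the corollary with no proof, treating it as an immediate contrapositive of Theorem~\ref{HomotopyPairs}, which is exactly what you have spelled out. Your careful bookkeeping of the edge continuations and the observation that the two vertical-edge hypotheses are automatic make explicit what the paper leaves implicit.
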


\begin{corollary} \label{NonTrans}
   Let $S$ be a simply connected subset of $M_n$, and let $A(x)$ be a closed path in $S$.
   If the image of $A(x)$ is not transitive, then not all matrices in $S$ can have the same multiplicity lists.
\end{corollary}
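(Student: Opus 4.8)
The plan is to argue by contradiction. Suppose every matrix of $S$ has the same multiplicity list; then $S$, and hence every subset of $S$, has uniform multiplicity. I will build a deformation inside $S$ whose bottom edge realizes a nontrivial pairing and whose other three edges are constant, check that this deformation satisfies all four hypotheses of Theorem~\ref{HomotopyPairs}, and conclude that its image cannot have uniform multiplicity -- contradicting that it lies in $S$. (Equivalently, one can invoke the corollary that deformations over a space of uniform multiplicity preserve pairings, and observe that a nontrivial pairing cannot survive being deformed to a constant loop, since every eigenpath of a constant loop is constant.)

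First I would extract the right data from the failure of transitivity. A pairing inside the image $C$ of $A(x)$ is witnessed by a path in $C$ carrying an eigenpath; a pairing of two eigenvalues of one matrix is therefore witnessed by a closed loop. So non-transitivity of $C$ yields a matrix $B\in C$, two \emph{distinct} eigenvalues $\mu_1\neq\mu_2\in\sigma(B)$, and a closed loop $P(x)$ in $C\subseteq S$ based at $B$ with $\mu_1\rightsquigarrow_{P(x)}\mu_2$. Since $S$ is simply connected, $P$ is null-homotopic rel basepoint, so there is a continuous $A(x,y):[0,1]\times[0,1]\to S$ with $A(x,0)=P(x)$, $A(x,1)\equiv B$, and $A(0,y)=A(1,y)=B$; each $y$-slice is a closed loop, so this is a deformation of $P$ and of the constant loop at $B$.

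Now I would label eigenpaths as in Theorem~\ref{HomotopyPairs}, with $\mu_1$ and $\mu_2$ carrying the distinguished labels at the corner $B$. The left, right, and top edges of the square are all the constant path at $B$, so their eigenpaths are constant; this makes the two side hypotheses ($\lambda_i(0,0)\rightsquigarrow_{\lambda_i^x(0,y)}\lambda_i(0,1)$ and $\lambda_i(1,0)\rightsquigarrow_{\lambda_i^x(1,y)}\lambda_i(1,1)$) automatic and pins down $\lambda_i(0,1)=\mu_1$ and $\lambda_j(1,1)=\mu_2$. The bottom edge $A(x,0)=P(x)$ supplies $\lambda_i(0,0)\rightsquigarrow_{\lambda_i^y(x,0)}\lambda_j(1,0)$ by the choice of $P$. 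Finally the top edge is the constant eigenpath at $\mu_1$, which does not reach $\mu_2=\lambda_j(1,1)$ because $\mu_1\neq\mu_2$; so the fourth hypothesis $\lambda_i(0,1)\not\rightsquigarrow_{\lambda_i^x(x,1)}\lambda_j(1,1)$ holds. Theorem~\ref{HomotopyPairs} then says the image of $A(x,y)$ lacks uniform multiplicity, which is the contradiction since this image is a subset of $S$.

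The work here is bookkeeping rather than idea. The one step needing care is checking that a simply connected $S$ actually furnishes a null-homotopy all of whose slices are closed loops (a rel-basepoint null-homotopy does, because $\pi_1(S,B)=0$), together with extracting $P$ and the distinct $\mu_1,\mu_2$ cleanly from non-transitivity, and lining up the corner labels and indices with the exact statement of Theorem~\ref{HomotopyPairs}. The conceptual ingredients -- a constant loop has only trivial pairings, simple connectivity kills $P$, uniform multiplicity is inherited by subsets, and $\mu_1\neq\mu_2$ is precisely what makes the last hypothesis fail -- are all immediate.
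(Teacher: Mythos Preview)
Your proposal is correct and follows essentially the same route as the paper: null-homotope the offending loop to a constant loop using simple connectivity, observe that the constant loop has only trivial pairings, and invoke Theorem~\ref{HomotopyPairs} to force a failure of uniform multiplicity inside $S$. You are in fact slightly more careful than the paper in one respect: the hypothesis is that the \emph{image} $C$ of $A(x)$ is not transitive, so the witnessing loop need not be $A(x)$ itself but some closed path $P$ in $C$; you extract $P$ explicitly and contract it, whereas the paper simply contracts $A(x)$ to $A(0)$ and leaves this extraction implicit.
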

\begin{proof}
This follows by taking the contraction from $A(x)$ to the initial point $A(0)$. As $A(0)$ is trivially transitive, a different pairing must occur in $A(x)$, and so by \cref{HomotopyPairs} $A(x,y)$ cannot have uniform multiplicity, and thus $S$ cannot have uniform multiplicity.
\end{proof}
\begin{corollary} \label{SimplyConnectedTrans}
A simply connected set $S$ is transitive if and only if it has uniform multiplicity.
\end{corollary}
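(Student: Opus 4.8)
I will prove the two implications separately, relying throughout on the following translation. A pairing $\lambda_i(A)\rightsquigarrow\lambda_j(A)$ between two eigenvalues of the \emph{same} matrix $A$ is witnessed by a path in $ES(S)$ joining $\lambda_i(A)$ and $\lambda_j(A)$; since these two points have identical spatial coordinates (namely those of $A$), that path projects to a \emph{closed} loop $A(x)$ in $S$ based at $A$ along which some eigenpath runs from $\lambda_i$ to $\lambda_j$, and conversely. Consequently $S$ is transitive precisely when the image of every closed loop in $S$ is transitive.

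\emph{Transitive $\Rightarrow$ uniform multiplicity.} Assume $S$ is transitive and fix $A,B\in S$, a path $\gamma$ in $S$ from $A$ to $B$, and an eigenvalue $\lambda\in\sigma(A)$ of algebraic multiplicity $m$. Choose $n$ eigenpaths along $\gamma$ realizing $\sigma(\gamma(x))$ with multiplicity at every $x$ (possible by \cite{Kato}); exactly $m$ of them, say $p_1,\dots,p_m$, start at $\lambda$. Each $p_\ell$ lifts to a path in $ES(S)$ beginning at the single point $(\alpha(A),\lambda)$ — here $\alpha(\cdot)$ denotes the coordinate tuple — and ending at $(\alpha(B),p_\ell(1))$; hence all of the points $(\alpha(B),p_\ell(1))$ lie in one $k$-component, so they are pairwise paired, and transitivity applied at $B$ forces $p_1(1)=\cdots=p_m(1)=:\mu\in\sigma(B)$. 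Thus at least $m$ eigenpaths terminate at $\mu$, so $\operatorname{mult}_B(\mu)\ge m$. Running the identical argument along the reversed path $x\mapsto\gamma(1-x)$ starting from $\mu$ — whose family of emanating eigenpaths includes $p_1$ reversed, which terminates at $\lambda$ — yields $\operatorname{mult}_A(\lambda)\ge\operatorname{mult}_B(\mu)$. Hence $\operatorname{mult}_B(\mu)=m$; as $A,B,\lambda$ were arbitrary, $S$ has uniform multiplicity. (Note this direction used only that $S$ is path-connected.)

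\emph{Uniform multiplicity $\Rightarrow$ transitive, using simple connectedness.} Suppose $S$ is simply connected with uniform multiplicity but, for contradiction, is not transitive. By the translation above there is a closed loop $A(x)$ in $S$ with non-transitive image. Applying \cref{NonTrans} to this loop then shows $S$ cannot have uniform multiplicity, a contradiction. Equivalently: simple connectedness lets us contract $A(x)$ to its basepoint through loops, producing a deformation $A(x,y)$ whose image, being a subset of $S$, has uniform multiplicity; the corollary to \cref{HomotopyPairs} then forces the pairing behavior of $A(x,0)=A(x)$ to agree with that of the constant loop $A(x,1)$, which induces no nontrivial pairing — again a contradiction.

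The step I expect to be delicate is the forward direction, specifically two points: verifying that the several eigenpaths leaving a repeated eigenvalue genuinely issue from one and the same point of $ES(S)$ (so that transitivity at $B$ can be invoked on their endpoints), and then carrying out the two-sided multiplicity count along $\gamma$ to promote the inequalities to an equality. The reverse direction is largely a matter of recognizing that any witness to non-transitivity is automatically a based loop, after which \cref{NonTrans} (or \cref{HomotopyPairs}) applies directly.
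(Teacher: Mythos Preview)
Your proof is correct and follows the paper's route: the reverse implication is exactly the paper's argument (invoke \cref{NonTrans} on a witnessing loop, or equivalently contract it via \cref{HomotopyPairs}), and for the forward implication the paper simply writes ``clearly'' while you supply the eigenpath-tracing argument that justifies it. One small remark: your forward direction can be shortened by appealing to Lemma~2.1 --- transitivity forces each $k$-component to meet $\sigma^*(A)$ in a single distinct eigenvalue, and Lemma~2.1 then says that eigenvalue has the same algebraic multiplicity for every $A\in S$ --- which avoids the reversed-path bookkeeping; but what you wrote is sound as stated.
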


\begin{proof}
Clearly, transitivity implies uniform multiplicity. Further, by \cref{NonTrans}, uniform multiplicity implies that all paths in $S$ must have the same pairings, and because the trivial constant paths are transitive, $S$ must be transitive.
\end{proof}
\begin{corollary}
If $S$ is simply connected, and $\mathcal{K}$ is one of its $k$-components, $k>1$, then there must be an eigenvalue with algebraic multiplicity $>1$ in $\mathcal{K}.$
\end{corollary}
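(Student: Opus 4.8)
I would prove this by contradiction, so suppose that no eigenvalue appearing in $\mathcal{K}$ has algebraic multiplicity greater than $1$; I will show this forces $k=1$. The first step is to unpack what the assumption says globally. By the first lemma of Section~2, every $A\in S$ carries exactly $k$ eigenvalues of $\mathcal{K}$ counted with algebraic multiplicity, and under our assumption none of these is repeated in $A$, so in fact every $A\in S$ has exactly $k\ge 2$ \emph{distinct, simple} eigenvalues $\lambda_1(A),\dots,\lambda_k(A)$ whose $\sigma^*$-points lie in $\mathcal{K}$. I would also record two facts that follow from this together with the fact that $\mathcal{K}$ is a \emph{path-component} of $ES(S)$: along any path $A(t)$ in $S$, the $k$ eigenpaths issuing from $\lambda_1(A(0)),\dots,\lambda_k(A(0))$ stay inside $\mathcal{K}$ (an eigenpath lifts to a continuous path in $ES(S)$, hence cannot leave the path-component), and they remain pairwise distinct for every $t$, since a collision would exhibit a matrix $A(t_0)$ with a repeated eigenvalue whose $\sigma^*$-point lies in $\mathcal{K}$.

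Now fix $A\in S$ and two distinct eigenvalues $\lambda:=\lambda_1(A)\ne\lambda_2(A)=:\mu$ of $\mathcal{K}$. Since both lie in the $k$-component $\mathcal{K}$ we have $\lambda\rightsquigarrow\mu$, so there is a closed loop $\gamma$ in $S$ based at $A$ and an eigenpath of $\gamma$ running from $\lambda$ to $\mu$; by the previous paragraph that eigenpath is contained in $\mathcal{K}$. Using that $S$ is simply connected, contract $\gamma$ to the constant loop at $A$ by a based deformation $A(x,y)$, so $A(x,0)=\gamma(x)$, $A(x,1)\equiv A$, and $A(0,y)=A(1,y)=A$. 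For each $y$, consider the eigenpath of the loop $A(\cdot,y)$ starting at $\lambda$; so long as it avoids collisions it is the unique such eigenpath, it remains in $\mathcal{K}$, and its endpoint $g(y):=$ (its value at $x=1$) is an eigenvalue of $A(1,y)=A$. A standard compactness argument together with the continuity of simple eigenvalues (the content of \cref{ForcedPair} is the local version of what is needed) shows that the set of $y$ for which this eigenpath is collision-free is open and that $g$ is continuous on it, hence $g$ is locally constant there because $\sigma(A)$ is finite. If this eigenpath were collision-free for \emph{every} $y\in[0,1]$, then $g$ would be a locally constant function on the connected set $[0,1]$, so $g(0)=g(1)$; but $g(0)=\mu$ (from $\gamma$) and $g(1)=\lambda$ (from the constant loop), a contradiction. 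Hence there is some $y_0$ for which the eigenpath of $A(\cdot,y_0)$ from $\lambda$ meets a collision, i.e., a matrix $A(x_0,y_0)\in S$ with a repeated eigenvalue, and that repeated eigenvalue lies on an eigenpath emanating from $\lambda\in\mathcal{K}$, hence its $\sigma^*$-point lies in $\mathcal{K}$. This contradicts our assumption, so $\mathcal{K}$ must contain an eigenvalue of algebraic multiplicity $>1$.

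Structurally this is a monodromy argument: under the contradiction hypothesis the matrix-projection $\mathcal{K}\to S$ is a $k$-sheeted covering, which must be trivial because $S$ is simply connected, whereas the relation $\lambda\rightsquigarrow\mu$ says the loop $\gamma$ lifts to a path joining two different points of the fibre over $A$. One could instead run the argument through the paper's own machinery: contract $\gamma$ as above and apply \cref{HomotopyPairs} with $(i,j)=(1,2)$ --- the side and top edges are constant at $A$, making the three ``$\rightsquigarrow$'' hypotheses automatic, the bottom edge $\gamma$ supplies the pairing $\lambda\rightsquigarrow\mu$, and the final hypothesis fails because the constant loop cannot pair $\lambda$ with $\mu$ --- and then observe that the repeated eigenvalue produced by its proof (via \cref{split}) sits on an eigenpath both of whose endpoints are in $\mathcal{K}$, hence in $\mathcal{K}$.

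The step I expect to be the main obstacle is precisely the ``$g$ is continuous / the collision-free locus is open'' claim, i.e., turning the intuition ``a simple eigenpath can be followed single-valuedly, and perturbing the path perturbs the endpoint'' into something airtight; this is the one place where one needs more than the bare continuity of the spectrum that the rest of the paper leans on --- namely the perturbation theory of simple (isolated) eigenvalues, which supplies the local single-valued analytic continuation and, via compactness of $[0,1]$, the stability of collision-free eigenpaths under small perturbations of the path. Everything else is bookkeeping with path-components.
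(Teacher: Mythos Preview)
Your proposal is correct. The paper states this corollary without proof, intending it as an immediate consequence of the preceding chain (\cref{HomotopyPairs} $\to$ \cref{NonTrans} $\to$ \cref{SimplyConnectedTrans}), and your argument is essentially the same contraction-of-a-loop idea unpacked by hand: you pick a witnessing loop for $\lambda\rightsquigarrow\mu$, contract it to a point, and watch the endpoint of the $\lambda$-eigenpath. What you add beyond the paper's bare citation is the localization to $\mathcal{K}$: invoking \cref{SimplyConnectedTrans} directly only yields that $S$ fails uniform multiplicity, not that the repeated eigenvalue sits in $\mathcal{K}$. You recover this either by your covering/monodromy bookkeeping (eigenpaths starting in $\mathcal{K}$ stay in $\mathcal{K}$, so the collision is in $\mathcal{K}$) or, as you note in your final paragraph, by tracing through the proofs of \cref{HomotopyPairs} and \cref{split} to see that the collision they produce lies on an eigenpath issuing from $\lambda_i\in\mathcal{K}$. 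The technical point you flag --- continuity of $g$ and openness of the collision-free locus --- is exactly the content the paper packages as local transitivity (\cref{Notches}) together with the local maximality of pairings (\cref{KcompSep}); either the perturbation theory of simple eigenvalues you cite or those two results will close it.
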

\begin{figure}[hbt!] \label{SaddlePoint}
    \centering
    \includegraphics[scale=.4]{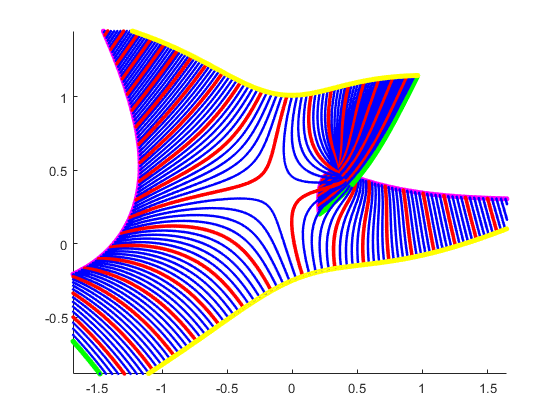}
    \caption{A 3-dimensional representation of an eigen-surface that is not transitive. The ``saddle point" in the center of the figure is indicative of a higher multiplicity eigenvalue.}
\end{figure}

\section{Classifying Eigenvalues}
We now restrict ourselves to the convex-hull of matrices, that is, for matrices $A_1,\dots,A_\ell\in M_n$,
\begin{gather*}
    Co(A_1,\dots,A_k):= \{A\in M_n\ |\ A = \sum_{i=1}^k \alpha_iA_i,\\ \sum_{i=1}^k \alpha_i = 1, \forall i\ \alpha_i \in [0,1] \},
\end{gather*}
which is simply-connected and compact. We will no longer require $A_1,\dots,A_k$ to necessarily be linearly independent, opting instead for the natural coefficients provided by taking the convex hull.
\\
To each matrix $A\in M_n$, we associate to it a multiplicity list
$$
(a_n,\dots,a_1)
$$
where $a_i$ is the number of distinct eigenvalues of $A$ with algebraic multiplicity $i$. If $A,B\in M_n$, then we say that $A \overline{\gtrdot} B$ if the multiplicity list of $A$ is greater than or equal to the multiplicity list of $B$ in the lexicographic ordering, and we omit the bar for strict inequality. In other words, we are taking these $a_i$ and comparing them in descending order with respect to $i$, for example,
$$
\begin{bmatrix}
2 & 0 & 0\\
0 & 2 & 0\\
0 & 0 & 1
\end{bmatrix}
\gtrdot 
\begin{bmatrix}
3 & 0 & 0\\
0 & 2 & 0\\
0 & 0 & 1
\end{bmatrix}.
$$
We start by using methods with the polynomial discriminant to study the multiplicity lists of matrices related by simple paths. We can describe any 
$$
A\in Co(A_1,\dots,A_k)
$$
by its coefficients 
$$
A_{cof}:=(\alpha_1,\dots,\alpha_{k-1})
$$ 
where we only need $k-1$ coefficients as one is determined by the others. Additionally, as there are only finitely many multiplicity lists of length $n$, we can define $\mathcal{C}$, the core of $Co(A_1,\dots,A_k)$ to be the collection of matrices that have the smallest multiplicity lists in the lexicographic ordering. We define $\mathcal{U}:= Co(A_1,\dots,A_k)\setminus \mathcal{C}.$ We call a convex hull regular if the minimal multiplicity list is $(0,0,\dots,0,n)$, so that any matrix $A\in\mathcal{C}$ is simple.
\subsection{Regular Convex Hulls}
We begin with a treatment of a somewhat restricted case as there is a wider variety of tools at our disposal. In particular, we will be able to use the discriminant to identify whether matrices are in $\mathcal{U}$ or $\mathcal{C}$, giving us additional information on the structure of the eigen-surface.
\begin{lemma} \label{ZeroSet}
    If $S$ is regular then there is a holomorphic function $f$ of $k-1$ variables such that $f(A_{cof})=0$ if and only if $A\in\mathcal{U}$.
\end{lemma}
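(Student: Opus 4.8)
The plan is to exhibit $f$ explicitly as the discriminant of the characteristic polynomial, read off in the convex coordinates. Write $\alpha_k := 1 - \alpha_1 - \cdots - \alpha_{k-1}$, so that to each point $(\alpha_1,\dots,\alpha_{k-1})$ we associate the matrix $A = \alpha_1 A_1 + \cdots + \alpha_k A_k$ and its characteristic polynomial $p_A(\lambda) := \det(\lambda I - A)$. Since the entries of $A$ are affine functions of $\alpha_1,\dots,\alpha_{k-1}$, and the coefficients of $p_A$ (as a polynomial in $\lambda$) are fixed integer polynomials in the entries of $A$, each such coefficient is a polynomial in $\alpha_1,\dots,\alpha_{k-1}$; moreover $p_A$ is monic of degree exactly $n$ for every choice of coordinates.

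First I would define
$$
f(\alpha_1,\dots,\alpha_{k-1}) := \mathrm{disc}_\lambda\, p_A = (-1)^{n(n-1)/2}\,\mathrm{Res}_\lambda(p_A, p_A'),
$$
the discriminant of $p_A$ with respect to $\lambda$. Because $p_A$ is monic, $\mathrm{Res}_\lambda(p_A,p_A')$ is a universal integer polynomial in the coefficients of $p_A$ (the Sylvester determinant), with no denominators; composing with the polynomial dependence of those coefficients on the $\alpha_i$ shows that $f$ is a polynomial in $k-1$ variables, hence entire, in particular holomorphic on all of $\CC^{k-1}$.

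Next I would verify the characterization. By the standard property of the discriminant, $f(\alpha_1,\dots,\alpha_{k-1}) = 0$ if and only if $p_A$ and $p_A'$ share a root, i.e.\ if and only if $p_A$ has a repeated root, i.e.\ if and only if $A$ has an eigenvalue of algebraic multiplicity at least $2$. Now I invoke regularity: by hypothesis the minimal multiplicity list on $S$ is $(0,\dots,0,n)$, so the core $\mathcal{C}$ is exactly the set of $A \in S$ all of whose eigenvalues are simple, and $\mathcal{U} = S\setminus\mathcal{C}$ is exactly the set of $A\in S$ with some eigenvalue of multiplicity $\geq 2$. Hence $f(A_{cof}) = 0 \iff A\in\mathcal{U}$. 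One should also observe that when $A_1,\dots,A_k$ are linearly dependent the coordinate vector $A_{cof}$ need not be unique, but this causes no ambiguity: $f$ is by construction a function of the coordinates whose value equals $\mathrm{disc}_\lambda\det(\lambda I - A)$, a quantity depending only on $A$, so any coordinate representative of a fixed $A$ yields the same value of $f$.

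There is no serious obstacle here; the content is really just assembling three standard facts — the characteristic polynomial coefficients are polynomial in the entries, the discriminant is a polynomial (not merely a rational function) in those coefficients because $p_A$ is monic, and the discriminant vanishes exactly at polynomials with repeated roots — and then using regularity to translate ``has a repeated eigenvalue'' into ``lies in $\mathcal{U}$''. The only points deserving care are confirming genuine polynomiality in the $\alpha_i$ (which is why monicity of $p_A$ matters) and making explicit that regularity is precisely what makes $\mathcal{U}$ coincide with the non-simple locus; without it the same $f$ would still detect repeated eigenvalues but would no longer cut out $\mathcal{U}$.
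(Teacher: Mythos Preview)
Your proposal is correct and follows essentially the same approach as the paper: both construct the desired function as the discriminant of the characteristic polynomial, composed with the affine coordinate map $(\alpha_1,\dots,\alpha_{k-1}) \mapsto A$. Your write-up is in fact a bit more careful than the paper's---you make explicit that monicity guarantees polynomiality of the discriminant, that regularity is exactly what identifies $\mathcal{U}$ with the non-simple locus, and that potential non-uniqueness of $A_{cof}$ is harmless---but the underlying argument is identical.
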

\begin{proof}
Let $p :M_n \to \mathbb{P}_n$, $p(A)=\det(A-\lambda I)$ denote the map from a matrix to its characteristic polynomial. This is holomorphic as it is polynomial in the entries of $A$. Further, we can then take $f:\mathbb{R}^{k-1}\to M_n$ which is the map $f(A_{cof}) = A$, which is also holomorphic, as $A$ is defined as the linear combination
\begin{gather*}
    A=\alpha_1A_1+\dots + \alpha_{k-1}A_{k-1}\\
    + (1-\alpha_1- \dots - \alpha_{k-1})A_k.
\end{gather*}
Finally, we define $d:\mathbb{P}_n \to \mathbb{C}$ to be discriminant, where we consider the characteristic polynomial as a polynomial in $\lambda$. This then gives us a polynomial in the coefficients of the input polynomial, thus
$$
d\circ p \circ f
$$
is a polynomial in our coefficients $\alpha_1,\dots,\alpha_{k-1}$. It satisfies the desired properties of our mapping, as it is zero if and only if $A$ has an eigenvalue of multiplicity greater than 1.
\end{proof}

\begin{corollary}
   If $S=Co(A_1,\dots,A_k)$ is regular, then $\mathcal{C}$ is open in the Zariski topology.
\end{corollary}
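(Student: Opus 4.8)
The plan is to read this off immediately from \cref{ZeroSet}. That lemma furnishes the polynomial $g := d\circ p\circ f$ in the coordinates $A_{cof}=(\alpha_1,\dots,\alpha_{k-1})$, namely the discriminant of the characteristic polynomial of $A = \alpha_1 A_1 + \dots + (1-\alpha_1-\dots-\alpha_{k-1})A_k$, with the property that $g(A_{cof}) = 0$ exactly when $A$ has a repeated eigenvalue. Since $S$ is regular, every matrix in $\mathcal{C}$ is simple, so for $A$ in the hull the condition ``$A$ has a repeated eigenvalue'' coincides with ``$A\in\mathcal{U}$''. Hence $\mathcal{U}$ is precisely the set of coefficient tuples of the hull at which $g$ vanishes.

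From here the steps are brief. First, writing $g = g_1 + i g_2$ with $g_1, g_2$ real polynomials in $\alpha_1,\dots,\alpha_{k-1}$, the zero locus $V(g) = \{g_1 = 0\}\cap\{g_2 = 0\}$ is Zariski-closed in $\RR^{k-1}$ by the very definition of the Zariski topology. Second, because $\mathcal{C}$ and $\mathcal{U}$ partition $Co(A_1,\dots,A_k)$ and $\mathcal{U} = Co(A_1,\dots,A_k)\cap V(g)$, we obtain
$$
\mathcal{C} = Co(A_1,\dots,A_k)\setminus V(g) = Co(A_1,\dots,A_k)\cap\bigl(\RR^{k-1}\setminus V(g)\bigr),
$$
the trace on the hull of a Zariski-open set; so $\mathcal{C}$ is open in the subspace Zariski topology on $Co(A_1,\dots,A_k)$. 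Finally I would observe that regularity forces $\mathcal{C}\neq\emptyset$, so $g\not\equiv 0$ and $\mathcal{U}$ is a proper algebraic subset — consistent with the heuristic that repeated eigenvalues are non-generic.

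I do not expect a genuine obstacle here: the content is entirely supplied by \cref{ZeroSet}, and the argument is just unwinding definitions. The single point that needs care is pinning down the ambient space for ``open'': on all of $\RR^{k-1}$ the set $\mathcal{C}$ is not Zariski-open, since it lies inside the bounded convex body $Co(A_1,\dots,A_k)$ whose complement is not an algebraic set, so the statement is (and should be) understood in the relative Zariski topology on the hull, where the description $\{g\neq 0\}\cap Co(A_1,\dots,A_k)$ settles it. A secondary bookkeeping point is that $g$ is a priori $\CC$-valued on a real coordinate space, which is why one splits into real and imaginary parts before invoking Zariski-closedness.
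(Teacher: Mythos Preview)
Your proposal is correct and matches the paper's approach: the paper states this corollary without proof, treating it as an immediate consequence of \cref{ZeroSet}, and your write-up simply unpacks that immediacy (with the added care of splitting $g$ into real and imaginary parts and clarifying that ``open'' means in the relative Zariski topology on the hull). There is nothing to add or correct.
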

This immediately implies the following lemma, but we also offer an elementary proof.
\begin{lemma} \label{Core}
$\mathcal{C}$ is dense in $Co(A_1,\dots,A_k)$. Additionally, it is open in the subspace topology.
\end{lemma}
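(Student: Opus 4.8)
The plan is to pin down $\mathcal{C}$ as the non-vanishing locus of a single continuous (indeed polynomial) function and then read off both assertions. Since $S$ is regular, a matrix $A\in Co(A_1,\dots,A_k)$ lies in $\mathcal{C}$ exactly when it is simple, and by \cref{ZeroSet} this happens precisely when $g(A):=d(p(A))\neq 0$, where $p(A)$ is the characteristic polynomial of $A$ and $d$ its discriminant. The point to record first is that $g$ is a polynomial in the entries of $A$, hence continuous on all of $M_n$, and that $\mathcal{C}\neq\emptyset$, since some matrix in the hull attains the minimal list $(0,\dots,0,n)$.

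Openness is then immediate: $\{A\in M_n\ :\ g(A)\neq 0\}$ is open in $M_n$ by continuity of $g$, and $\mathcal{C}$ is its intersection with $Co(A_1,\dots,A_k)$, so $\mathcal{C}$ is open in the subspace topology. (This is of course also exactly what the preceding corollary yields, since a Zariski-open set is Euclidean-open.)

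For density I would fix an arbitrary $A\in Co(A_1,\dots,A_k)$ and assume $A\in\mathcal{U}$, i.e. $g(A)=0$, since otherwise there is nothing to prove. Pick any $A_0\in\mathcal{C}$ and consider the segment $A(t)=(1-t)A+tA_0$, which stays in $Co(A_1,\dots,A_k)$ for $t\in[0,1]$ by convexity. The entries of $A(t)$ are affine in $t$, so $h(t):=g(A(t))$ is a polynomial in the single variable $t$; moreover $h(1)=g(A_0)\neq 0$, so $h$ is not the zero polynomial and therefore vanishes at only finitely many points of $[0,1]$. Hence there is a sequence $t_m\downarrow 0$ with $h(t_m)\neq 0$, that is, $A(t_m)\in\mathcal{C}$, and $A(t_m)\to A(0)=A$. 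Thus $A\in\overline{\mathcal{C}}$, which proves density.

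The only step requiring genuine care — and the closest thing to an obstacle — is the identification $\mathcal{C}=\{g\neq 0\}$: this is precisely where regularity enters, guaranteeing both that the minimal multiplicity list is ``all eigenvalues simple'' (so that the discriminant, rather than some higher-order vanishing condition, detects membership in $\mathcal{C}$) and that $\mathcal{C}$ is nonempty, without which the segment argument would have no simple endpoint $A_0$ to anchor it. Everything else is standard: polynomials are continuous, and a nonzero one-variable polynomial has only finitely many roots.
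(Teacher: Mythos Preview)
Your proof is correct and follows essentially the same line as the paper's: both identify $\mathcal{C}$ with the non-vanishing locus of the discriminant function from \cref{ZeroSet}, then restrict to the line segment from an arbitrary $A$ to a chosen point of $\mathcal{C}$ and use that a nonzero one-variable polynomial (analytic function, in the paper's phrasing) cannot vanish on a whole interval. Your direct argument via finitely many roots is marginally cleaner than the paper's contradiction-plus-identity-theorem formulation, but the substance is the same.
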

\begin{proof}
The fact that it is open is clear by the continuity of the spectrum. That is, the algebraic multiplicity of the eigenvalues of a matrix $C\in\mathcal{C}$ cannot increase under a small enough perturbation.
\\
To show that it is dense, it is sufficient to show that for arbitrary $A\in S$, for all $\varepsilon>0$, there exists $C\in\mathcal{C} \cap \textbf{B}_\varepsilon(A)$. We argue by contradiction. Suppose not, then there exists $\textbf{B}_\varepsilon(A)$ such that for all $B\in \textbf{B}_\varepsilon(A),$ $B\gtrdot C$ for any $C\in \mathcal{C}$. If we take the line segment $A(x)$ from $A$ to $C$, we see that $f(A(x)_{cof})\equiv 0$ in this neighborhood, but it is analytic so it must be that $f(A(x)_{cof}) \equiv 0$ for all $x\in [0,1]$. But then $A(x)\in \mathcal{U}$ for all $x$, giving us a contradiction as $C\in \mathcal{C}$.
\end{proof}
\begin{lemma}
   $Co(A_1,\dots,A_k)$ is almost {\small everywhere} locally transitive. Further, we can partition all but an arbitrarily small subset of $Co(A_1,\dots,A_k)$ into finitely many transitive, simply connected sets. 
\end{lemma}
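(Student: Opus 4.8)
The plan is to read everything off the core $\mathcal{C}$, which for a regular hull (the standing assumption of this subsection) is exactly the set of matrices with simple spectrum; by \cref{Core} it is open and dense in $Co(A_1,\dots,A_k)$, and by \cref{ZeroSet} its complement $\mathcal{U}$ is the zero set of the analytic (indeed polynomial) function $f$ of that lemma, which is not identically zero since $\mathcal{C}\neq\emptyset$. Hence $\mathcal{U}$ has Lebesgue measure zero in the coefficient space $\mathbb{R}^{k-1}$. For the first assertion, fix $A\in\mathcal{C}$; since $\mathcal{C}$ is open in $Co(A_1,\dots,A_k)$ there is $\delta>0$ with $\textbf{B}_\delta(A)\cap Co(A_1,\dots,A_k)\subseteq\mathcal{C}$. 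This neighborhood of $A$ is an intersection of convex sets, hence convex and in particular simply connected, and it has uniform multiplicity because every matrix in it is simple, so it is transitive by \cref{SimplyConnectedTrans}. Thus $Co(A_1,\dots,A_k)$ is locally transitive at every point of $\mathcal{C}$, that is, off the null set $\mathcal{U}$.

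For the partition, fix $\varepsilon>0$; if $\mathcal{U}=\emptyset$ then $Co(A_1,\dots,A_k)$ is itself convex, simply connected and of uniform multiplicity, hence transitive by \cref{SimplyConnectedTrans}, and we are done, so assume $\mathcal{U}\neq\emptyset$. Overlay on $\mathbb{R}^{k-1}$ a grid of half-open cubes of side $h$: these are convex (so simply connected), pairwise disjoint, they tile the whole space, and only finitely many of them meet the bounded set $Co(A_1,\dots,A_k)$. Call a cube $Q$ \emph{good} when $Q\cap Co(A_1,\dots,A_k)\cap\mathcal{U}=\emptyset$. For a good $Q$, the piece $Q\cap Co(A_1,\dots,A_k)$ is convex, hence simply connected, and lies inside $\mathcal{C}$, so it has uniform multiplicity and is transitive by \cref{SimplyConnectedTrans}. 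The collection of nonempty sets $Q\cap Co(A_1,\dots,A_k)$ over good cubes $Q$ is therefore a finite family of pairwise disjoint, transitive, simply connected sets whose union is $Co(A_1,\dots,A_k)\setminus R$, where $R$ is the union of the pieces coming from the bad cubes.

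It remains to make $R$ small by choosing $h$. The set $\mathcal{U}\cap Co(A_1,\dots,A_k)$ is closed in the compact set $Co(A_1,\dots,A_k)$, hence compact, and has measure zero, so by outer regularity of Lebesgue measure there is an open $V$ containing it with $\mu(V)<\varepsilon$. Since $\mathcal{U}\cap Co(A_1,\dots,A_k)$ is compact and disjoint from the closed set $V^{c}$, their distance $\rho$ is positive, and choosing $h$ with $h\sqrt{k-1}<\rho$ (so that every cube has diameter $<\rho$) forces every bad cube to lie inside $V$; hence $\mu(R)\le\mu(V)<\varepsilon$, which completes the argument.

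The one genuinely delicate point, I expect, is obtaining an honest \emph{partition} rather than merely a cover into simply connected transitive pieces: a subset of a transitive set is again transitive, but an arbitrary subset of a simply connected set need not be simply connected, so one cannot just refine an open cover of $\mathcal{C}$ by transitive balls. The half-open cubical grid resolves this cleanly — its cells are at once convex (hence simply connected), pairwise disjoint, and space-filling — after which only a routine compactness and outer-regularity estimate is needed to make the discarded set $R$ arbitrarily small.
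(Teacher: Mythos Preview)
Your proof is correct, and it takes a different route from the paper's. The paper's argument for the second assertion covers $Co(A_1,\dots,A_k)$ by open balls $\textbf{B}_{\varepsilon_C}(C)\subseteq\mathcal{C}$ together with a single $\delta$-neighborhood $\mathcal{U}_\delta$ of $\mathcal{U}$, extracts a finite subcover by compactness, and then asserts that ``we can form a partition of the convex hull by taking unions and intersections of these sets as necessary.'' Your half-open cubical grid instead produces a genuine partition from the outset, with each retained piece $Q\cap Co(A_1,\dots,A_k)$ convex as an intersection of two convex sets; this sidesteps exactly the issue you flag in your final paragraph, namely that refining a cover by balls into a partition via set-theoretic operations typically destroys convexity and need not preserve simple connectivity. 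Your outer-regularity and compactness estimate to control $\mu(R)$ is also more explicit than the paper's use of the single parameter $\delta$. The trade-off is that the paper's cover-then-refine sketch is shorter, whereas your grid construction is longer but leaves nothing to the reader on the simple-connectivity point.
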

\begin{proof}
The first statement follows from the zero set of a multi-variable analytic function being negligible, so that $Co(A_1,\dots,A_k)=\mathcal{C}$ almost everywhere. For the second statement, by \cref{Core}, for each matrix $C\in \mathcal{C}$, we take $\varepsilon>0$ such that $\textbf{B}_\varepsilon (\mathcal{C}) \cap \mathcal{U} = \emptyset$. Then, for arbitrary $\delta>0$, we can take the $\delta$-neighborhood $\mathcal{U}_\delta$ of $\mathcal{U}$. These open sets form a cover of $Co(A_1,\dots, A_k)$, and because it is compact, we can take a finite subcover. All of the sets except for $\mathcal{U}_\delta$ are transitive by \cref{SimplyConnectedTrans}, and it is clear that we can form a partition of the convex hull by taking unions and intersections of these sets as necessary.
\end{proof}
We now will recall some critical past results in perturbation theory. From \cite{Kato}, and \cite{Knopp}, we see that in the case where
$
S=Co(A_1,\dots,A_k),
$
and $S$ has uniform multiplicity, then we can not only parameterize our set by the holomorphic function
$$
A(\alpha_1,\dots,\alpha_k) = \alpha_1A_1 + \dots + \alpha_k A_k
$$
but we also have that there are no so-called \textit{exceptional points} where two initially distinct eigenvalues combine to form an eigenvalue of higher algebraic multiplicity. Thus we can parameterize our eigenvalues as holomorphic functions 
$$
\lambda(\alpha_1,\dots,\alpha_k),
$$
so that $S$ and each $\mathcal{K}$ are conformal in this case.
\begin{corollary}
$ES(\mathcal{C})$ is a smooth manifold with boundary.
\end{corollary}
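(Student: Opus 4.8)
The plan is to realize $ES(\mathcal{C})$ locally as a graph over the coefficient space and then pull the manifold-with-boundary structure of $\mathcal{C}$ back through these graphs. The engine is that on $\mathcal{C}$ every matrix is simple, so $\mathcal{C}$ has uniform multiplicity and the perturbation-theoretic facts recalled above apply: near any point of $\mathcal{C}$ each eigenvalue is a holomorphic, hence smooth, function of $A_{cof}=(\alpha_1,\dots,\alpha_{k-1})$.

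First I would fix $A_0\in\mathcal{C}$ with distinct eigenvalues $\mu_1,\dots,\mu_n$. Since each $\mu_m$ is a simple root in $\lambda$ of $\det(A(\alpha)-\lambda I)$, the holomorphic implicit function theorem (equivalently the recalled results of \cite{Kato,Knopp}) gives a neighborhood $V$ of $(A_0)_{cof}$ in the coefficient space with $V\subseteq\mathcal{C}$ and holomorphic $\lambda_1,\dots,\lambda_n:V\to\CC$ that are pairwise distinct and satisfy $\sigma(A(\alpha))=\{\lambda_1(\alpha),\dots,\lambda_n(\alpha)\}$ on $V$. Then
$$
ES(\mathcal{C})\cap\big(V\times\CC\big)=\bigsqcup_{m=1}^{n}\{(\alpha,\lambda_m(\alpha)):\alpha\in V\},
$$
and I would take as chart on each of these $n$ graphs the projection $(\alpha,\lambda)\mapsto\alpha$, a homeomorphism onto $V$.

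Next I would verify the transitions. On a connected piece of the overlap of two such charts, associated to sheets $\lambda_m^{(0)}$ and $\lambda_{m'}^{(1)}$, the corresponding piece of $ES(\mathcal{C})$ is simultaneously the graph of $\lambda_m^{(0)}$ and of $\lambda_{m'}^{(1)}$; because on a connected set both families enumerate $\sigma(A(\alpha))$ by distinct continuous functions, continuity forces $\lambda_m^{(0)}\equiv\lambda_{m'}^{(1)}$ there, so the change of chart is the identity in $\alpha$ and is $C^\infty$. Thus the graph charts form a smooth atlas on $ES(\mathcal{C})$ modeled on open subsets of the coefficient space of $Co(A_1,\dots,A_k)$; since that coefficient space is the simplex $\Delta^{k-1}$ and $\mathcal{C}$ is open in it by \cref{Core}, this atlas exhibits $ES(\mathcal{C})$ as a smooth manifold with boundary, the boundary lying over $\partial\,Co(A_1,\dots,A_k)$. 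Hausdorffness and second countability come for free from the ambient $\RR^{k-1}\times\CC$.

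The step I expect to be delicate is the boundary bookkeeping. The simplex is strictly a manifold with corners, and $\mathcal{C}$ may well contain some of its corner points (e.g.\ a vertex $A_i$ that is simple), so to land cleanly on ``manifold with boundary'' one should either restrict attention to the codimension-one stratum of $\partial\,Co(A_1,\dots,A_k)$ and treat lower faces separately, or invoke that a convex body is a topological manifold with boundary that is smooth off its corners. One must also note that each $\lambda_m$ extends holomorphically to a full neighborhood in the affine span of the simplex---true since $A(\alpha)$ is entire in $\alpha$ and simplicity of the spectrum is an open condition in $\RR^{k-1}$---so that truncation to the simplex genuinely yields boundary charts rather than something worse. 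Apart from this, the only real content is the sheet-matching argument, which is precisely where distinctness of eigenvalues on $\mathcal{C}$ is used.
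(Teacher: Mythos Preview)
Your proposal is correct and follows the same route the paper takes: the corollary is stated there without its own proof, as an immediate consequence of the preceding paragraph invoking \cite{Kato,Knopp} to parameterize the eigenvalues over $\mathcal{C}$ as holomorphic functions of the convex coefficients, which is exactly your graph-chart construction made explicit. Your caveat about corners of the simplex is well taken and is a point the paper does not address; otherwise you are simply filling in the details the paper leaves implicit.
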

We have now identified a very regular and well-behaved portion of the eigen-surface, but we still need to address $\mathcal{U}$. We classify this collection by the local transitivity properties of each matrix.\\
For a matrix $A\in \mathcal{U}$, $A$ is locally non-transitive about $\lambda(A)$ if, for any $\varepsilon>0$, $B\in \textbf{B}_\varepsilon(A)\cap \mathcal{C}$, if
$$
\lambda_i(B),\lambda_j(B)\rightsquigarrow \lambda(A)
$$
then $\lambda_i(B)\rightsquigarrow \lambda_j(B)$ in $\textbf{B}_\varepsilon(A)\cap \mathcal{C}$. What this definition signifies is that there is locally a path in $\mathcal{C}$ which connects all of the eigenvalues which are path-connected through $A$. We say these eigenvalues collide at $A$.
\\
What is key to note in this definition is that $\mathcal{C}$ itself will not always be transitive. As was shown in corollary \ref{SimplyConnectedTrans}, a sufficient condition for that to occur is for $\mathcal{C}$ to be simply connected.

\begin{figure}[hbt!] \label{SaddlePoint}
    \centering
    \includegraphics[scale=.4]{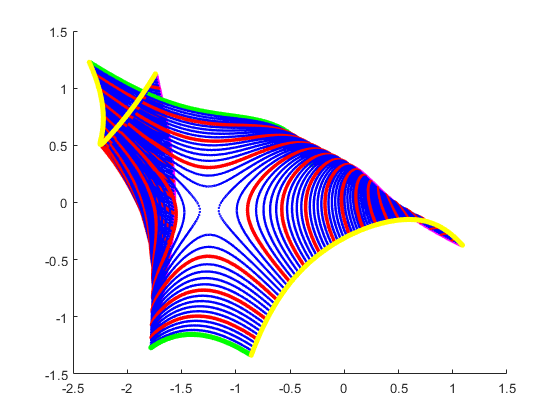}
    \caption{An example of a $Co(A_1,A_2,A_3)$ where $\mathcal{U}$ is a single matrix that is locally non-transitive.}
\end{figure}

\begin{lemma}
If $A$ is locally non-transitive about $\lambda(A)$ then there exists $\varepsilon>0$ such that $$ \textbf{B}_\varepsilon(\lambda(A))\cap ES(Co(A_1,\dots,A_k))$$ is topologically a manifold. In particular, it is homeomorphic to 
$$
\textbf{B}_\varepsilon(A)\cap Co(A_1,\dots,A_k).
$$
\end{lemma}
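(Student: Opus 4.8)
The plan is to pin down the local structure of $ES(Co(A_1,\dots,A_k))$ near $\lambda(A)$ using analytic perturbation theory, recognise that structure as a branched cover of $\textbf{B}_\varepsilon(A)\cap Co(A_1,\dots,A_k)$, and then use the local-non-transitivity hypothesis to force that cover into a form visibly homeomorphic to the base.

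Let $\lambda\in\sigma(A)$ be the eigenvalue underlying $\lambda(A)$ and $m$ its algebraic multiplicity. If $m=1$ there is nothing to prove: continuity — indeed analyticity — of the spectrum produces a single eigenvalue $\mu(B)$ near $\lambda$ depending continuously on $B$, so near $\lambda(A)$ the eigen-surface is the graph of $\mu$ over $\textbf{B}_\varepsilon(A)\cap Co(A_1,\dots,A_k)$ and the coordinate projection is the desired homeomorphism. So assume $m>1$. By the Riesz-projection formalism (\cite{Kato}), on a neighbourhood of $A$ the characteristic polynomial factors analytically as $p_B(\mu)=q_B(\mu)r_B(\mu)$ with $q_B$ monic of degree $m$ in $\mu$, $r_B(\lambda)\neq 0$, and the coefficients of $q_B$ real-analytic in $A_{cof}$; the roots of $q_B$ are exactly the eigenvalues of $B$ paired with $\lambda(A)$ for $B$ near $A$, and by regularity together with \cref{Core} they are distinct on a dense set. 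Shrinking $\varepsilon$, I would identify $\textbf{B}_\varepsilon(\lambda(A))\cap ES(Co(A_1,\dots,A_k))$ with $Z:=\{(B,\mu):B\in X,\ q_B(\mu)=0\}$, where $X:=\textbf{B}_\varepsilon(A)\cap Co(A_1,\dots,A_k)$, and let $\pi\colon Z\to X$ be the coordinate projection: it is proper, surjective, and an honest $m$-sheeted covering away from the nowhere-dense analytic set $\mathcal V:=\{B\in X:\operatorname{disc}_\mu q_B=0\}\subseteq\mathcal U$.

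The hypothesis then translates cleanly: $A$ is locally non-transitive about $\lambda(A)$ precisely when the $m$ roots of $q_B$ all lie in one path-component of $Z\setminus\pi^{-1}(\mathcal V)$, i.e. when that covering is connected; equivalently the monodromy acts transitively on the $m$ sheets, so the eigenvalues near $\lambda$ form a single Puiseux cycle. Working over a generic complex line through $A$, this cycle gives a convergent Puiseux series $\phi$ with $\mu=\phi(t^{1/m})$, and $s\mapsto$ (matrix at parameter $s^m$, together with the eigenvalue $\phi(s)$) is a homeomorphism onto the slice of $ES$ over that line. Globalising over $X$ via the multivariable version of this normal form (Weierstrass preparation plus the normalization of the analytic hypersurface cut out by $q$), I would produce a continuous surjection $\Phi\colon X'\to Z$ from a model base $X'$ homeomorphic to $X$, argue it is injective because transitivity of the monodromy forbids distinct sheets from meeting over $\mathcal V$ except where intended, and argue it is open; this exhibits $Z$ as a topological manifold — with boundary exactly along the part of $X$ lying on $\partial Co(A_1,\dots,A_k)$ — homeomorphic to $\textbf{B}_\varepsilon(A)\cap Co(A_1,\dots,A_k)$ after a harmless rescaling of $\varepsilon$.

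The main obstacle is this last globalisation step: upgrading ``the covering $Z\setminus\pi^{-1}(\mathcal V)\to X\setminus\mathcal V$ is connected'' to ``$Z$ is a manifold homeomorphic to $X$''. Connectedness of a finite branched cover of a ball over a thin set is not by itself enough — cone-type singularities can appear in the branch fibre — so the argument must lean on the analytic rigidity supplied by perturbation theory (the Puiseux/Weierstrass normal form, and the single-cycle structure that transitivity gives) rather than on purely topological input, to rule those degenerations out; making that precise, and checking that the real-parameter slice of the normalization really is a homeomorphism, is where the work concentrates. Everything else — matching up neighbourhoods, the two radii $\varepsilon$, and the continuity of the spectrum — is routine bookkeeping.
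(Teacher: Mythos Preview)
Your approach is correct in spirit and shares the paper's key translation of the hypothesis --- ``locally non-transitive'' becomes ``the $m$ eigenvalues near $\lambda$ form a single connected sheet over the core $\mathcal{C}$,'' i.e.\ the monodromy of the branched cover $\pi\colon Z\to X$ acts transitively --- but the route you take from that point is genuinely different from the paper's.

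The paper does not invoke Riesz projections, Puiseux cycles, Weierstrass preparation, or normalization at all. Instead it stays internal: having obtained a closed loop $B(x)$ in $\mathcal{C}\cap\textbf{B}_\varepsilon(A)$ whose eigenpath visits all $m$ colliding eigenvalues, it appeals to \cref{ZeroSet} to stabilize the homotopy type of $\mathcal{C}\cap\textbf{B}_\varepsilon(A)$ and then to \cref{HomotopyPairs} to argue that this single eigenpath extends to a consistent continuous parameterization of the eigen-surface over the whole punctured neighbourhood, which is then filled in over $\mathcal{U}$ by continuity of the spectrum. So the paper's key lemma is its own deformation theorem from Section~3, not any piece of analytic perturbation theory.

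What each buys: the paper's argument is short and self-contained, but the passage ``thus by \cref{HomotopyPairs} we can continuously parameterize all eigenvalues realized by this set in this way'' is doing exactly the work you flag as the main obstacle --- it asserts, without construction, that homotopy-invariance of pairings upgrades a single cycling eigenpath to a global chart. Your route via the analytic normal form is heavier but is the honest way to actually build the homeomorphism $X'\to Z$ and to see why transitive monodromy rules out cone-type branch singularities; the one-variable Puiseux picture you sketch is precisely the mechanism the paper is implicitly relying on. Your acknowledged globalization gap is real, but it is the same gap the paper's proof leaves open, just named rather than elided.
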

\begin{proof}
It suffices to show that we can parameterize these eigenvalues as a single continuous function. By hypothesis, we can take a closed path $B(x)\in \textbf{B}_\varepsilon(A)\cap \mathcal{C}$ for all $x$, where for $B:=B(0)$ there is an eigenpath $\lambda(B(x))$ that achieves all eigenvalues which collide at $\lambda(A).$ By \\\cref{ZeroSet}, we see that for sufficiently small $\varepsilon>0$, $\mathcal{C}\cap\textbf{B}_\varepsilon (A)$ forms the same homotopy equivalence class, thus by \cref{HomotopyPairs} we can continuously parameterize all eigenvalues realized by this set in this way. We can extend this function continuously to $\lambda(A)$ and the other matrices in $\mathcal{U}\cap \textbf{B}_\varepsilon(A)$.
\end{proof}

We observe that in this case, our eigenvalues will no longer be holomorphic, as described in \cite{Kato}.
\\
This gives us a sufficient condition for a matrix $A\in \mathcal{U}$ to not significantly disrupt the structure of the eigen-surface, preserving the topological structure of $ES(\mathcal{C}).$ To better understand the remaining eigenvalues, it helps to look more closely at the zero set of this polynomial.

\begin{lemma}
        Generically, $\mathcal{U}$ will be no more than codimension 2 in $ES(Co(A_1,\dots,A_k))$.
\end{lemma}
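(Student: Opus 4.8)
The plan is to unwind the statement using \cref{ZeroSet}, which exhibits $\mathcal{U}$ as the zero set of a polynomial, combine this with the classical fact that matrices with a repeated eigenvalue form a hypersurface in $M_n$, and then transport the resulting codimension bound from the parameter simplex to the eigen-surface.

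First I would fix the meaning of ``$\mathcal{U}$ in $ES(Co(A_1,\dots,A_k))$'' as the portion $ES(\mathcal{U})$ of the eigen-surface lying over $\mathcal{U}$. Since every matrix has an eigenvalue, the projection $ES(\mathcal{U}) \to \mathcal{U}$ is surjective with fibres of size at most $n$, and such finite surjections of (semi)analytic sets preserve dimension; hence $\dim ES(\mathcal{U}) = \dim \mathcal{U}$ and, for the same reason over the dense open set $\mathcal{C}$, $\dim ES(Co(A_1,\dots,A_k)) = \dim Co(A_1,\dots,A_k) = k-1$. So it suffices to show that, for a generic choice of the spanning matrices, $\mathcal{U}$ has codimension at most $2$ inside the $(k-1)$-dimensional simplex $Co(A_1,\dots,A_k)$.

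Write $H$ for the real affine span of $A_1,\dots,A_k$ inside $M_n$ and $\Sigma = \{B \in M_n : \operatorname{disc}(\det(B-\lambda I)) = 0\}$ for the locus of matrices with a repeated eigenvalue; by \cref{ZeroSet} (equivalently, by the definition of $d\circ p\circ f$) we have $\mathcal{U} = Co(A_1,\dots,A_k)\cap H\cap \Sigma$. The key input is that $\Sigma$ is the zero set of a single nonzero polynomial on $M_n\cong\CC^{n^2}$ — nonzero because matrices with $n$ distinct eigenvalues exist — so it is a hypersurface of complex codimension $1$, i.e. real codimension $2$, whose singular locus (matrices with more coincidences than a single Jordan block of size $2$) has strictly larger codimension. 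As $A_1,\dots,A_k$ range over generic $k$-tuples, $H$ ranges over generic real affine $(k-1)$-planes in $M_n\cong\RR^{2n^2}$, and a generic affine plane is transverse to the fixed, stratified real-analytic set $\Sigma$ — the standard consequence of Sard's theorem applied to the translation family $\{H+B : B\in M_n\}$. Transversality to the top (smooth) stratum forces $H\cap\Sigma$, where nonempty, to have codimension exactly $2$ in $H$, while transversality to the lower strata keeps their contributions of higher codimension; when $k-1<2$ the generic plane simply misses $\Sigma$, so $\mathcal{U}=\emptyset$. In every case $\mathcal{U}$ has codimension at most $2$ in $H\cong Co(A_1,\dots,A_k)$, and transporting back along the finite projection gives the claim.

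I expect the genericity step to carry essentially all the weight: one needs a Whitney stratification of $\Sigma$ together with the statement that a generic affine $(k-1)$-plane is simultaneously transverse to every stratum, so that $H\cap\Sigma$ inherits the stratum-by-stratum codimensions and in particular acquires no component of codimension exceeding $2$ wherever it meets $\Sigma$ at all; one should also be careful that ``codimension'' is being measured as a (Hausdorff, or top-stratum) dimension for these possibly singular sets. As a sanity check, in the real-entries case the locus $\Sigma\cap M_n(\RR)$ has real codimension $1$, so there $\mathcal{U}$ is generically codimension $1$; the asserted bound of $2$ holds a fortiori.
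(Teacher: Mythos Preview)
Your argument is correct and yields codimension exactly $2$ (where nonempty) for a generic affine slice, but it takes a genuinely different route from the paper. The paper never leaves the parameter space $\RR^{k-1}$ and never invokes transversality or stratification: it simply observes that the discriminant $d(\alpha)$ from \cref{ZeroSet} is a complex-valued polynomial in the \emph{real} variables $\alpha_1,\dots,\alpha_{k-1}$, splits it as $d=p+iq$ with $p,q\in\RR[\alpha]$, and notes that $\mathcal{U}=\{p=0\}\cap\{q=0\}$ is the intersection of two real hypersurfaces, which has codimension $\ge 2$ unless $p$ and $q$ share a common irreducible factor --- the non-generic situation. Your picture (fix the discriminant locus $\Sigma\subset M_n$ once and for all, then slice by a generic real affine $(k-1)$-plane and appeal to parametric transversality across a Whitney stratification) is more geometric and gives equality rather than an inequality, at the cost of heavier machinery; the paper's argument is a two-line polynomial computation.

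One caution on interpretation: you read ``no more than codimension $2$'' as $\operatorname{codim}\le 2$ and phrase your conclusion accordingly, but the way the paper uses the lemma immediately afterward (to conclude $\dim\mathcal{U}\le k-3$) indicates that the intended content is $\operatorname{codim}\ge 2$. Your transversality argument delivers exactly $2$ and so proves both readings, but your closing remark about the real-entry case satisfying the bound ``a fortiori'' (where $\mathcal{U}$ has codimension $1$) only makes sense under the $\le 2$ reading and would in fact be a counterexample under the intended $\ge 2$ reading --- which is fine, since the lemma is stated for general complex matrices, not real ones.
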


\begin{proof}
Recall that $\mathcal{U}$ is essentially the zero set of a polynomial $d$ in the variables $\alpha=(\alpha_1,\dots,\alpha_{k-1})$. These inputs are strictly real, while $d$ has complex coefficients. Thus we can separate $d$ into real and complex parts
\begin{gather*}
d(\alpha) = p(\alpha) + i q(\alpha),\\
p(\alpha):= \operatorname{Re}(d(\alpha)),\quad q(\alpha):=\operatorname{Im}(d(\alpha)),
\end{gather*}
where each function is polynomial. $d(\alpha)=0$ if and only if $p(\alpha) = q(\alpha) = 0.$ If $p$ and $q$ are non-constant polynomials, then each has zero set of (real) codimension 1, and the intersection of their zero sets has codimension 1 if and only if $p$ and $q$ share a factor in $\mathbb{C}[\alpha]$.
\end{proof}

\begin{theorem}
Let $Co(A_1,\dots,A_k)$ be regular. If $A\in \mathcal{U}$ is not locally non-transitive about $\lambda(A)$, where $\lambda(A)$ has multiplicity $>1$, and $\mathcal{U}$ has codimension 2 or fewer, then there is no $\varepsilon>0$ such that 
$$ 
\textbf{B}_\varepsilon(\lambda(A))\cap ES(Co(A_1,\dots,A_k))
$$ 
is topologically a manifold. 
\end{theorem}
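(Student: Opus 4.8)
The plan is to argue by contradiction: suppose some $\varepsilon>0$ makes
$$W:=\textbf{B}_\varepsilon(\lambda(A))\cap ES(Co(A_1,\dots,A_k))$$
a topological manifold, and show that $W$ cannot be locally Euclidean at $\lambda(A)$. First I would fix the dimension. Since $\mathcal{C}$ is open and dense (\cref{Core}) and every matrix in $\mathcal{C}$ is simple, $ES(\mathcal{C})$ is a manifold (with boundary) of dimension $d:=\dim Co(A_1,\dots,A_k)$ and is open and dense in $ES(Co(A_1,\dots,A_k))$; hence $W\cap ES(\mathcal{C})$ is a nonempty open subset of $W$ that is a $d$--manifold, and invariance of domain forces $\dim W=d$. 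Shrinking $\varepsilon$ is harmless, an open subset of a manifold being a manifold, so I may take $\varepsilon$ as small as needed. Writing $m>1$ for the algebraic multiplicity of $\lambda(A)$, the points of $W$ over matrices near $A$ are then exactly the $m$ eigenvalues of the cluster deforming out of $\lambda(A)$, and this cluster collapses to the single point $\lambda(A)$ over $A$ itself.

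Next I would read the hypotheses off the topology of $W$. Write $W\cap ES(\mathcal{C})=\bigsqcup_{i=1}^{r}W_i$ for its path--components; these are the orbits of the local monodromy acting on the $m$--element cluster. As the cluster collapses at $A$, each $\overline{W_i}$ contains $\lambda(A)$, so $W=\bigcup_i\overline{W_i}$ is connected and equals the closure of $W\cap ES(\mathcal{C})$, while $\overline{W_i}\cap\overline{W_j}\subseteq ES(\mathcal{U})$ for $i\neq j$. The assumption that $A$ is \emph{not} locally non--transitive about $\lambda(A)$ says precisely that this monodromy action is not transitive, i.e.\ $r\geq2$ -- which is exactly why the word ``transitive'' was chosen. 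The endgame then splits on $\dim\mathcal{U}$. In the generic case, where $\mathcal{U}$ has codimension $2$, the fibres of $ES\to Co(A_1,\dots,A_k)$ are finite, so $ES(\mathcal{U})\cap W$ is a closed semialgebraic subset of dimension $\leq d-2$; removing such a tame subset of codimension $\geq2$ from the connected $d$--manifold $W$ would leave it connected, contradicting $W\setminus ES(\mathcal{U})=\bigsqcup_{i=1}^{r}W_i$ with $r\geq2$. When $\mathcal{U}$ has codimension $1$ this separation argument fails, and I would instead invoke the square--root normal form: a branch of $\mathcal{U}$ through $A$ locally separates a ball about $A$, a matrix of $\mathcal{U}$ at a generic collision has a double eigenvalue, and two colliding sheets meet $\mathcal{U}$ like $\mu\approx\lambda_0\pm\sqrt{t}$; carried right up to $A$, where all $m$ cluster sheets collapse, this exhibits inside $W$ near $\lambda(A)$ a set homeomorphic to $\RR^{d-1}\times(\text{two lines crossing at a point})$, which is not locally Euclidean. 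Equivalently, the two transversally crossing $d$--cells make $H_d(W,W\setminus\{\lambda(A)\};\ZZ/2)$ have rank $\geq2$, whereas a $d$--manifold (with or without boundary) forces rank $\leq1$. One further notes that in this codimension--$1$ situation the ``not locally non--transitive'' hypothesis is automatic once $m>1$, since a contractible piece of a ball cut along a hypersurface carries only trivial covers.

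The hard part will be the middle step: converting ``(not) locally non--transitive about $\lambda(A)$'' into the clean monodromy statement ``$r\geq2$ and each $\overline{W_i}\ni\lambda(A)$.'' I would need to check that the $\varepsilon$--quantifiers in the definition cooperate -- shrinking $\varepsilon$ only breaks the cluster cover into more pieces, never fewer -- that the eigenvalues called ``colliding'' in the definition are exactly the $m$ cluster eigenvalues, which uses $m>1$ so that the straight segment from a nearby $B$ to $A$ already pairs each cluster eigenvalue with $\lambda(A)$, and that each orbit closure reaches $\lambda(A)$ even in the codimension--$1$ case where $\mathcal{C}$ near $A$ is itself disconnected. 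Once that local picture is secured, the codimension--$2$ case is just the fact that a closed subset of topological dimension $\leq d-2$ cannot locally disconnect a $d$--manifold, and the codimension--$1$ case is the classical branch--point normal form for eigenvalues crossing a discriminant hypersurface; a little surprisingly, the latter turns out to be the more elementary of the two.
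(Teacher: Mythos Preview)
Your codimension-$2$ branch is precisely the paper's argument, carried out with more care. The paper's entire proof reads: $\mathcal{U}$ has dimension at most $k-3$ while $Co(A_1,\dots,A_k)$ has dimension $k-1$; by the hypothesis (not locally non-transitive) the set $\textbf{B}_\varepsilon(\lambda(A))\cap ES(Co(A_1,\dots,A_k))\setminus ES(\mathcal{U})$ is locally disconnected; but removing a set of codimension $\geq 2$ cannot disconnect a $(k-1)$-manifold. That is exactly your disconnection step, and your additional work --- fixing $\dim W=d$ via invariance of domain, and translating ``not locally non-transitive'' into ``the monodromy on the $m$-cluster has $r\geq 2$ orbits'' --- supplies detail that the paper leaves implicit.

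Where you diverge from the paper is in reading the hypothesis. The phrase ``$\mathcal{U}$ has codimension $2$ or fewer'' is awkward, and you (reasonably) took it to mean $\operatorname{codim}\mathcal{U}\in\{1,2\}$, which is why you built a separate codimension-$1$ branch invoking the square-root normal form and a local-homology count. The paper's own proof, however, opens with ``By assumption, $\mathcal{U}$ is a manifold with boundary of at most $k-3$ real dimensions,'' i.e.\ $\operatorname{codim}\mathcal{U}\geq 2$; the preceding lemma (``generically, $\mathcal{U}$ will be no more than codimension $2$'') is phrased the same odd way but its proof likewise yields codimension $\geq 2$. So the intended hypothesis is $\operatorname{codim}\mathcal{U}\geq 2$, your codimension-$1$ branch is not needed for the theorem as stated, and your codimension-$2$ branch already matches the paper. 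Your extra branch is not wrong as a piece of mathematics, and the observation that in codimension $1$ the ``not locally non-transitive'' hypothesis becomes automatic is a nice remark, but it is addressing a case the paper does not claim to cover.
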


\begin{proof}
It suffices to show that there is no local chart to Euclidean space at $A$. By assumption, $\mathcal{U}$ is a manifold with boundary of at most $k-3$ real dimensions, while $Co(A_1,\dots,A_k)$ is a $k-1$ dimensional manifold with boundary. We see however that
$$ 
\textbf{B}_\varepsilon(\lambda(A))\cap ES(Co(A_1,\dots,A_k)) \setminus ES(\mathcal{U})
$$ 
is locally disconnected by hypothesis, which is not possible if $ES(Co(A_1,\dots,A_k))$ is a $k-1$ dimensional manifold with boundary.
\end{proof}
\begin{figure}[hbt!] \label{Separable}
    \centering
    \includegraphics[scale=.4]{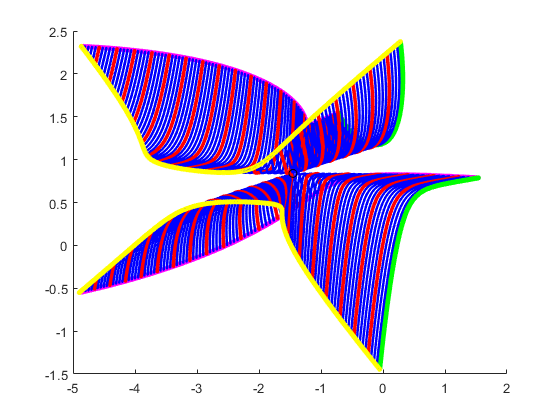}
    \caption{A 3-dimensional representation of $ES(Co(A_1,A_2,A_3))$ which is not a manifold with boundary, as there is a matrix $A\in \mathcal{U}$ which is diagonalizable.}
\end{figure}

We now try to identify local non-transitivity with structural properties of the matrices themselves. Another way of phrasing this property is that, locally, several eigenvalues cannot be parameterized individually, rather each one forms a branch about some higher multiplicity eigenvalue. On the other hand, by \cite{Kato}, we see that if $A\in \mathcal{U}$ has an eigenvalue of geometric multiplicity $m$, then locally at least $m$ of these eigenvalues can be parameterized as well-defined continuous functions. Thus $A$ cannot be locally non-transitive if its higher multiplicity eigenvalue also has higher geometric multiplicity. As discussed in \cite{Arnold}, this can alternatively be viewed as an increase in the dimension of the commutator space of $A$ relative to matrices in $\mathcal{C}$. We summarize this in the following.

\begin{corollary}
If $A\in \mathcal{U}$ has higher dimensional commutator space than all matrices in $\mathcal{C}$, then it is not locally non-transitive.
\end{corollary}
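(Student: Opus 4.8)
The plan is to convert the commutator-space hypothesis into a statement about geometric multiplicities, and then quote the Kato-based observation made just above the corollary.

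Write $Z(M):=\{N\in M_n : MN=NM\}$ for the commutator space of $M\in M_n$. I would start from the classical fact (see \cite{Arnold}) that $\dim Z(M)\geq n$ for every $M$, with equality exactly when $M$ is non-derogatory, i.e.\ when every eigenvalue of $M$ has geometric multiplicity $1$. Since $Co(A_1,\dots,A_k)$ is regular, every $C\in\mathcal{C}$ is simple, hence non-derogatory, so $\dim Z(C)=n$ for all $C\in\mathcal{C}$. Therefore the hypothesis that $A$ has strictly higher dimensional commutator space than all matrices in $\mathcal{C}$ says precisely that $\dim Z(A)>n$, i.e.\ that $A$ is derogatory: some eigenvalue $\mu$ of $A$ has geometric multiplicity $m\geq 2$. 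As geometric multiplicity never exceeds algebraic multiplicity, $\mu$ also has algebraic multiplicity $\geq 2$, so $\mu$ is one of the higher-multiplicity eigenvalues of $A$ and it additionally has higher geometric multiplicity. Set $\lambda(A):=(A_{cof},\mu)$.

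Next I would invoke the observation preceding this corollary, itself a consequence of \cite{Kato}: because $\mu$ has geometric multiplicity $m\geq 2$, on some neighborhood $N$ of $A$ at least $m$ of the eigenvalues branching from $\mu$ are well-defined continuous (indeed locally holomorphic) functions $\lambda_1,\dots,\lambda_m$ with $\lambda_\ell(A)=\mu$. Fix $\varepsilon>0$ with $\textbf{B}_\varepsilon(A)\subseteq N$. For $B\in\textbf{B}_\varepsilon(A)\cap\mathcal{C}$ the matrix $B$ is simple, so $\lambda_1(B)\neq\lambda_2(B)$; and the straight segment from $B$ to $A$ stays in $N$ and carries $\lambda_1$, resp.\ $\lambda_2$, as an eigenpath, so $\lambda_1(B)\rightsquigarrow\lambda(A)$ and $\lambda_2(B)\rightsquigarrow\lambda(A)$. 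On $\textbf{B}_\varepsilon(A)\cap\mathcal{C}$ the functions $\lambda_1$ and $\lambda_2$ are continuous and never coincide (coincidence would produce a repeated eigenvalue, contradicting membership in $\mathcal{C}$), and near each point of the graph of $\lambda_1$ the full set of eigenvalues near $\mu$ breaks up into separated branches; hence any eigenpath confined to $\textbf{B}_\varepsilon(A)\cap\mathcal{C}$ that starts at $\lambda_1(B)$ stays on the branch $\lambda_1$ and can never reach $\lambda_2(B)$. Thus $\lambda_1(B)\not\rightsquigarrow\lambda_2(B)$ in $\textbf{B}_\varepsilon(A)\cap\mathcal{C}$, which is exactly the failure of local non-transitivity of $A$ about $\lambda(A)$.

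The main obstacle is that last step: making rigorous that an eigenpath inside $\textbf{B}_\varepsilon(A)\cap\mathcal{C}$ which begins on the single-valued branch $\lambda_1$ cannot hop onto one of the remaining, possibly multivalued, branches and subsequently onto $\lambda_2$. I would handle this exactly as in \cref{ForcedPair}: all of the finitely many eigenvalues of a simple matrix near $\mu$ are pairwise distinct, so a compactness argument on a slightly smaller ball yields a uniform separation constant for them on $\mathcal{C}$, and continuity then forces the eigenpath to remain on whichever branch it starts on. The only genuine input beyond this bookkeeping is that the branch we start on is truly one of the $m\geq 2$ single-valued branches guaranteed by \cite{Kato}, which is precisely what the geometric-multiplicity (equivalently, commutator-space) hypothesis provides.
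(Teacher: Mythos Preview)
Your proposal is correct and follows exactly the route the paper takes: the paragraph immediately preceding the corollary is the paper's entire argument, and it reasons just as you do---higher commutator-space dimension (via \cite{Arnold}) forces $A$ to be derogatory, and then Kato's result that geometric multiplicity $m$ yields at least $m$ locally single-valued eigenvalue branches rules out local non-transitivity. Your write-up simply fleshes out the details the paper leaves implicit, including the branch-separation step on $\textbf{B}_\varepsilon(A)\cap\mathcal{C}$.
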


\begin{corollary}
   If $ES(Co(A_1,\dots,A_k))$ is topologically a manifold with boundary then it has uniform commutator space dimension.
\end{corollary}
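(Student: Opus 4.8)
The plan is to prove the contrapositive: assuming $Co(A_1,\dots,A_k)$ does \emph{not} have uniform commutator space dimension, I would exhibit a point of $ES(Co(A_1,\dots,A_k))$ at which no local chart to (a half-space of) Euclidean space can exist, so the eigen-surface fails to be a topological manifold with boundary.

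First I would record the base case for the commutator space dimension. Since the hull is regular, every matrix in $\mathcal{C}$ is simple, hence nonderogatory, so its commutator space (centralizer) has dimension exactly $n$ — and this is the minimum possible value over all of $M_n$. Therefore, if the commutator space dimension is not constant on $Co(A_1,\dots,A_k)$, there must be a matrix $A$ whose commutator space has dimension strictly larger than $n$. In particular $A \in \mathcal{U}$, and $A$ is derogatory: it has an eigenvalue $\lambda(A)$ of geometric multiplicity at least $2$, and hence of algebraic multiplicity $>1$. Now I feed this $A$ into the results just proved. Because the commutator space of $A$ strictly dominates that of every matrix in $\mathcal{C}$, the corollary above forces $A$ to be \emph{not} locally non-transitive about $\lambda(A)$. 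Invoking the (generic) codimension bound established earlier, the hypotheses of the preceding theorem are all in place: $A\in\mathcal{U}$ is not locally non-transitive about the multiplicity-$>1$ eigenvalue $\lambda(A)$, and $\mathcal{U}$ has codimension at most $2$. Hence there is no $\varepsilon>0$ for which $\textbf{B}_\varepsilon(\lambda(A))\cap ES(Co(A_1,\dots,A_k))$ is topologically a manifold — indeed the obstruction there is a local disconnectedness of the complement of $ES(\mathcal{U})$, which also obstructs a manifold-with-boundary structure. But if the whole eigen-surface were a manifold with boundary, every sufficiently small neighborhood of the point $\lambda(A)\in ES(Co(A_1,\dots,A_k))$ would inherit that structure, a contradiction. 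This establishes the contrapositive, hence the corollary.

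I expect the only genuine care needed is bookkeeping: making sure the regular hypothesis is in force so that $\mathcal{C}$ really realizes the global minimum dimension $n$ (so that ``uniform commutator space dimension'' is equivalent to ``every matrix is nonderogatory''), and being explicit that the codimension-$\le 2$ condition is the same generic regime under which the preceding lemma and theorem were stated. Beyond that, the argument is just the short chain corollary $\Rightarrow$ not locally non-transitive $\Rightarrow$ theorem $\Rightarrow$ non-manifold, so there is no substantial computation to carry out.
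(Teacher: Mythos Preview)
Your proposal is correct and matches the paper's intended argument: the corollary is stated without proof precisely because it is the contrapositive chain you describe, namely non-uniform commutator dimension $\Rightarrow$ some $A\in\mathcal{U}$ has strictly larger centralizer than the simple matrices in $\mathcal{C}$ $\Rightarrow$ $A$ is not locally non-transitive (previous corollary) $\Rightarrow$ the eigen-surface fails to be a manifold near $\lambda(A)$ (previous theorem, under the generic codimension hypothesis). Your caveats about the regularity assumption and the generic codimension regime are exactly the standing hypotheses of Section~4.1, so no additional work is required.
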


\subsection{General Convex Hulls}

Everything will follow as before if we can similarly establish that the matrices in $\mathcal{U}$ correspond to the zero set of some polynomial. If $Co(A_1,\dots,A_k)$ is not regular, the discriminant will not be useful for us as it will be identically 0, thus we develop alternate methodology. 

\begin{lemma}
        Let $(a_n,\dots, a_1)$ be minimal where $i$ is the largest index such that $a_i\neq 0$. Define $p(\lambda, \alpha)$ (which is a polynomial in $\lambda$) to be the characteristic polynomial, and $(a_{n_\alpha},\dots,a_{1_\alpha})$ to be the multiplicity list of each matrix as they vary in $\alpha$. Then $a_{j_\alpha}>0$ for $j>i$ if and only if the polynomials $p(\lambda,\alpha)$ and 
        $$
        p^{(i+1)}(\lambda,\alpha):= \frac{\partial^{i+1}}{\partial \lambda^{i+1}} p(\lambda,\alpha)
        $$
        have a common zero.
\end{lemma}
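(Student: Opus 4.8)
The plan is to reduce this to a classical fact about a single--variable polynomial and its derivatives, and then to feed in the structural information about the convex hull already developed for \cref{ZeroSet} and \cref{Core}. Recall that for a polynomial $g(\lambda)$ over $\CC$, a point $\lambda_0$ is a root of multiplicity at least $m$ exactly when $g(\lambda_0)=g'(\lambda_0)=\cdots=g^{(m-1)}(\lambda_0)=0$. Fixing $\alpha$ and setting $g(\lambda):=p(\lambda,\alpha)$, the condition ``$a_{j_\alpha}>0$ for some $j>i$'' says precisely that $g$ has a root of multiplicity at least $i+1$, since the algebraic multiplicity of an eigenvalue is its multiplicity as a root of the characteristic polynomial. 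So the lemma asserts that, for matrices in this hull, the presence of such a root is detected by a common zero of $p(\cdot,\alpha)$ and $p^{(i+1)}(\cdot,\alpha)$.

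The direction ``high multiplicity $\Rightarrow$ common zero'' I would handle first and quickly: if $\lambda_0\in\sigma(A(\alpha))$ has algebraic multiplicity $j>i$, then $g$ vanishes to order $j\geq i+1$ at $\lambda_0$, so $\lambda_0$ is simultaneously a zero of $p(\cdot,\alpha)$ and of each derivative $p',\dots,p^{(i)}$, which is what the statement is extracting. The hypothesis that the minimal list has largest nonzero index $i$ is what pins this order down as the correct one: since that list has $a_n=\cdots=a_{i+1}=0$ and $a_i\geq 1$, any matrix whose list fell below it in lexicographic order would be impossible, so every $A(\alpha)$ in the hull already carries an eigenvalue of multiplicity at least $i$, and multiplicity strictly above $i$ is exactly the first anomalous behaviour — recorded at order $i+1$.

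The reverse direction is the main obstacle, and it is where convexity and analyticity are indispensable: a bare common zero $\lambda_0$ of $p(\cdot,\alpha)$ and $p^{(i+1)}(\cdot,\alpha)$ need not, on polynomial grounds alone, be a root of $p(\cdot,\alpha)$ of multiplicity $>i$. To rule out such ``accidental'' common zeros I would write $p(\lambda,\alpha)=(\lambda-\lambda_0)^{\ell}q(\lambda)$ with $q(\lambda_0)\neq 0$ and use the Leibniz rule to turn $p^{(i+1)}(\lambda_0,\alpha)=0$ into either $\ell\geq i+2$ (done immediately) or $\ell\leq i$ together with a vanishing condition on $q$ at $\lambda_0$. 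In the second case I would argue as in the proof of \cref{Core}: travelling along the segment from $A(\alpha)$ toward a matrix of the core $\mathcal{C}$, and using that $p$ depends analytically on the coefficients, one sees that if $\lambda_0$ genuinely had multiplicity at most $i$ then the multiplicity list of some matrix on that segment would drop below the minimal list, contradicting minimality; hence some eigenvalue of $A(\alpha)$ must already exceed multiplicity $i$. I expect the delicate point to be making this ``drops below the minimal list'' step precise — tracking how the list of $A(\alpha)$ behaves along the segment in a neighbourhood of $\lambda_0$ — and it is there that I would lean hardest on the openness and density of $\mathcal{C}$ from \cref{Core}.
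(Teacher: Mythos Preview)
The paper does not supply a proof of this lemma; it is stated and then immediately used to motivate the Zariski--closedness discussion that follows, so there is nothing in the paper to compare your proposal against directly. That said, your argument has a concrete gap worth naming, because the statement as written does not hold, and neither direction can be rescued by the outline you give.

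Your forward direction concludes that a root $\lambda_0$ of multiplicity $j>i$ is a common zero of $p,p',\dots,p^{(i)}$; but the lemma asks for $p^{(i+1)}$, and vanishing to order $j$ only forces $p^{(j-1)}(\lambda_0)=0$, so when $j=i+1$ exactly one has $p^{(i+1)}(\lambda_0)\neq 0$. Concretely, take $n=3$, $A_1=\operatorname{diag}(0,0,1)$, $A_2=0$: the minimal list on $Co(A_1,A_2)$ is $(0,1,1)$, so $i=2$, while at $A_2$ the list is $(1,0,0)$; yet $p(\lambda)=\lambda^3$ and $p^{(3)}\equiv 6$ share no zero. Thus the ``$\Rightarrow$'' direction fails as stated. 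If you repair the index to $p^{(i)}$ your forward direction goes through, but then the converse breaks: the singleton hull of a companion matrix for $p(\lambda)=\lambda^2(\lambda-2)(\lambda-3)$ has minimal list $(0,0,1,2)$, $i=2$, no eigenvalue of multiplicity exceeding $2$, and nevertheless $p$ and $p''$ share the zero $\lambda=2$. Your proposed analyticity--along--a--segment argument for the converse therefore cannot succeed without an extra hypothesis, since this purely polynomial obstruction already lives inside a legitimate (if degenerate) convex hull, and nothing in the density or openness of $\mathcal{C}$ from \cref{Core} excludes such accidental common zeros. The paper itself seems to recognise that a single resultant condition is insufficient: the paragraphs immediately following the lemma set up an inductive factorisation $p=q\cdot r$ isolating the top--multiplicity roots, and it is that factorisation, not this lemma in isolation, that is meant to carry the Zariski--closedness of $\mathcal{U}$.
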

Thus we are looking at the intersection of the zero set of two polynomials, which corresponds to a closed set in the Zariski topology. What we now want to do is induct our way downward on the multiplicity list of these polynomials. If we can see when each individual coordinate of the list increases, then we will have established a general criteria. To do this, it suffices to show that we can factor our characteristic polynomial so that we isolate all of the lower multiplicity eigenvalues. We see that in the cases where each eigenvalue is an entire holomorphic $r_i(\alpha)$, this is immediate as we have
$$
p(\lambda, \alpha) = \prod_{i=1}^n (\lambda - r_i(\alpha))
$$
so we can separate the roots as desired. If such a parameterization is not possible for some collection of roots, then it must be that there are subcollections of these roots which correspond to the branches of a holomorphic function, with a common branch point $c\in \mathbb{C}.$ Let $r_{i_1},\dots, r_{i_a}$ be such a subcollection. It then follows that
$$
q:=\prod_{j=1}^a (\lambda - r_{i_j})
$$
is holomorphic, and if $r_{i_j}$ is a root of multiplicity $m$, then all distinct roots in this subcollection must also be of multiplicity $m$. It follows then that we can achieve the desired factorization
$$
p(\lambda, \alpha) = q(\lambda, \alpha) r(\lambda,\alpha)
$$
where $q$ is the product of all highest multiplicity terms $\lambda-r_i$.
\begin{lemma}
        Let $(a_n,\dots, a_1)$ be minimal where $i$ is the largest index such that $a_i\neq 0$. If $i>0$, then we can factor 
        $$
        p(\lambda,\alpha)= q(\lambda,\alpha) r(\lambda, \alpha)
        $$
        where $q,r$ are holomorphic and $r$ has no permanent zeroes of multiplicity $i$ or greater.
\end{lemma}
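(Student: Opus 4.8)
The plan is to realize the factorization through the global monodromy of the eigenvalue branches, pushing into $q$ exactly those branches that are generically of the maximal multiplicity $i$ and leaving the rest in $r$. First I would complexify: the affine map $\alpha\mapsto A(\alpha)=\alpha_1A_1+\dots+\alpha_{k-1}A_{k-1}+(1-\alpha_1-\dots-\alpha_{k-1})A_k$ extends to a holomorphic family over $\mathbb{C}^{k-1}$, so $p(\lambda,\alpha)$ is monic of degree $n$ in $\lambda$ with entire coefficients in $\alpha$, and its zero set is an $n$-sheeted branched cover of $\mathbb{C}^{k-1}$. Let $E\subseteq\mathbb{C}^{k-1}$ be the locus where $p(\cdot,\alpha)$ has strictly fewer distinct roots than a generic matrix; by the previous lemma and an elementary subresultant argument $E$ is a proper analytic subvariety, so $\mathbb{C}^{k-1}\setminus E$ is connected and the monodromy action of its fundamental group on the $n$ branch-germs at a base point is well defined. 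I would then partition the branch-germs into \emph{clusters}, the orbits of this action; this is the global version of the local ``$\lambda$-groups'' of Kato's theory already invoked in the preceding discussion.

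Two properties of a single cluster $\mathcal{B}$ do the work. First, its elementary symmetric functions are single-valued on $\mathbb{C}^{k-1}\setminus E$, since monodromy permutes branches only within $\mathcal{B}$; they are holomorphic there and locally bounded everywhere (they are symmetric functions of finitely many eigenvalues of $A(\alpha)$, which depend continuously on $\alpha$), so by Riemann's removable singularity theorem they extend holomorphically across $E$. Hence $q_{\mathcal{B}}(\lambda,\alpha):=\prod_{b\in\mathcal{B}}(\lambda-b(\alpha))$ is a monic polynomial in $\lambda$ with entire coefficients in $\alpha$, and $p=\prod_{\mathcal{B}}q_{\mathcal{B}}$ as an identity of polynomials in $\lambda$. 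Second, along any loop in $\mathbb{C}^{k-1}\setminus E$ the algebraic multiplicity of a tracked eigenvalue is constant, because it equals the rank of the Riesz spectral projection $\frac{1}{2\pi i}\oint(\zeta-A(\alpha))^{-1}\,d\zeta$ over a small contour around that eigenvalue; away from $E$ this contour stays separating, the projection varies holomorphically, and a continuously varying projection has locally constant rank. Therefore every branch in a cluster carries the same generic multiplicity $m_{\mathcal{B}}$, and since $i$ is the top index of the minimal multiplicity list we have $m_{\mathcal{B}}\le i$ for every cluster, with $m_{\mathcal{B}}=i$ for at least one.

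With this in hand the construction is immediate: set $q:=\prod_{m_{\mathcal{B}}=i}q_{\mathcal{B}}$ and $r:=\prod_{m_{\mathcal{B}}<i}q_{\mathcal{B}}$, so that $qr=p$ and both factors are holomorphic by the first property. To check $r$ has no permanent zero of multiplicity $i$ or greater, take any $\alpha$ in the dense open set $\mathbb{C}^{k-1}\setminus E$: there the multiplicity of a value $v$ as a root of $p$ equals its generic multiplicity $m$, and \emph{every} branch through $v$ then lies in a cluster of generic multiplicity $m$. If $m=i$ the whole of $v$ is absorbed into $q$; if $m<i$ the multiplicity of $v$ in $r$ is at most $m<i$. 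Either way no root of $r(\cdot,\alpha)$ attains multiplicity $i$ on a dense set of $\alpha$, which rules out a permanent zero of multiplicity $i$ or greater.

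The main obstacle is the multiplicity-invariance step — making rigorous the assertion, used informally in the paragraph preceding the lemma, that all distinct roots in a branch cluster share one multiplicity. The spectral-projection argument settles it, but only once two points are nailed down: that $E$ really is a proper analytic subvariety, so that its complement is connected and the loop arguments are legitimate, and that a separating contour for the tracked eigenvalue genuinely persists along every loop avoiding $E$, so that the projection, and hence its rank, never degenerates. The holomorphic extension of the cluster symmetric functions across $E$ is a secondary technical step, routine once one invokes continuity of the spectrum of $A(\alpha)$ to obtain local boundedness.
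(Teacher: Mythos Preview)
Your proposal is correct and follows essentially the same route as the paper: group the eigenvalue branches by their monodromy orbits, observe that the product over each orbit is holomorphic, note that all branches within an orbit carry the same generic multiplicity, and then split off the orbits of top multiplicity $i$ into $q$. Your treatment is in fact more rigorous than the paper's discussion preceding the lemma, since you supply the Riesz spectral-projection argument for the constant-multiplicity claim and the removable-singularity argument for the holomorphic extension of the cluster symmetric functions, both of which the paper asserts without justification.
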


\begin{corollary} \label{ZeroSet}
    The set $\mathcal{U}$ is closed in the Zariski topology.
\end{corollary}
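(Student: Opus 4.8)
The goal is to exhibit $\mathcal{U}$ as a finite union of sets cut out by polynomial equations in the coordinates $\alpha=(\alpha_1,\dots,\alpha_{k-1})=A_{cof}$, equivalently to show that the core $\mathcal{C}$ is Zariski-open. The plan is to argue by downward induction on $n=\deg_\lambda p$, using the two preceding lemmas to peel the multiplicity list off one entry at a time.

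Fix the inductive step and let $(m_n,\dots,m_1)$ be the minimal list with $i$ its largest nonzero index. First I would handle the top entry. By the first of the two preceding lemmas, the set of $\alpha$ for which $p(\lambda,\alpha)$ acquires an eigenvalue of multiplicity exceeding $i$ is exactly the set on which $p(\lambda,\alpha)$ and $p^{(i+1)}(\lambda,\alpha)$ have a common root; since the coefficients of $p(\lambda,\alpha)$ are polynomials in $\alpha$ (it is the characteristic polynomial of $\sum_j\alpha_jA_j$), this common-root locus is the vanishing set of a resultant in $\alpha$, a Zariski-closed set $\mathcal{Z}$. Moreover $\mathcal{Z}\subseteq\mathcal{U}$, as every matrix in $\mathcal{Z}$ has a list strictly above the minimal one.

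Away from $\mathcal{Z}$, I would use the second preceding lemma to write $p=qr$ with $q,r$ holomorphic, $r$ having no permanent zeros of multiplicity $i$ or greater, and $q$ carrying the top-multiplicity eigenvalues together with their multiplicities. On $Co(A_1,\dots,A_k)\setminus\mathcal{Z}$ the roots of $q$ cannot collide with one another or with roots of $r$ (such a collision would create multiplicity $>i$ and hence land in $\mathcal{Z}$), so there the multiplicity list of $p(\cdot,\alpha)$ is the list of $q$ --- which is constant on this set --- superimposed on that of $r(\cdot,\alpha)$. Hence $\alpha\in\mathcal{U}\setminus\mathcal{Z}$ iff the list of $r(\cdot,\alpha)$ exceeds the minimal list of $r$, and since $\deg_\lambda r<n$ the inductive hypothesis yields a Zariski-closed $W$ with $W\setminus\mathcal{Z}=\mathcal{U}\setminus\mathcal{Z}$. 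Then $\mathcal{U}=\mathcal{Z}\cup W$ (both inclusions follow from $\mathcal{Z}\subseteq\mathcal{U}$), a finite union of Zariski-closed sets; the base case $n=1$ is vacuous.

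I expect the main obstacle to be making the factorization $p=qr$ legitimate inside the Zariski category: the second lemma delivers $q,r$ only as holomorphic objects, so one must check that their coefficients are actually polynomials in $\alpha$ --- they are integral over $\CC[\alpha]$, being symmetric functions of roots of $p$, and globally holomorphic, which forces polynomiality --- before applying the inductive hypothesis to the family $r(\cdot,\alpha)$; tracking the multiplicity bookkeeping of $q$ and $r$ precisely on $Co(A_1,\dots,A_k)\setminus\mathcal{Z}$ is the other delicate point. A route that avoids the factorization entirely: the multiplicity list of $p(\cdot,\alpha)$ is encoded in the degrees $\delta_k(\alpha)=\deg\gcd\big(p(\cdot,\alpha),p^{(k)}(\cdot,\alpha)\big)$, $k=1,\dots,n-1$; a majorization (Karamata) argument shows the minimal list simultaneously minimizes every $\delta_k$, so $\mathcal{C}=\bigcap_{k=1}^{n-1}\{\alpha:\delta_k(\alpha)\le\delta_k^{\min}\}$ is a finite intersection of the Zariski-open conditions ``$\deg\gcd\le d$'' (open by subresultant theory), whence $\mathcal{U}$ is again Zariski-closed.
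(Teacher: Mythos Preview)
Your primary argument---downward induction on $\deg_\lambda p$, using the first lemma to handle the top multiplicity via a resultant (Zariski-closed), then the factorization $p=qr$ from the second lemma to reduce to a lower-degree family---is exactly the route the paper sketches in the paragraph between the two lemmas; the corollary is stated there without further proof, so you have in fact filled in the argument the paper only outlines.

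You go beyond the paper in one useful respect: you flag that the second lemma delivers $q,r$ only as holomorphic objects, whereas the inductive step needs $r$ to have polynomial coefficients in $\alpha$ before a resultant argument can be rerun. The paper does not address this, and your proposed fix (coefficients are symmetric in roots of $p$, hence integral over $\CC[\alpha]$, and entire, hence polynomial) is the right way to close the gap. Your alternative via the sequence $\delta_k(\alpha)=\deg\gcd\bigl(p,\partial_\lambda^{k}p\bigr)$ and subresultants is a genuinely different and cleaner route that sidesteps the factorization entirely; it is not in the paper, and it has the advantage of producing explicit defining polynomials for $\mathcal U$ without any analytic input.
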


\section{Computational Methods}

Computationally, it is impossible to consider all paths in $Co(A_1,\dots, A_k)$, thus we provide justification to study a simpler, more amenable collection of paths. We define $ES^\ell(S)$ to be the eigen-surface of a set $S$ where we only consider eigenvalue pairings by polygonal paths in $S$, that is, paths that are the union of line segments.

\begin{lemma} \label{GraphDist}
All pairings can be achieved by polygonal paths.
\end{lemma}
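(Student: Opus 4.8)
The plan is to start from any continuous path witnessing a given pairing and replace it, one small piece at a time, by line segments, using \cref{ForcedPair} to keep control of the eigenpath on each piece. So suppose $A(x):[0,1]\to Co(A_1,\dots,A_k)$ is continuous with an eigenpath $\lambda(x)$, $\lambda(0)=\lambda(A)$ and $\lambda(1)=\lambda(B)$. For each $t$, \cref{ForcedPair} applied at $A(t)$ provides a ball $N_t=\textbf{B}_{\delta_t}(A(t))$ and disks $D_i^t\subseteq\CC$ around the eigenvalues of $A(t)$ — one distinct eigenvalue per disk, disjoint for distinct eigenvalues — so that every eigenpath of every path contained in $N_t$ stays inside a single $D_i^t$.

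First I would extract a finite partition. The preimages $A^{-1}(N_t)$ form an open cover of the compact interval $[0,1]$, so by a Lebesgue-number argument there is a partition $0=t_0<t_1<\dots<t_m=1$ together with centers $c_j=A(\tau_j)$ such that $A([t_{j-1},t_j])\subseteq N_{\tau_j}$ for every $j$. Because $Co(A_1,\dots,A_k)$ is convex and each $N_{\tau_j}$ is itself a convex ball, the segments $[A(t_{j-1}),c_j]$ and $[c_j,A(t_j)]$ both lie in $Co(A_1,\dots,A_k)\cap N_{\tau_j}$. The polygonal path I propose is the concatenation over $j=1,\dots,m$ of the two-segment detours $A(t_{j-1})\to c_j\to A(t_j)$.

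Then I would verify that this polygonal path realizes the same pairing. On $[t_{j-1},t_j]$ the original eigenpath is an eigenpath of a path inside $N_{\tau_j}$, hence trapped in one disk $D_i^{\tau_j}$; in particular $\lambda(t_{j-1})$ and $\lambda(t_j)$ both lie in that disk. The eigenpath of the segment $[A(t_{j-1}),c_j]$ beginning at $\lambda(t_{j-1})$ stays in $D_i^{\tau_j}$ (it starts there, and distinct disks are disjoint), so it must terminate at the \emph{unique} distinct eigenvalue $\mu_j$ of $c_j$ lying in $D_i^{\tau_j}$; likewise the eigenpath of $[A(t_j),c_j]$ beginning at $\lambda(t_j)$ ends at $\mu_j$. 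Reading the second segment backwards, the two-segment path $A(t_{j-1})\to c_j\to A(t_j)$ carries an eigenpath $\lambda(t_{j-1})\to\mu_j\to\lambda(t_j)$, so $\lambda(t_{j-1})\rightsquigarrow\lambda(t_j)$ along it. Concatenating over $j$ gives a polygonal path pairing $\lambda(A)$ with $\lambda(B)$, which is what is needed — the reverse inclusion of pairings is trivial since polygonal paths are paths, so in fact $ES^\ell(S)$ and $ES(S)$ have the same $k$-components.

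The step I expect to be the real obstacle is precisely the one the center-detour is designed to handle. If one simply replaced the sub-arc $A|_{[t_{j-1},t_j]}$ by the single segment $[A(t_{j-1}),A(t_j)]$, the eigenpath starting at $\lambda(t_{j-1})$ would stay in $D_i^{\tau_j}$ but could terminate at a \emph{different} eigenvalue of $A(t_j)$ in that disk whenever an eigenvalue of multiplicity greater than one is splitting along the way, and the pairing could be lost. Forcing the path through the center $c_j$ funnels the disk's eigenpath onto the single distinct eigenvalue $\mu_j$ of $c_j$, and the reversed second segment then re-selects exactly $\lambda(t_j)$ from it. Making this funnel argument airtight — in particular the claim that an eigenpath confined to a disk around a single distinct eigenvalue of $c_j$ must actually meet that eigenvalue at $c_j$, which follows from continuity of the eigenpath together with the separation of the disks in \cref{ForcedPair} — is the crux of the proof.
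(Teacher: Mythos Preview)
Your argument is correct and takes a genuinely different route from the paper's. The paper's proof invokes the homotopy machinery of \cref{HomotopyPairs} together with the decomposition $Co(A_1,\dots,A_k)=\mathcal{C}\cup\mathcal{U}$ from Section~4: it asserts that pairings are determined by the homotopy class of the path relative to $\mathcal{U}$ together with which components of $\mathcal{U}$ the path meets, and then says a polygonal path can be chosen in the same class. You bypass all of this, working directly from \cref{ForcedPair} via a Lebesgue-number argument and the center-detour trick that funnels the eigenpath through the unique distinct eigenvalue of $c_j$ in the local disk. This is more elementary --- it needs neither uniform multiplicity nor the discriminant analysis of $\mathcal{U}$ nor any homotopy statement --- and it is also more explicit than the paper's rather terse ``it is clear that we can construct a polygonal path\ldots''. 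What the paper's approach buys, on the other hand, is that it makes visible the conceptual point that pairing is a homotopy invariant away from $\mathcal{U}$, which is exactly what feeds the subsequent corollary bounding $\operatorname{ord}(\lambda(A))$ by $|\pi_1(ES(S)\setminus ES(\mathcal{U}))|$; your argument proves the lemma cleanly but does not by itself surface that connection.
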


\begin{proof}
It is sufficient to show that for arbitrary $A\in Co(A_1,\dots,A_k)$, if $\lambda_i(A)\rightsquigarrow \lambda_k(A)$ in the image of an arbitrary path, then it also does by a polygonal path. By \cref{HomotopyPairs}, pairings are entirely determined by path homotopy class about $\mathcal{U}$ and intersections with $\mathcal{U}$. It is clear that we can construct a polygonal path that intersects the same components of $\mathcal{U}$, and we can also construct it to be in the same homotopy class about each component $\mathcal{U}$ as $\mathcal{C}$ is open as a subspace of $Co(A_1,\dots,A_k)$, so it follows that path connected $\Longleftrightarrow$ polygonal path connected.
\end{proof}
Define $\operatorname{ord}(\lambda_i(A))$ to be the number of $\lambda_j(A)\neq \lambda_i(A)$ such that $\lambda_i(A)\rightsquigarrow \lambda_j(A)$.
\begin{corollary}
    $\operatorname{ord}(\lambda(A))\leq |\pi_1(ES(S)\setminus ES(\mathcal{U}))|$. That is, the fundamental group of the eigen-surface determines the possible paths between eigenvalues.
\end{corollary}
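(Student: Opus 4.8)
The plan is to identify the restricted eigen-surface $ES(S)\setminus ES(\mathcal{U})$ as a covering space of the core $\mathcal{C}:=Co(A_1,\dots,A_k)\setminus\mathcal{U}$ and then to reinterpret the pairing relation $\rightsquigarrow$ as the monodromy action of $\pi_1(\mathcal{C})$ on the fibers of this covering. Once that dictionary is in place, the eigenvalues of a fixed $A$ paired with a fixed $\lambda(A)$ form exactly one orbit of the monodromy action, so $\operatorname{ord}(\lambda(A))$ is that orbit's size minus one, and the inequality becomes the orbit--stabilizer theorem combined with the remark that the stabilizer of $\lambda(A)$ is the image of $\pi_1$ of the component of $ES(S)\setminus ES(\mathcal{U})$ through $\lambda(A)$.

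Concretely I would proceed in three steps. \textbf{(1)} Show that the projection $\pi\colon ES(S)\setminus ES(\mathcal{U})\to\mathcal{C}$, $(\alpha_1,\dots,\alpha_{k-1},\lambda)\mapsto(\alpha_1,\dots,\alpha_{k-1})$, is a covering map: around each $A\in\mathcal{C}$, \cref{ForcedPair} furnishes disjoint discs $D_i$ about the distinct eigenvalues of $A$ trapping every local eigenpath, and since $\mathcal{C}$ is open (by continuity of the spectrum, cf.\ \cref{Core}) and every matrix in $\mathcal{C}$ shares the same multiplicity list, for all nearby coefficients each $D_i$ carries exactly one continuously varying eigenvalue; hence $\pi$ is evenly covered. \textbf{(2)} Using \cref{GraphDist} to replace an arbitrary path realizing a pairing by a polygonal one, and \cref{HomotopyPairs} to homotope it (rel endpoints, without changing the pairing, since $\mathcal{C}$ has uniform multiplicity) off $\mathcal{U}$, conclude that whenever $\lambda_i(A)\rightsquigarrow\lambda_j(A)$ the pairing is witnessed by a loop of $\mathcal{C}$ based at $A$; therefore $\{\lambda_j(A):\lambda_i(A)\rightsquigarrow\lambda_j(A)\}$ is precisely the orbit of the point $\lambda_i(A)\in\pi^{-1}(A)$ under the monodromy representation $\pi_1(\mathcal{C},A)\to\operatorname{Sym}(\pi^{-1}(A))$. \textbf{(3)} Apply orbit--stabilizer: the orbit has size $[\pi_1(\mathcal{C},A):\operatorname{Stab}(\lambda_i(A))]$, the stabilizer is $\pi_*\pi_1(ES(S)\setminus ES(\mathcal{U}),\lambda_i(A))$, and consequently $\operatorname{ord}(\lambda_i(A))=(\text{orbit size})-1\le\bigl|\pi_1(ES(S)\setminus ES(\mathcal{U}))\bigr|$.

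The main obstacle is step (2): one must certify that no pairing is destroyed by deleting $ES(\mathcal{U})$. This is automatic when $\mathcal{U}$ has codimension at least $2$ in $Co(A_1,\dots,A_k)$ -- the generic case isolated earlier -- since then a generic loop, and a generic homotopy, misses $\mathcal{U}$ outright; when $\mathcal{U}$ is a genuine hypersurface one has to work harder, trading a transverse crossing of $\mathcal{U}$ at a simple collision for a short detour through $\mathcal{C}$ on one side of it and checking this does not alter the induced pairing. It is also worth recording that the estimate only carries content when $\pi_1(ES(S)\setminus ES(\mathcal{U}))$ is finite; when it is free of positive rank -- the typical appearance of the fundamental group of such a complement -- the bound is vacuous, and the real substance of the corollary is the structural identification of pairings with monodromy orbits.
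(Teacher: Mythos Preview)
The paper states this as a bare corollary with no proof; the only justification offered is the sentence following it, that ``it is sufficient to consider only representative paths of each homotopy class.'' Your covering-space/monodromy reading is the natural way to make that sentence precise, and steps (1) and (2) are well conceived.

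The gap is in step (3). Orbit--stabilizer yields
\[
|\text{orbit of }\lambda_i(A)| \;=\; \bigl[\pi_1(\mathcal{C},A):\pi_*\pi_1(E,\lambda_i(A))\bigr],
\qquad E:=ES(S)\setminus ES(\mathcal{U}),
\]
so the orbit size is the \emph{index} of the stabilizer in $\pi_1(\mathcal{C})$, not something bounded by the stabilizer's own cardinality. From this one gets $\operatorname{ord}(\lambda_i(A))+1\le|\pi_1(\mathcal{C},A)|$, a bound by the fundamental group of the \emph{base} $\mathcal{C}$, not of the cover $E$. There is no general inequality $[G:H]-1\le|H|$: take $|G|=100$ and $|H|=2$, so that the index is $50$ while $|H|=2$. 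The ``consequently'' at the end of your step (3) is therefore a non sequitur, and the argument as written does not establish the inequality exactly as printed. What your framework \emph{does} establish cleanly is the structural statement you single out at the end---that pairings are precisely monodromy orbits---together with the bound $\operatorname{ord}(\lambda(A))<|\pi_1(\mathcal{C})|$. If the intended group is $\pi_1$ of the base rather than of $E$, your argument (with step (3) amended to invoke $|\pi_1(\mathcal{C})|$) proves it; your closing observation about vacuity when $\pi_1$ is infinite then applies verbatim.
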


This argument tells us that it is sufficient to consider only representative paths of each homotopy class, reducing the problem to a finite computation. For a $k$-component $\mathcal{K}\subseteq ES(S)$, we can define a graph $G(\mathcal{K})$ where $V(G(\mathcal{K})) := \{ \lambda(A)\in \mathcal{K}\}$ and $E(G(\mathcal{K})) := \{(\lambda(A),\lambda(B)) \in \mathcal{K}\times \mathcal{K}\ |\ \lambda(A)\rightsquigarrow \lambda(B)\}$. Without restricting to a particular type of path, this graph has little meaning, as each $G(\mathcal{K})$ will be complete, however, if we consider $G(\mathcal{K}^\ell)$ where only pairings induced by a line-segment correspond to an edge, we can deduce some less trivial properties.

\begin{lemma}
   $G(\mathcal{K}^\ell)$ has finite diameter for $\mathcal{K}^\ell \subseteq ES^\ell (Co(A_1,\dots, A_k))$
\end{lemma}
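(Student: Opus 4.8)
The plan is to show that there is a single integer $N$ with the property that whenever $\lambda(A)\rightsquigarrow\lambda(B)$ and both points lie in a common $k$-component, the pairing is already witnessed by a polygonal path with at most $N$ segments. Since $G(\mathcal{K}^\ell)$ is connected — two vertices of $\mathcal{K}^\ell$ are paired, hence joined by a polygonal path by \cref{GraphDist}, i.e.\ by a walk in $G(\mathcal{K}^\ell)$ — such an $N$ gives $\operatorname{diam} G(\mathcal{K}^\ell)\le N$. The existence of $N$ is a statement about the \emph{compact} set $Co(A_1,\dots,A_k)\times Co(A_1,\dots,A_k)$, so I would prove it by a covering argument: establish a local bound near each pair $(A,B)$ and then pass to a finite subcover.

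For the local bound, fix $(A,B)$. There are at most $n^2$ possible pairings $\lambda_i(A)\rightsquigarrow\lambda_j(B)$, and by \cref{GraphDist} each one that occurs is realized by \emph{some} polygonal path; let $\ell(A,B)$ be the largest number of segments among a fixed choice of such realizing paths. Next I would invoke \cref{ForcedPair} to pick neighborhoods $N_A,N_B$ and eigenvalue discs $D_i^A,D_j^B$ so that any eigenpath of a path lying inside $N_A$ (resp.\ $N_B$) stays inside a single disc. Because $Co(A_1,\dots,A_k)$ is convex, for $A'\in N_A\cap Co(A_1,\dots,A_k)$ and $B'\in N_B\cap Co(A_1,\dots,A_k)$ the segments $A'A$ and $BB'$ stay in $Co(A_1,\dots,A_k)$, and the confinement property makes them induce a correspondence between the eigenvalues of $A',B'$ and those of $A,B$ that is compatible with $\rightsquigarrow$. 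Hence every pairing at $(A',B')$ is obtained by prepending $A'A$ and appending $BB'$ to one of the fixed paths realizing the corresponding pairing at $(A,B)$, costing two extra segments; so $N_{A,B}:=\ell(A,B)+2$ works on the box $N_A\times N_B$. Covering $Co(A_1,\dots,A_k)\times Co(A_1,\dots,A_k)$ by finitely many such boxes and setting $N=\max N_{A,B}$ completes the argument.

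The step that needs the most care, and the one I expect to be the main obstacle, is checking that the transported path realizes the pairing at the level of \emph{individual} eigenvalues rather than of eigenvalue clusters: when $A$ or $B$ lies in $\mathcal{U}$ and has an eigenvalue of algebraic multiplicity $>1$, a disc $D_i$ may contain several eigenvalues of the perturbed matrix, and the connecting segment only records which disc an eigenpath enters. I would circumvent this by running the covering argument first over the core $\mathcal{C}$, where \cref{Notches} gives honest local transitivity — every disc carries a single simple eigenvalue and the bookkeeping is automatic — and then recovering the general case by choosing the intermediate vertices of the transported paths inside $\mathcal{C}$ (possible since $\mathcal{C}$ is open and dense by \cref{Core}), so that the only matrices outside $\mathcal{C}$ that are visited are $A$ and $B$ themselves at the two ends, where \cref{ForcedPair} already controls the eigenpaths. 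The finiteness that makes this work is the same one underlying $\operatorname{ord}(\lambda(A))\le|\pi_1(ES(S)\setminus ES(\mathcal{U}))|$: $\mathcal{U}$ is a zero set of polynomials, hence semialgebraic, so $Co(A_1,\dots,A_k)\setminus\mathcal{U}$ has only finitely many homotopy classes of paths up to the finite monodromy action on eigenvalues, and any path realizing a given pairing can be normalized to meet $\mathcal{U}$ only near a fixed bounded set of representative points — which is exactly what keeps the number of required segments uniformly bounded.
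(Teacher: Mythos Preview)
Your argument is correct and matches the paper's approach: both combine compactness of $Co(A_1,\dots,A_k)\times Co(A_1,\dots,A_k)$, \cref{ForcedPair}, and the ``add two segments'' trick (prepend $A'A$, append $BB'$) to transport a bounded-length polygonal realization from $(A,B)$ to any nearby $(A',B')$. The only difference is packaging---the paper argues by contradiction via sequential compactness (extract a convergent subsequence from a putative sequence of pairs with unbounded graph distance and bound it by $d(\lambda(A),\lambda(B))+2$), while you run the equivalent open-cover argument directly; your final paragraph on multiplicity is caution the paper does not take, and it is in any case unnecessary since a repeated eigenvalue of $A$ is a \emph{single} point of $ES(S)$, so the disc from \cref{ForcedPair} already determines $\lambda(A)$ uniquely.
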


\begin{proof}
We argue by contradiction. We define
$$
d(\lambda(A),\lambda(B)):V(G(\mathcal{K}^\ell))\times V(G(\mathcal{K}^\ell)) \to \mathbb{Z}
$$
to be the maximum path length between two vertices in $G(\mathcal{K}^\ell)$ and we call $S:= Co(A_1,\dots,A_k)$. By \cref{GraphDist}, $d(\lambda(A),\lambda(B)) = c < \infty$ for any eigenvalues in $\mathcal{K}^\ell.$ If $G(\mathcal{K}^\ell)$ were to have infinite diameter, then there must be a sequence of matrices $(A_n,B_n)\in S\times S$ where
$$
\lim_{n\to\infty}d(\lambda(A_n),\lambda(B_n)) = \infty.
$$
Now, we can take a convergent subsequence of these tuples, where
\begin{gather*}
\lim_{n_i\to\infty}(A_{n_i},B_{n_i}) =: (A,B)\in S\times S,\\ \lim_{n_i\to\infty}(\lambda(A_{n_i}),\lambda(B_{n_i})) =: (\lambda(A),\lambda(B)).
\end{gather*}
 By \cref{ForcedPair}, for $n_i$ sufficiently large, the line segment from $A_{n_i}$ to $A$ and from $B_{n_i}$ to $B$ will induce pairing between $\lambda(A_{n_i})$ and $\lambda(A)$, and $\lambda(B_{n_i})$ and $\lambda(B)$. We have then that
 $$
 d(\lambda(A_{n_i}), \lambda(B_{n_i})) \leq d(\lambda(A), \lambda(B)) + 2
 $$
for all $n_i$ large enough. However, we also have that $d(\lambda(A_{n_i}), \lambda(B_{n_i}))$ is arbitrarily large for $n_i$ large enough, giving us a contradiction.
\end{proof}

We define the principle graph of $Co(A_1,\dots,A_k)$ to be the graph generated only by considering representative matrices in $\mathcal{U}$ and the matrices $A_1,\dots,A_k$. This object provides information on the overall structure of its eigen-surface, highlighting regions of high transitivity, and how these regions are joined together as a partition of \\$Co(A_1,\dots,A_k).$  

\begin{figure}[hbt!]
    \centering
    \includegraphics[scale=.2]{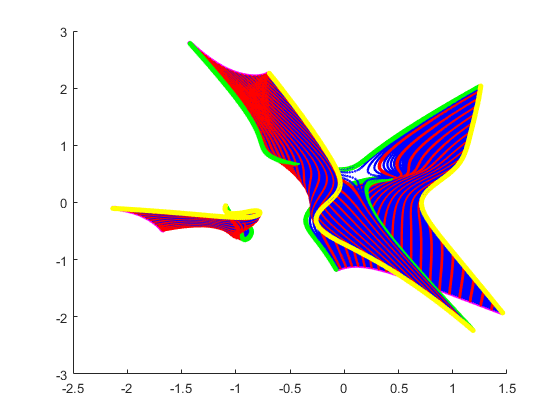}
    \includegraphics[scale=.2]{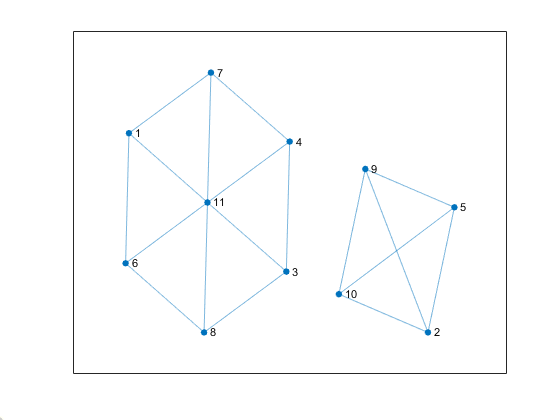}
    \includegraphics[scale=.2]{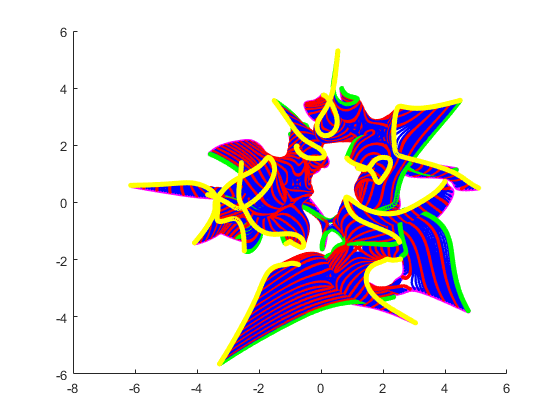}
    \includegraphics[scale=.2]{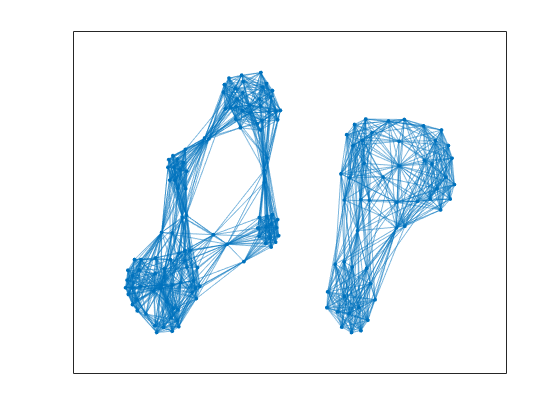}
    \caption{The principle graphs of two eigen-surfaces, each one corresponding to the convex hull of 3 matrices. The first two images show a 2-component and a 1-component easily identified as a wheel graph and a $K_4$ (a complete graph on 4 vertices), while the second two images are much more complex, and the graph only shows the general structural properties.}
\end{figure}

\section{Examples}

\subsection{Hermitian Matrices}

Hermitian matrices have real eigenvalues, and we can use this fact to show that all Hermitian matrix closed paths are weakly transitive, that is, for all $i\in [n]$, we can take paths $\lambda_i(A(x))$ where $\lambda_i(A(0)) = \lambda_i(A(1))$. This proof also holds for any other path connected set where the matrices have real eigenvalues.

\begin{lemma}
Let $S\subseteq M_n$ be a path connected set. If $S$ is such that all $A\in S$, $A$ is Hermitian $\Longrightarrow$ $S$ is weakly transitive.
\end{lemma}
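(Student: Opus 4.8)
The plan is to lean on the two special features of Hermitian matrices: their eigenvalues are real, and the $i$-th largest eigenvalue is a (Lipschitz) continuous function of the matrix. Fix an arbitrary closed path $A(x):[0,1]\to S$, so $A(0)=A(1)$, and for each $x$ let $\mu_1(x)\leq \mu_2(x)\leq \dots \leq \mu_n(x)$ be the eigenvalues of the Hermitian matrix $A(x)$ listed in increasing order with multiplicity.

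First I would verify that each $\mu_i:[0,1]\to\RR$ is continuous. This is immediate from Weyl's perturbation inequality for Hermitian matrices, which gives $|\mu_i(x)-\mu_i(y)|\leq \norm{A(x)-A(y)}$ for all $x,y$; since $x\mapsto A(x)$ is continuous, so is each $\mu_i$ (one could instead invoke the Courant--Fischer min--max characterization directly). Next I would point out that $\mu_1,\dots,\mu_n$ constitute a legitimate choice of eigenpaths for $A(x)$ in the sense used throughout the paper: each is continuous on $[0,1]$, satisfies $\mu_i(x)\in\sigma(A(x))$ for every $x$, and the multiset $\{\mu_1(x),\dots,\mu_n(x)\}$ equals $\sigma(A(x))$ counted with multiplicity. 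In effect, sorting the real spectrum is precisely the rule that ``bounces'' rather than ``crosses'' at every coincidence of eigenpaths, so continuity is never lost and these are among the admissible continuous selections guaranteed by \cite{Kato}.

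Finally, since the path is closed, $A(0)=A(1)$ forces $\sigma(A(0))=\sigma(A(1))$ as multisets of reals, and listing both in increasing order yields $\mu_i(0)=\mu_i(1)$ for every $i\in[n]$. Hence each $\mu_i$ is a closed eigenpath, which is exactly the statement that $S$ is weakly transitive; the argument used nothing about $S$ beyond path-connectedness and the reality of all spectra, so it applies verbatim to any path-connected set of matrices with real eigenvalues. The only substantive ingredient is the continuity of the sorted eigenvalues, and I expect that to be the sole real obstacle, but it is a classical fact (Weyl/Lidskii), so the proof is short; the one point to state carefully is that the sorted selection is indeed a valid instance of the continuous eigenpaths the rest of the paper relies on.
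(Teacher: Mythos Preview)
Your proof is correct and takes a genuinely different route from the paper's. You exploit the fact that the \emph{sorted} spectrum of a Hermitian matrix is itself a continuous (indeed Lipschitz, by Weyl) function of the matrix, so the ordered selection $\mu_1(x)\leq\cdots\leq\mu_n(x)$ directly furnishes $n$ eigenpaths that automatically close up when $A(0)=A(1)$. The paper instead starts from an \emph{arbitrary} continuous selection of eigenpaths (as provided by Kato) and argues a posteriori: if some path fails to return to its starting value, then two of these real-valued paths have swapped their relative order between $x=0$ and $x=1$, so by the intermediate value theorem they must cross somewhere, and one splices the two paths at that crossing to manufacture a selection that does close. Your argument is shorter and cleaner; the paper's IVT argument has the mild advantage that it uses only the realness of the eigenpaths and nothing Hermitian-specific, so it transparently covers any path-connected set with real spectra, whereas your appeals to Weyl and Courant--Fischer are Hermitian-only. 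To make your final sentence (extending to arbitrary real-spectrum matrices) fully justified, you would need to note separately that the sorted eigenvalues remain continuous in that generality---this follows from continuity of the unordered spectrum together with continuity of the sorting map on real multisets, but it is no longer Weyl.
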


\begin{proof}
Suppose that $\lambda_i(A)\rightsquigarrow\lambda_j(A)$ and $\lambda_k(A)$\\$\rightsquigarrow\lambda_i(A)$, $i\geq j, k$, with respect to a Hermitian matrix path. We see then that we can take continuous real functions\footnote{here we consider the eigenpaths as real functions} $\lambda_i(A(x)),\ \lambda_j(A(x)),$ where $\lambda_i(A)= \lambda_i(A(0)) \geq \lambda_j(A(0)) = \lambda_k(A),$ and $\lambda_j(A)$\\$ = \lambda_i(A(1)) \leq \lambda_j(A(1)) = \lambda_i(A)$. We see then by the intermediate value theorem that these paths must intersect, and so we can construct the desired path. We can argue similarly when considering the reverse inequalities.
\end{proof}

We further note that as the convex hull of Hermitian matrices is everywhere diagonalizable, no matrices in $\mathcal{U}$ will be locally non-transitive, so generically its eigen-surface fails to be a topological manifold if it does not have uniform multiplicity. These results also hold when considering the singular values of a convex hull, as these are the eigenvalues of Hermitian matrices.

\subsection{Nonnegative Irreducible Matrices}
Suppose $A_1, \ldots, A_k \in M_n$ are primitive, irreducible matrices. Then any convex combination thereof is also primitive irreducible. Thus, by the Perron-Frobenius theory, every element of $Co(A_1, \dots, A_k)$ has a Perron root, a real positive eigenvalue of highest magnitude amongst all eigenvalues, that is of multiplicity $1$ \cite{Perron}. As the spectrum is continuous, we see then that the Perron roots of matrices in $Co(A_1, \ldots, A_k)$ must all be contained in a single $1-$component.

\begin{figure}[hbt!]
    \centering
    \includegraphics[scale=.4]{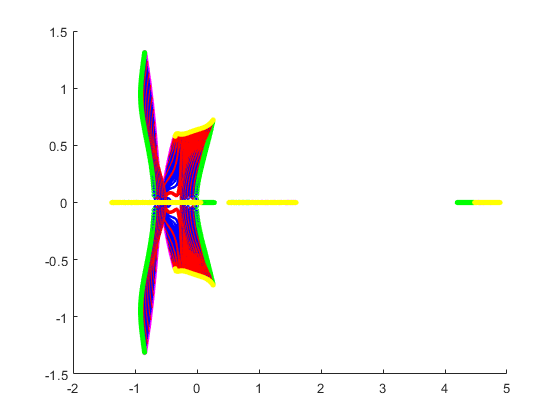}
    \caption{The eigen-surface of the convex hull of three nonnegative irreducible matrices.}
\end{figure}

\subsection{Brauer Perturbations and\\ PageRank}
A well-known result on eigenvalue perturbation by Brauer states that for a matrix $A$ with eigenvalues $\lambda_1, \ldots, \lambda_n$, corresponding eigenvector $x$ of $\lambda_i$, and any choice of vector $v \in \CC^n$, $A + xv^*$ has eigenvalues $\lambda_1 + v^*x, \lambda_2, \ldots, \lambda_n$ \cite{PageRank}. 

Thus $\alpha \lambda_1 + (1-\alpha) v^*x$ for $\alpha \in [0,1]$ is a continuously varying eigenvalue present in $Co(A,xv^*)$. Thus we see that there is a 1-component containing $\lambda_1$ and $v^*x$. Further, there is an $(n-1)-$component containing $0$ (as an eigenvalue of $xv^*$ of multiplicity $n-1$) and $\alpha \lambda_{2}, \ldots, \alpha \lambda_n$ for $\alpha \in [0,1]$.

As in standard PageRank, let $S \in M_n$ be a row stochastic matrix, let $e \in \RR^n$ be the all ones vector, and let $v \in \RR^n$ be any probability vector. $S$ has $1$ as an eigenvalue with eigenvector $e$---label the eigenvalues of $S$ as $1, \lambda_2, \ldots, \lambda_n$. The google matrix $G$ is some choice of convex combination $\alpha S + (1-\alpha)ev^T$. Since $e$ is an eigenvalue of $S$, each of these choices of $\alpha$ is a Brauer perturbation of $\alpha S$, and $v^T e = 1$, so the eigenvalues of $\alpha S + (1-\alpha)ev^T$ are $1, \alpha \lambda_2, \ldots, \alpha \lambda_n$.

In particular, the spectra of $Co(S, ev^T)$, which contains all possible choices of $G$, is contained in a $1-$component, and the rest of the eigenvalues $\alpha \lambda_i$ for $i = 2, \ldots, n$ and $\alpha \in [0,1]$ are in an $(n-1)-$component. The pairing structure between eigenvalues is notably clear here, as each eigenvalue of $S$ pairs with $1$ or $0$.

\subsection{Tri-Diagonal Toeplitz}
Suppose $A_1, \ldots, A_k$ are tri-diagonal toeplitz matrices. Then all convex combinations $\sum_i \alpha_i A_i$ are tri-diagonal toeplitz as well, with diagonal entries given by $\sum_i \alpha_i a_i$, where $a_i$ is the diagonal element of $A_i$, and so on. Likewise, if $b_i$ are the super-diagonal elements of $A_i$ and $c_i$ are the sub-diagonal elements of $A_i$, then 
\[\sum_i \alpha_i A_i = \begin{bmatrix}
\sum_i \alpha_i a_i & \sum_i \alpha_i b_i & & & \\
\sum_i \alpha_i c_i & \sum_i \alpha_i a_i & \sum_i \alpha_i b_i &  & \\
 & \ddots & \ddots & & \\
& & \sum_i \alpha_i c_i & \sum_i \alpha_i a_i
\end{bmatrix}\]
and, using the known formula \cite{Toep} for eigenvalues of tri-diagonal toeplitz matrices, has eigenvalues
\[\lambda_k = \sum_i \alpha_i a_i + 2\sqrt{(\sum_i \alpha_i b_i)(\sum_i \alpha_i c_i)} \cos(\frac{\pi k}{n+1})\].

\begin{figure}[hbt!]
    \centering
    \includegraphics[scale=.4]{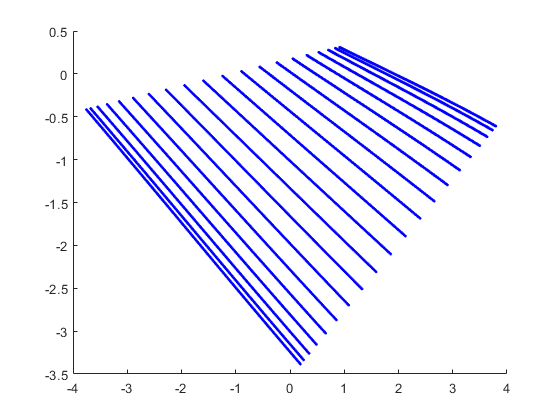}
    \caption{The pairings between two Tri-Diagonal Toeplitz matrices}
\end{figure}

It follows then that these sets are weakly transitive, and if the off-diagonal matrices are never 0 in the convex hull, we further see that the eigen-surface is comprised of $n$ 1-components.

\subsection{Shared Eigenvectors}a

Suppose $A_1, \ldots, A_k$ are a family of matrices in which all eigenvectors are shared. We can take them to be circulant matrices \cite{Circ}, commuting matrices \cite{MatAnal}, diagonal matrices, etc. As we can simultaneously triangularize all matrices in \\$Co(A_1,\dots, A_k)$, it is clear that these eigenvalues are weakly transitive as they are well-defined continuous functions of the convex coefficients.
\begin{figure}[hbt!]
    \centering
    \includegraphics[scale=.4]{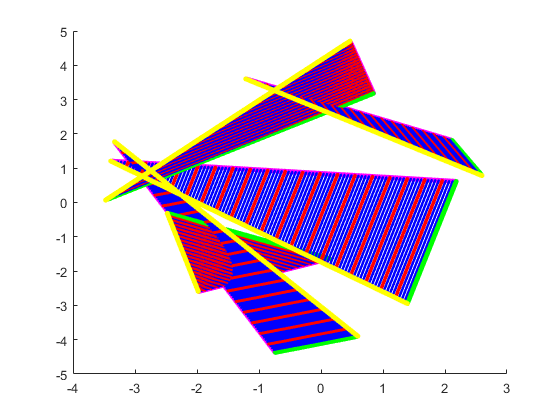}
    \caption{The eigen-surface of the convex hull of three circulant matrices.}
\end{figure}
The criteria then for these sets to be transitive is to verify that these continuous eigenvalue functions never intersect.
\subsection{Dimensional Relationships}

Here we only consider pairings that occur in the line-segments between matrices.

\begin{lemma}
        If $\lambda(T)\rightsquigarrow\lambda(A+B)$, $\lambda(A+B)\rightsquigarrow \lambda(A),\lambda(B)$, then there exist continuous paths $p_A(t),p_B(t)$ in $ES(Co(T,A,B))$, such that $p_A(0) = p_B(0) = \lambda(T)$, $p_A(1) = \lambda(A),p_B(1)=\lambda(B).$
\end{lemma}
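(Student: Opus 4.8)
The plan is to produce $p_A$ and $p_B$ by splicing together, at the common waypoint $\lambda(A+B)$, the three eigenpaths supplied by the hypotheses. Write $M$ for the matrix $A+B$, viewed as a point of $Co(T,A,B)$ after the harmless rescaling that replaces it by the midpoint $\tfrac12(A+B)$ (and correspondingly rescales the eigenvalue coordinate); this makes the line segments $[T,M]$, $[M,A]$, and $[M,B]$ all lie inside $Co(T,A,B)$. By the first hypothesis there is an eigenpath $q\colon[0,1]\to ES(Co(T,A,B))$ carried by $[T,M]$ with $q(0)=\lambda(T)$ and $q(1)=\lambda(A+B)$; by the second there are eigenpaths $r_A$ and $r_B$ carried by $[M,A]$ and $[M,B]$ respectively, with $r_A(0)=r_B(0)=\lambda(A+B)$, $r_A(1)=\lambda(A)$, and $r_B(1)=\lambda(B)$.

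First I would set $p_A$ to be $q$ followed by $r_A$ and $p_B$ to be $q$ followed by $r_B$, each reparametrized to $[0,1]$. Continuity is immediate since $q$ and $r_A$ (resp. $r_B$) agree at the join value $\lambda(A+B)$. Next I would verify that each spliced curve is genuinely an eigenpath inside $ES(Co(T,A,B))$: its underlying matrix path is the polygonal path $[T,M]$ then $[M,A]$ (resp. $[M,B]$), which lies in the convex hull, and at every parameter the curve records an eigenvalue of the matrix there; continuity of the spectrum (as used throughout, cf. \cite{Kato}) then identifies the spliced curve with one of the continuous eigenvalue branches along that polygonal path, so it lies in $ES(Co(T,A,B))$. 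Finally I would read off $p_A(0)=q(0)=\lambda(T)=p_B(0)$, $p_A(1)=\lambda(A)$, and $p_B(1)=\lambda(B)$, which is the assertion.

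The mathematical content here is therefore nothing more than the fact that a concatenation of eigenpaths meeting at a common point is again an eigenpath, which follows from continuity of the spectrum with no appeal to homotopy or transitivity. The only point that genuinely needs care --- and the one I expect to be the main obstacle to writing the argument cleanly --- is the bookkeeping around $A+B$: one must fix a normalization under which $A+B$ and the three segments joining it to $T$, $A$, $B$ all sit in $Co(T,A,B)$, and then check that the accompanying linear rescaling of the eigenvalue coordinate does not disturb path-connectedness in $ES$, so that the spliced curves really are paths in $ES(Co(T,A,B))$ terminating at the prescribed $\lambda(A)$ and $\lambda(B)$.
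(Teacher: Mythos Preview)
Your proposal is correct and follows essentially the same route as the paper's proof: replace $A+B$ by the midpoint $\tfrac12(A+B)\in Co(T,A,B)$, observe that the hypothesised pairings survive this positive rescaling, and then concatenate the resulting eigenpaths at the common waypoint. You even flag the one genuine bookkeeping issue---checking that the rescaling carries the given $\rightsquigarrow$ relations into $ES(Co(T,A,B))$---which the paper dispatches with the remark that ``pairings are unchanged by positive scaling.''
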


\begin{proof}
By hypothesis, $\lambda(T)\rightsquigarrow \lambda(A+B)$, and a simple algebraic computation reveals that pairings are unchanged by positive scaling, so we have that $$\lambda(T)\rightsquigarrow \lambda(1/2(A+B))\in ES(Co(T, A, B)).$$ We also have, by the same way, that $\lambda(1/2(A+B))\rightsquigarrow\lambda(A)$. Thus if we take the function

$$p_A(t):=
\begin{cases}
\lambda(T(2t)),& t\in[0,1/2]\\
\lambda(1/2(A+B)(2t),& t\in[1/2,1]\\
\end{cases}
$$
where $T(x)$ is the line-segment from $T$ to $1/2(A+B)$ and $1/2(A+B)(x)$ is the line-segment from $1/2(A+B)$ to $A$, we get the desired function ($p_B(t)$ is defined similarly).
\end{proof}

\begin{figure}[hbt!]
    \centering
    \includegraphics[scale=.4]{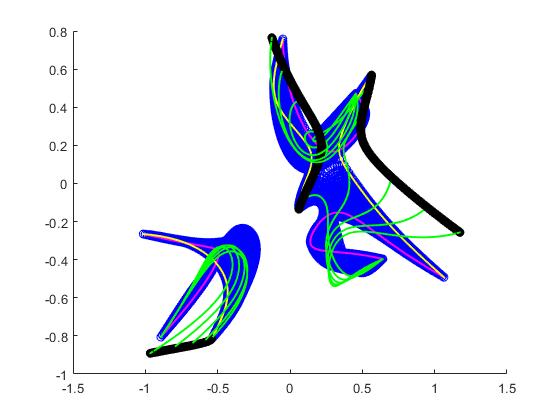}
    \caption{The eigenvalues of $Co(A,B,C)$, overlayed with the pairings by line-segment of $A$ and $A+xB$ as $x$ varies.}
\end{figure}

 We also see that this result generalizes As we can take sums of $k$ matrices $A_1+\dots + A_k$ with $T$, and thus obtain a relationship between eigenvalues of $T$ and $A_1,\dots,A_k$ subject to paths in $Co(A_1,\dots,A_k).$
 
 \section{Open Problems}
 
 \begin{itemize}
     \item Efficient methods to compute eigenvalue pairings and higher multiplicity eigenvalues in the convex hull of arbitrary matrices.
     
     \item Partitioning of the convex hull of arbitrary matrices into finitely many weakly transitive, simply connected, sets.
     
     \item Characterization of the graphs generated by the eigen-surface of the convex hull of arbitrary matrices.
     
     \item Detailed topological characterization of\\ $ES(DS_n)$.

     \item From our results, it follows that if any closed polygonal-path generating the boundary of the convex hull are not transitve, then the set must contain matrices with higher multiplicity eigenvalues. Thus a study of the probability of transitivity in these paths for random matrices, such as those drawn from the Gaussian measure, would reveal important general characterization of the eigen-surface.
 \end{itemize}

\begin{center}
    \textbf{Acknowledgements}
\end{center}
    This project was greatly advanced by the insightful collaboration of fellow REU participants Derek Lim, Eric Jankowski and Amit Harlev, as well as the generous support provided by NSF grant 1757603.

\end{document}